\newtheorem{theorem}{Theorem}
\newtheorem{introtheorem}{Theorem}
\newtheorem{lemma}[theorem]{Lemma}
\newtheorem{corollary}[theorem]{Corollary}
\newtheorem{introcorollary}[introtheorem]{Corollary}
\newtheorem{introconjecture}[introtheorem]{Conjecture}
\newtheorem{proposition}[theorem]{Proposition}
\newtheorem*{problem*}{Problem}
\newtheorem{definition}[theorem]{Definition}
\newtheorem{conjecture}[theorem]{Conjecture}
\newtheorem*{assumption*}{Assumption}
\newtheorem{example}[theorem]{Example}
\theoremstyle{definition}
\newtheorem{remark}[theorem]{Remark}
\newcommand\up[1]{^{(#1)}}
\newcommand{\mc}{\mathcal}
\DeclareMathOperator{\id}{id}
\renewcommand{\H}{\mc H}
\newcommand{\A}{\mc A}
\newcommand{\K}{\mc K}
\newcommand{\B}{\mc B}
\newcommand{\J}{\mc J}
\newcommand{\ox}{\otimes}
\newcommand{\states}{\mathfrak S}
\newcommand{\T}{\mc T}
\newcommand{\NN}{\mathbb N}
\newcommand{\RR}{\mathbb R}
\newcommand{\DD}{\mathbb D}
\newcommand\M{\mc M}
\newcommand{\CC}{\mathbb{C}}
\newcommand{\tr}{{\mathrm{tr}}}
\newcommand\1{\mathbf1}
\newcommand{\ip}[2]{\langle #1,#2\rangle}
\newcommand{\ketbra}[2]{|#1\rangle\langle#2|}
\newcommand{\ket}[1]{|#1\rangle}
\newcommand{\kettbra}[1]{\ketbra{#1}{#1}}
\newcommand{\bra}[1]{\langle#1|}
\newcommand{\proj}[1]{\ketbra{#1}{#1}}
\newcommand{\oo}{\infty}
\newcommand{\placeholder}{\,\cdot\,}
\newcommand{\norm}[1]{\left\| #1 \right\|}
\newcommand{\supp}{\mathrm{supp}}
\newcommand\hide[1]{{}}
\renewcommand{\Re}{\mathrm{Re}}
\newcommand{\petz}[1]{D_{#1}^{\textup{P}}}
\newcommand{\sand}[1]{D_{#1}^{\min}}
\newcommand{\geom}[1]{D_{#1}^{\max}}
\newcommand{\meas}[1]{D_{#1}^{\textup{\textup{meas}}}}
\newcommand\undermat[2]{%
  \makebox[0pt][l]{$\smash{\underbrace{\phantom{%
    \begin{matrix}#2\end{matrix}}}_{\text{$#1$}}}$}#2}
\title{Sufficiency of R\'enyi divergences}
\author{Niklas Galke$^1$\footnote{niklas.galke@uab.cat} , Lauritz van Luijk$^2$\footnote{lauritz.vanluijk@itp.uni-hannover.de} , Henrik Wilming$^2$\footnote{henrik.wilming@itp.uni-hannover.de}}
\date{%
    $^1$Física Teòrica: Informació i Fenòmens Quàntics, Departament de Física, Universitat
    Autònoma de Barcelona, 08193 Bellaterra, Spain\\%
    $^2$Leibniz Universit\"at Hannover, Appelstraße 2, 30167 Hannover, Germany\\[2ex]%
    \today
}
\begin{document}

\maketitle
\vspace{-20pt}
\begin{abstract}
A set of classical or quantum states is equivalent to another one if there exists a pair of classical or quantum channels mapping either set to the other one. 
For dichotomies (pairs of states), this is closely connected to (classical or quantum) R\'enyi divergences (RD) and the data-processing inequality: If a RD remains unchanged when a channel is applied to the dichotomy, then there is a recovery channel mapping the image back to the initial dichotomy.
Here, we prove for classical dichotomies that equality of the RDs alone is already sufficient for the existence of a channel in any of the two directions and discuss some applications.  
In the quantum case, all families of quantum RDs are seen to be insufficient because they cannot detect anti-unitary transformations. Thus, including anti-unitaries, we pose the problem of finding a sufficient family. It is shown that the Petz and maximal quantum RD are still insufficient in this more general sense and we provide evidence for sufficiency of the minimal quantum RD.
As a side result of our techniques, we obtain an infinite list of inequalities fulfilled by the classical, the Petz quantum, and the maximal quantum RDs. These inequalities are not true for the minimal quantum RDs. 
Our results further imply that any sufficient set of conditions for state transitions in the resource theory of athermality must be able to detect time-reversal.
\end{abstract}

\tableofcontents

\section{Introduction}

A statistical experiment is generally described by an indexed set of probability measures or quantum states $(\rho_\theta)_{\theta \in \Theta}$ (we use the notation for density matrices throughout this introduction and use the word state in both cases). 
The parameter $\theta$ can be discrete or continuous and may be thought of as a parameter that controls the experiment and is sought to be estimated. 
For example, in the special case of \emph{binary} experiments with $\Theta=\{0,1\}$ it may be thought of as labeling two hypotheses to be distinguished. 
One experiment $(\rho_\theta\up1)_{\theta\in\Theta}$ can be considered as at least as informative as another one $(\rho_\theta\up2)_{\theta\in\Theta}$ if there exists a (classical or quantum) channel $T$ so that $T(\rho_\theta\up 1) = \rho_\theta\up 2$. In this case we write $(\rho_\theta\up1)_{\theta\in\Theta} \rightarrow (\rho_\theta\up2)_{\theta\in\Theta}$.
The channel $T$ may be viewed as a randomization or coarse-graining operation. 
It is a classic problem in statistics to provide necessary and sufficient conditions that characterize when one experiment is more informative than another one \cite{torgersen_comparison_1991}.

In this paper, we consider a much simpler problem: When is it the case that two experiments are \emph{equivalent}, meaning that either experiment is at least as informative as the other? 
In other words, we ask when there exist (classical or quantum) channels $T$ and $R$ such that\footnote{In the classical case, a channel is a stochastic map. In the quantum case, it is a completely positive trace-preserving map.}
\begin{align}
    T(\rho_\theta\up 1) = \rho_\theta\up 2,\quad R(\rho_\theta\up 2)=\rho_\theta\up 1\quad\forall \theta \in \Theta.
\end{align}

If this is the case, we will also say that the two sets are \emph{interconvertible} and write
\begin{align}
    (\rho_\theta\up1)_{\theta\in\Theta} \longleftrightarrow(\rho_\theta\up2)_{\theta\in\Theta}.
\end{align}
In the language of statistics, interconvertible sets of states describe experiments that are \emph{sufficient} with respect to each other \cite{torgersen_comparison_1991,petz_sufficient_1986,petz_sufficiency_1988,Jencova2006,PetzReview}. Our aim is to understand the structure of such states and relate  interconvertibility of binary experiments with equality of R\'enyi divergences. We will see that in the classical case, the R\'enyi divergences precisely contain the information that is needed to determine interconvertibility and conjecture that the same holds for suitable quantum R\'enyi divergences.

We first consider the case of density matrices $\rho_\theta\up j$ (finite or infinite-dimensional) and discuss a minimal normal form for two interconvertible sets of density matrices and the associated quantum channels $T,R$ (Section~\ref{sec:structure}). 
The normal form is based on the Koashi-Imoto theorem \cite{KoashiImoto} and its infinite-dimensional generalizations \cite{Jencova2006,Kuramochi2018}.
From the normal form it follows that (\cref{thm:noteleportation}):
\begin{enumerate}
    \item Non-commuting sets of density matrices cannot be interconverted with commuting sets of density matrices, a result closely connected to the "no-teleportation theorem". 
    \item The set of fixed points of any entanglement-breaking channel is commutative.
\end{enumerate}

Having established the normal form for interconvertible density matrices, we then specialize to the case of binary experiments in \cref{sec:dichotomies}.
This case simply corresponds to pairs of states also known as \emph{dichotomies} $(\rho,\sigma)$. We also say that a dichotomy $(\rho_1,\sigma_1)$ \emph{is  convertible to} or that it \emph{(quantum) relatively majorizes} $(\rho_2,\sigma_2)$ if $(\rho_1,\sigma_1) \rightarrow (\rho_2,\sigma_2)$ \cite{hardy1952inequalities,blackwell_equivalent_1953,ruch_principle_1976,ruch_mixing_1978,marshall_inequalities_2011,horodecki_fundamental_2013,horodecki_quantumness_2013,renes_relative_2016,gour_quantum_2018,wang_resource_2019,buscemi_information-theoretic_2019}. (To ease notation, we index different dichotomies by subscripts instead of superscripts.)
The study of dichotomies and their convertibility is closely related to the study of \emph{divergences} of states, which are functions on dichotomies that characterize how distinguishable the two states are. 
The central properties of a general divergence $\DD$ are:
\begin{enumerate}
    \item Positivity: $\DD(\rho,\sigma)\geq 0$ with equality if and only if $\rho=\sigma$,
    \item Data-processing: $\DD\big(T(\rho),T(\sigma)\big)\leq \DD(\rho,\sigma)$.
\end{enumerate}
Many divergences, in particular the R\'enyi divergences we will see later, also satisfy the additional property of being additive over tensor-products:
\begin{align}
\DD(\rho_1\otimes\rho_2,\sigma_1\otimes\sigma_2) = \DD(\rho_1,\sigma_1)+\DD(\rho_2,\sigma_2).
\end{align}
The prototypical example  of an additive divergence is the quantum relative entropy
\begin{align}
    D(\rho,\sigma) = \tr\big[\rho\big(\log(\rho)-\log(\sigma)\big)\big],
\end{align}
with $D(\rho,\sigma) = \infty$ if the support of $\rho$ is not contained in that of $\sigma$.
However, divergences come in many flavors, and we will see a variety below. 
One way to give operational meaning to them is in asymptotic settings, where one asks for the maximum rate $R=m/n$ to convert $(\rho_1,\sigma_1)^{\otimes n}$ into $(\rho_2,\sigma_2)^{\otimes m}$ (possibly up to a small error). 
It follows immediately from these properties that if $(\rho_1,\sigma_1)^{\otimes n}$ can be converted into $(\rho_2,\sigma_2)^{\otimes n}$, then $R\leq \DD(\rho_1,\sigma_1)/\DD(\rho_2,\sigma_2)$ for any valid additive divergence. 
In fact asymptotic conversion rates can often be expressed exactly as ratios of suitable divergences of the dichotomies \cite{matsumoto_reverse_2010,wang_resource_2019,buscemi_information-theoretic_2019,farooq_asymptotic_2023,lipka-bartosik_quantum_2023}. 

Here, we study the role of divergences for the interconvertibility of dichotomies without taking an asymptotic limit. 
It is clear from the data-processing inequality that any valid divergence must be invariant among interconvertible dichotomies:
\begin{align}\label{eq:interconvertibiblity-to-divergence}
    (\rho_1,\sigma_1) \longleftrightarrow (\rho_2,\sigma_2) \ \Rightarrow\ \DD(\rho_1,\sigma_1)=\DD(\rho_2,\sigma_2).
\end{align}
Moreover, it was shown in \cite{petz_sufficient_1986,petz_sufficiency_1988,Jencova2006} that if $(\rho_1,\sigma_1)\rightarrow (\rho_2,\sigma_2)$ \emph{and} it is true that $D(\rho_1,\sigma_1) = D(\rho_2,\sigma_2)$, then the two dichotomies are automatically also interconvertible. 
The same statement holds true for classical probability measures when $D$ is replaced by the Kullback-Leibler (KL) divergence \cite{kullback1951information}. It also holds when the relative entropy (or KL divergence) is replaced by a suitable \emph{$f$-divergence} \cite{hiai} (which we discuss in detail in \cref{sec:fdiv}).

But what happens if we do not know whether one of two divergences is convertible to the other one? Could it be true that equality of a family of divergences alone already implies that two dichotomies are interconvertible?
Let $\{\DD_\lambda\}_{\lambda \in \Lambda}$ be a family of divergences labelled by a parameter $\lambda$. 
We define such a family to be \emph{sufficient} if
\begin{align}
    \DD_\lambda(\rho_1,\sigma_1) = \DD_\lambda(\rho_2,\sigma_2)\ \ \forall \lambda\in\Lambda\  \Rightarrow\ (\rho_1,\sigma_1) \longleftrightarrow (\rho_2,\sigma_2). 
\end{align}
Thus equality of a sufficient family of divergence is equivalent to interconvertibility.
It is at first not clear that a sufficient family of divergences, in fact, exists, neither in the classical nor in the quantum case.
As a first result, we prove that the classical R\'enyi divergences $D_\alpha$ constitute a sufficient family of divergences (Section~\ref{sec:classical}):

\begin{introtheorem}[Sufficiency of R\'enyi divergences]\label{thm:classical_solution_intro}
Let $(X_i,\mu_i)$, $i=1,2$, be measure spaces and $p_i\ll q_i$ probability measures in $L^1(X_i,\mu_i)$.
    If there is an open interval $(a,b)\subset\RR$, $a<b$,  such that
    \begin{equation}
        D_\alpha(p_1,q_1) = D_\alpha(p_2,q_2) <\oo \quad\forall\alpha\in(a,b),
    \end{equation}
    there are stochastic maps between $L^1(X_1,\mu_1)$ and $L^1(X_2,\mu_2)$ interconverting the dichotomies $(p_1,q_1)$ and $(p_2,q_2)$.
\end{introtheorem}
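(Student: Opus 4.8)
The plan is to boil the statement down to a uniqueness statement for a "fractional moment problem". To a dichotomy $(p,q)$ with $p\ll q$ associate the probability measure $\nu:=(dp/dq)_*q$ on $[0,\infty)$, the law of the likelihood ratio under $q$; note $\int t\,\nu(dt)=1$ since $p$ is a probability measure. The first step is a normal form for classical dichotomies (the binary, classical counterpart of the Koashi--Imoto-type structure in \cref{sec:structure}--\cref{sec:dichotomies}): $(p,q)$ is interconvertible with the \emph{canonical dichotomy} $\big(t\,\nu(dt),\nu(dt)\big)$ on $[0,\infty)$. Indeed, the measurable map $x\mapsto(dp/dq)(x)$ pushes $q$ to $\nu$ and $p$ to $t\,\nu(dt)$, giving one channel; conversely, the disintegration $q=\int q_t\,\nu(dt)$ along the likelihood ratio yields a Markov kernel $t\mapsto q_t$ with $q_t$ concentrated on $\{dp/dq=t\}$, and since $dp/dq\equiv t$ there it maps $\nu\mapsto q$ and $t\,\nu(dt)\mapsto\int t\,q_t\,\nu(dt)=p$. (At the level of $L^1$-densities one first restricts to $\{q_i>0\}$ so that these maps are stochastic in the required sense.) Consequently it suffices to prove $\nu_1=\nu_2$.

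Next I would translate the hypothesis. With $M_i(\alpha):=\int(dp_i/dq_i)^\alpha\,dq_i=\int_{(0,\infty)}t^\alpha\,\nu_i(dt)$ one has $D_\alpha(p_i,q_i)=\tfrac1{\alpha-1}\log M_i(\alpha)$ for $\alpha\neq1$, and $M_i(\alpha)\in(0,\infty)$ whenever $D_\alpha(p_i,q_i)$ is finite (positivity because $dp_i/dq_i$ is not $q_i$-a.e.\ zero). Hence the hypothesis forces $M_1(\alpha)=M_2(\alpha)$ for all $\alpha\in(a,b)\setminus\{1\}$, and then on all of $(a,b)$ by continuity. The crux is now a complex-analytic lemma: the function $z\mapsto\int_{(0,\infty)}t^z\,\nu_i(dt)$ is holomorphic on the strip $S=\{z:a<\Re z<b\}$ — the real segment bounds it absolutely on $S$ and one differentiates under the integral — so by the identity theorem $M_1\equiv M_2$ on $S$. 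Fixing $c\in(a,b)$ and writing $z=c+i\tau$, the value $M_i(c+i\tau)=\int_{(0,\infty)}t^{c}e^{i\tau\log t}\,\nu_i(dt)$ is the Fourier transform of the finite positive measure $(\log)_*\big(t^{c}\,\nu_i(dt)\big)$ on $\RR$; equality of these Fourier transforms gives $t^{c}\nu_1(dt)=t^{c}\nu_2(dt)$ on $(0,\infty)$, hence $\nu_1=\nu_2$ on $(0,\infty)$, and the atoms at $0$ agree since $\nu_i(\{0\})=1-\nu_i((0,\infty))$. Thus $\nu_1=\nu_2$, and the reduction finishes the argument.

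The analytic core is short; the main obstacle is the measure-theoretic bookkeeping in the reduction, above all producing the recovery channel as a genuine stochastic map between the prescribed $L^1$-spaces. This needs a disintegration of $q_i$ along the likelihood ratio — automatic for standard Borel spaces, but requiring an extra regularity hypothesis or a reduction to the countably generated case for arbitrary $(X_i,\mu_i)$ — together with a check that push-forward and disintegration respect absolute continuity with respect to the chosen reference measures. Minor edge cases to dispatch en route: the value $\alpha=1$ (removable singularity of $D_\alpha$, equal to the Kullback--Leibler divergence); the possibility that $(a,b)$ contains $\alpha\le0$ (then finiteness of $D_\alpha$ forces $q_i\ll p_i$, so $\nu_i$ has no atom at $0$ and the same argument applies verbatim); and using the normalisation $\int t\,\nu_i(dt)=1$ to pin down the mass at $0$. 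It is worth stressing where the openness of the interval is used: a single $\alpha$ yields only the scalar identity $M_1(\alpha)=M_2(\alpha)$, whereas an interval's worth of equalities propagates by analyticity to the whole strip, which is exactly what Fourier uniqueness needs.
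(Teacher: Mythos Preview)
Your approach is essentially the paper's: reduce to a canonical dichotomy via the push-forward along the likelihood ratio, then use analyticity of the moment function on a strip together with Fourier uniqueness to conclude that the canonical forms coincide. The paper parametrizes by $f=-\log(dp/dq)$ rather than $dp/dq$ itself, so your Mellin-type integrals $\int t^\alpha\,\nu(dt)$ become two-sided Laplace transforms $\int e^{-\alpha t}\,d\tilde q(t)$, but this is cosmetic.

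The one place where the paper is simpler is exactly the step you flag as the main obstacle. You construct the recovery channel by disintegrating $q$ along the likelihood ratio and then worry about standard-Borel hypotheses. The paper avoids disintegration entirely: with $\tilde q=f_*q$, the pull-back
\[
R:L^1(\tilde q)\to L^1(q),\qquad R(g)=g\circ f,
\]
is a well-defined stochastic map on any $\sigma$-finite space, since $\int(g\circ f)\,dq=\int g\,d\tilde q$ by the change-of-variables formula for push-forwards. It sends the constant density $1$ to $1$ and the density $t\mapsto e^{-t}$ (in your parametrization, $t\mapsto t$) to $dp/dq$, which is exactly what is required. No Markov kernel, no regularity hypothesis, no reduction to the countably generated case.
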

    This theorem follows from the existence of a "normal form" for classical dichotomies, which can be determined from the R\'enyi divergences alone:

\begin{introtheorem}\label{introthm:ptildes}
    Let $p\ll q$ be probability distributions on a measurable space $X$.
    Then there is a probability measure $\tilde q$ on $(-\oo,\oo]$ such that 
    \begin{equation}\label{eq:intro_renyi_with_tilde}
        D_\alpha(p,q) = \frac1{\alpha-1}\log \int_{(-\oo,\oo]} e^{-t\alpha}d\tilde q(t)\quad \forall \alpha\in\RR.
    \end{equation}
    $\tilde p := e^{-t}\tilde q\ll \tilde q$ is a probability measure and  $(\tilde p,\tilde q)\leftrightarrow(p,q)$.
    If the R\'enyi divergences $D_\alpha(p,q)$ exist for an open set of $\alpha\in\RR$, then $\tilde q$ and, hence, $\tilde p$ are uniquely determined by \eqref{eq:intro_renyi_with_tilde}.
\end{introtheorem}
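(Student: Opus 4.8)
The plan is to build the measure $\tilde q$ as a pushforward. Recall that for $p \ll q$ the Radon--Nikodym derivative $\frac{dp}{dq}$ is a non-negative $q$-integrable function on $X$; write $g = \frac{dp}{dq}$ and consider the function $t = -\log g \colon X \to (-\oo,\oo]$ (with the convention $-\log 0 = +\oo$). Define $\tilde q$ to be the pushforward of $q$ under this map, i.e. $\tilde q(A) = q(\{x : -\log g(x) \in A\})$ for Borel $A \subseteq (-\oo,\oo]$. Then by the change-of-variables formula for pushforward measures,
\begin{align}
    \int_{(-\oo,\oo]} e^{-t\alpha}\, d\tilde q(t) = \int_X e^{\alpha \log g(x)}\, dq(x) = \int_X g(x)^\alpha\, dq(x) = \int_X \Big(\tfrac{dp}{dq}\Big)^\alpha dq,
\end{align}
which is exactly $e^{(\alpha-1)D_\alpha(p,q)}$ by the definition of the classical R\'enyi divergence. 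This establishes \eqref{eq:intro_renyi_with_tilde} for all $\alpha \in \RR$ for which the integral is finite (and trivially otherwise), and $\tilde q$ is manifestly a probability measure since $q$ is.

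Next I would verify that $\tilde p := e^{-t}\tilde q$ is a probability measure: its total mass is $\int e^{-t}\, d\tilde q = \int g\, dq = p(X) = 1$, using the same change of variables, and $\tilde p \ll \tilde q$ is immediate. For the interconvertibility $(\tilde p,\tilde q) \leftrightarrow (p,q)$, one direction is the pushforward map itself: the Markov kernel induced by $x \mapsto -\log g(x)$ sends $q \mapsto \tilde q$ and $p \mapsto \tilde p$ (the latter by the mass computation applied to subsets). For the reverse direction one needs a stochastic map from $(-\oo,\oo]$ back to $X$ sending $(\tilde p,\tilde q) \to (p,q)$; the natural candidate is a disintegration of $q$ along the fibers of $t = -\log g$, i.e. the family of conditional measures $q(\,\cdot\mid t)$, which exists because $(-\oo,\oo]$ is a standard Borel space. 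One checks that this kernel reproduces $q$ by construction and reproduces $p$ because $\frac{dp}{dq} = g$ is constant ($= e^{-t}$) on each fiber, so reweighting $q(\,\cdot\mid t)$ by $e^{-t}$ and integrating against $\tilde q$ yields $p$. I would cite the earlier structural discussion (Section~\ref{sec:structure}, based on Koashi--Imoto and its generalizations) if a cleaner abstract argument is available there, but the disintegration argument is self-contained.

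Finally, for uniqueness: if $D_\alpha(p,q)$ is finite for all $\alpha$ in some open set $U \subseteq \RR$, then $\int e^{-t\alpha}\, d\nu(t)$ agrees for $\nu = \tilde q$ and any other candidate $\nu'$ on $U$; writing $\alpha = \alpha_0 + s$ for a fixed $\alpha_0 \in U$, this says the measures $e^{-t\alpha_0}\, d\tilde q$ and $e^{-t\alpha_0}\, d\nu'$ (finite measures on $(-\oo,\oo]$) have the same two-sided Laplace transform $s \mapsto \int e^{-ts}\, d(\cdot)$ on a neighborhood of $0$, hence are equal by uniqueness of the Laplace transform (analytic continuation of $s \mapsto \int e^{-ts} d\mu(t)$, which is holomorphic on the strip where it converges). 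Care must be taken at the atom $t = +\oo$: since $e^{-t\alpha} \to 0$ as $t \to \oo$ for $\alpha > 0$, the integral over $(-\oo,\oo]$ never sees the mass at $+\oo$, so strictly speaking one recovers $\tilde q$ restricted to $(-\oo,\oo)$ and then fixes the atom at $\oo$ by $\tilde q(\{\oo\}) = 1 - \tilde q((-\oo,\oo))$; this forces $U$ to contain positive $\alpha$, which is automatic for an open set only if we also know the behavior for small $\alpha$ — I would simply note that finiteness on an open interval already gives more than enough decay. The main obstacle I anticipate is the reverse stochastic map: making the disintegration argument fully rigorous in the possibly non-standard measure space $(X,\mu)$ requires either reducing to the standard Borel setting via the map $t$ (which is clean, since everything relevant is a function of $t$) or invoking the infinite-dimensional Koashi--Imoto normal form already cited in the paper; I would take the former route, as it keeps the proof elementary and localizes all measure-theoretic subtleties to $(-\oo,\oo]$.
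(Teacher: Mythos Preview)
Your construction of $\tilde q$ as the pushforward of $q$ under $f=-\log(dp/dq)$, the verification of \eqref{eq:intro_renyi_with_tilde} by change of variables, and the uniqueness argument via injectivity of the two-sided Laplace transform are exactly what the paper does (the statement is restated as \cref{thm:tilde}, with the injectivity isolated as \cref{thm:injectivity}). The one substantive difference is the reverse stochastic map. You propose disintegrating $q$ along the fibers of $f$, which is correct but brings in the standard-Borel caveats you anticipate. The paper instead uses the elementary precomposition map $R:L^1((-\oo,\oo],\tilde q)\to L^1(X,q)$, $R(g)=g\circ f$: it is positive, preserves normalization because $\int (g\circ f)\,dq=\int g\,d\tilde q$ by the pushforward identity, and sends the density $1$ to $1$ and the density $t\mapsto e^{-t}$ to $e^{-f}=dp/dq$, hence carries $(\tilde p,\tilde q)$ back to $(p,q)$ with no disintegration and no regularity hypotheses on $X$ whatsoever. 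Your instinct that ``everything relevant is a function of $t$'' is precisely why this one-line map works, and it renders both the disintegration theorem and the Koashi--Imoto machinery unnecessary here.
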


Since R\'enyi divergences are additive, one immediate consequence of Theorem~\ref{thm:classical_solution_intro} is that there is no non-trivial \emph{(strict) catalysis} for interconversion of dichotomies (see Refs.~\cite{Datta2022,lipka-bartosik_catalysis_2023} for reviews on catalysis): If $(p_1\otimes r,q_1 \otimes s)\leftrightarrow (p_2\otimes r, q_2\otimes s)$ with $r\ll s$ and the R\'enyi divergences for both are finite on an open set, then also $(p_1,q_1)\leftrightarrow (p_2, q_2)$.
This is noteworthy, because non-trivial cases of strict catalysis exist for one-sided conversion of dichotomies \cite{jonathan_entanglement-assisted_1999,brandao_second_2015,rethinasamy_relative_2020}: $(p_1\otimes r,q_1\otimes s) \rightarrow (p_2\otimes r,q_2 \otimes r)$ does \emph{not} imply $(p_1,q_1)\rightarrow (p_2,q_2)$.
In the context of resource theories, a family of divergences $\DD_\lambda$ is called a \emph{complete set of monotones} if
\begin{align}\label{eq:completeness}
    \DD_\lambda(\rho_1,\sigma_1) \geq \DD_\lambda(\rho_2,\sigma_2) \quad \forall \lambda\in\Lambda \ \Rightarrow (\rho_1,\sigma_1) \rightarrow (\rho_2,\sigma_2).
\end{align}
The existence of non-trivial examples of catalysis alone implies that no family of additive divergences (such as the R\'enyi divergences) can constitute a complete set of monotones \cite{fritz_resource_2017}.
This may lead to the impression that the R\'enyi divergences with $\alpha>0$ do not contain sufficient information to decide whether a classical dichotomy $(p_1,q_1)$ can be converted to another one $(p_2,q_2)$. 
Our results imply that this is false. 
Consider the dichotomies $(\tilde p_1,\tilde q_1)$ and $(\tilde p_2,\tilde q_2)$ whose existence is implied by \cref{introthm:ptildes} and which are completely determined by the R\'enyi divergences with $\alpha>0$. 
We have $(p_1,q_1) \rightarrow (p_2,q_2)$ \emph{if and only if} $(\tilde p_1,\tilde q_1) \rightarrow (\tilde p_2,\tilde q_2)$. Since the latter dichotomies are determined by the R\'enyi divergences, it follows that the divergences, in principle, contain sufficient information to decide convertibility:
\begin{introcorollary}
    It is in principle possible to decide if $(p_1,q_1)\rightarrow(p_2, q_2)$ from the knowledge of the R\'enyi divergences $D_\alpha( p_1, q_1)$ and $D_\alpha( p_2, q_2)$ on any non-empty open subset $(a,b)\subset [0,\oo)$ on which they are both finite.
\end{introcorollary}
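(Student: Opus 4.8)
The plan is to reconstruct, from the supplied R\'enyi data, the classical normal forms provided by \cref{introthm:ptildes}, and then to reduce the question ``$(p_1,q_1)\to(p_2,q_2)$?'' to the same question for those normal forms, which are dichotomies of probability measures on the standard Borel space $(-\oo,\oo]$ and whose relative majorization is governed by an explicit, checkable criterion.

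First I would apply \cref{introthm:ptildes} to each $(p_i,q_i)$, $i=1,2$. By that theorem the identity
\begin{equation}
    (\alpha-1)\,D_\alpha(p_i,q_i) \;=\; \log\int_{(-\oo,\oo]} e^{-t\alpha}\, d\tilde q_i(t)
\end{equation}
holds for all $\alpha\in\RR$, and by hypothesis its left-hand side is a known finite function of $\alpha$ on the nonempty open interval $(a,b)\subset[0,\oo)$, so $\alpha>0$ throughout $(a,b)$. The right-hand side is the logarithm of a two-sided Laplace transform of the finite measure $\tilde q_i$; finiteness on $(a,b)$ makes it analytic there, and injectivity of the Laplace transform recovers the restriction of $\tilde q_i$ to $(-\oo,\oo)$ from the data alone. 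The possible atom of $\tilde q_i$ at $\oo$ does not contribute to the transform for $\alpha>0$, but it is then pinned down by $\tilde q_i$ being a probability measure. Hence $\tilde q_i$, and with it $\tilde p_i = e^{-t}\tilde q_i$, is completely determined by $\{D_\alpha(p_i,q_i):\alpha\in(a,b)\}$; this is exactly the uniqueness clause of \cref{introthm:ptildes} (which only requires an open set of $\alpha\in\RR$, so $(a,b)\subset[0,\oo)$ certainly qualifies).

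Second, \cref{introthm:ptildes} also gives $(\tilde p_i,\tilde q_i)\longleftrightarrow(p_i,q_i)$. Composing the stochastic maps witnessing these interconversions shows that $(p_1,q_1)\to(p_2,q_2)$ \emph{if and only if} $(\tilde p_1,\tilde q_1)\to(\tilde p_2,\tilde q_2)$, as already noted in the discussion preceding the corollary; by the first step the latter concerns two dichotomies explicitly determined by the given data. Finally I would invoke the classical comparison theory of dichotomies \cite{blackwell_equivalent_1953,torgersen_comparison_1991}: on a standard Borel space, $(\tilde p_1,\tilde q_1)\to(\tilde p_2,\tilde q_2)$ holds iff the first dichotomy relatively majorizes the second, equivalently iff its relative Lorenz (testing) curve dominates that of the second pointwise — and this curve is a functional of $\tilde p_i,\tilde q_i$ only. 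Thus the truth value of ``$(p_1,q_1)\to(p_2,q_2)$?'' is already encoded in the R\'enyi divergences on $(a,b)$.

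The main obstacle is the first step: reconstructing $\tilde q_i$ is an inverse Laplace transform, which is well-posed (the assignment $\tilde q_i\mapsto\{D_\alpha(p_i,q_i)\}$ is injective) but not realized by any finite procedure, so one should not expect a closed-form decision test written directly in the $D_\alpha$; this is why the statement is qualified ``in principle''. The remaining care is measure-theoretic: the normal forms of \cref{thm:classical_solution_intro} may be non-atomic and supported on an infinite space, so the reduction and the relative-majorization criterion have to be set up at that level of generality rather than in the finite-dimensional Hardy--Littlewood--P\'olya picture.
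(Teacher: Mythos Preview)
Your proposal is correct and follows essentially the same approach as the paper: reconstruct the normal forms $(\tilde p_i,\tilde q_i)$ from the R\'enyi data via the uniqueness clause of \cref{introthm:ptildes}, use $(\tilde p_i,\tilde q_i)\leftrightarrow(p_i,q_i)$ to reduce convertibility to the normal forms, and then decide the latter. The paper states the argument in exactly this compressed form in the paragraph preceding the corollary and only spells out the final decision step (via Lorenz curves) explicitly for probability vectors in \cref{sec:explicit}; your invocation of the Blackwell comparison criterion at the level of general Borel measures is the natural extension and is consistent with the references already cited.
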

In \cref{sec:explicit}, we discuss and illustrate these results in detail for the case of probability vectors, where every step can be carried out explicitly.

Theorem~\ref{thm:classical_solution_intro} relies on observations relating R\'enyi divergences to the theory of Laplace transforms, which we consider to be of independent interest and further explore in Section~\ref{sec:CM}.
This connection gives rise to an infinite set of inequalities between R\'enyi divergences, which also hold true for the \emph{Petz quantum R\'enyi divergence} $\petz\alpha$ as well as the \emph{maximal} (or \emph{geometric}) \emph{quantum R\'enyi divergence} $\geom\alpha$ in the quantum case. We refer to \cite{tomamichel} and \cref{sec:dichotomies} for an introduction to the various different quantum R\'enyi divergences. As an example of these inequalities, we find:
\begin{equation}\label{eq:inequalities-intro}
 (-1)^n\frac{d^n}{d\alpha^n}\Big(e^{(\alpha-1)(\DD_\alpha(\rho,\sigma)-\DD_\oo(\rho,\sigma))}\Big)\ge0 \quad \forall\alpha>0,\, n\in\NN,
\end{equation}
if $\DD_\alpha =\petz\alpha,\geom\alpha$ and in the classical case (always assuming that $\DD_\infty(\rho,\sigma)$ is finite). In fact, we show that \eqref{eq:inequalities-intro} is true for a sufficient family of R\'enyi divergences (should it exist) if and only if $[\rho,\sigma]=0$. In particular, neither the Petz- nor the maximal R\'enyi divergence are sufficient, since they can be "simulated" by classical distributions: For any dichotomy $(\rho,\sigma)$ there exists a classical dichotomy $(p,q)$, called \emph{Nussbaum-Szko\l a distribution} \cite{NussbaumSzkola}, such that $\petz\alpha(\rho,\sigma) = D_\alpha (p,q)$ (and similarly for $\geom\alpha$ for a suitable, different classical dichotomy).

Given the sufficiency of classical R\'enyi divergences, it is natural to conjecture that some quantum R\'enyi divergence is sufficient as well.\footnote{We conjectured this in a previous version of this paper.}
In \cref{sec:insufficiency_of_known} we see that all customary quantum R\'enyi divergences known to us are not sufficient.

\begin{proposition} The following quantum divergences are not sufficient: 
The standard quantum $f$-divergences -- including the Petz quantum Rényi divergences -- the maximal quantum $f$-divergences, the minimal (or sandwiched) R\'enyi divergences, the $\alpha$-$z$ R\'enyi divergences, the measured R\'enyi divergences, the sharp R\'enyi divergences, the Kringel R\'enyi divergences.  Moreover, the collection of all these divergences is also not sufficient.
\end{proposition}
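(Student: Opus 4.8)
The plan is to defeat every family on the list, and their union, with a single pair of dichotomies that all these divergences fail to tell apart but that are nonetheless not interconvertible by channels. The common mechanism is time reversal: fix an anti-unitary $\Theta$ — concretely entrywise complex conjugation in a chosen orthonormal basis, so that $\Theta\rho\Theta^{-1}=\rho^{\mathsf T}$ for Hermitian $\rho$ — and compare a dichotomy $(\rho,\sigma)$ with its time reverse $(\rho^{\mathsf T},\sigma^{\mathsf T})$.

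The first step is to check that each divergence in the list is invariant under $(\rho,\sigma)\mapsto(\rho^{\mathsf T},\sigma^{\mathsf T})$. I would do this family by family, but the reason is uniform: all of them are unitarily invariant, and all are built from traces of operator functions of $\rho$ and $\sigma$, hence are transpose-invariant because transposition commutes with functional calculus, reverses the order of products, and fixes the trace — this settles the Petz quantum Rényi divergences, the standard and maximal quantum $f$-divergences (alternatively: via their Nussbaum--Szko\l a representation, whose data are unchanged by transposition), the sandwiched (minimal) Rényi divergences, the $\alpha$-$z$ divergences, and the sharp and Kringel Rényi divergences; for the measured Rényi divergences one adds that $\{M_i^{\mathsf T}\}_i$ is a POVM whenever $\{M_i\}_i$ is and that $\tr[\rho^{\mathsf T}M]=\tr[\rho M^{\mathsf T}]$, so the variational problem is unchanged. (Equivalently: each of these divergences is invariant not only under $\ast$-automorphisms of the matrix algebra but also under $\ast$-anti-automorphisms such as the transpose.) Consequently none of them — nor their union — can ever separate a dichotomy from its time reverse.

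The second step is to produce one dichotomy that is \emph{not} interconvertible with its time reverse. For Hermitian $\rho,\sigma$ and any word $w$ in two letters one has $\tr[w(\rho^{\mathsf T},\sigma^{\mathsf T})]=\overline{\tr[w(\rho,\sigma)]}$, and since the collection of all such word-traces is a complete invariant for simultaneous unitary equivalence, $(\rho,\sigma)$ and $(\rho^{\mathsf T},\sigma^{\mathsf T})$ are unitarily equivalent if and only if every word-trace is real — equivalently, if and only if $(\rho,\sigma)$ admits an anti-unitary symmetry. I would therefore take positive-definite, non-commuting $3\times 3$ density matrices $\rho,\sigma$ that generate all of $M_3(\CC)$ and with $\tr[\rho\sigma\rho^2\sigma^2]\notin\RR$; generic choices work — after diagonalizing $\rho$ the imaginary part of this trace is an explicit alternating sum over distinct index triples that does not vanish in general — and a short check shows that for $2\times 2$ matrices every word-trace is automatically real, so the example genuinely has to be at least three-dimensional. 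Because $\{\rho,\sigma\}$ generates $M_3(\CC)$, the normal form of \cref{sec:structure} reduces interconvertibility by channels to plain unitary equivalence, so $(\rho,\sigma)\not\leftrightarrow(\rho^{\mathsf T},\sigma^{\mathsf T})$. Combined with the first step, no family on the list, and not their union, is sufficient; rank-deficient and infinite-dimensional cases follow by embedding this three-dimensional example.

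The main obstacle is the second step — and inside it, ruling out that some channel \emph{other} than the (non-completely-positive) transpose map could still interconvert $(\rho,\sigma)$ with $(\rho^{\mathsf T},\sigma^{\mathsf T})$; this is exactly what the Koashi--Imoto normal form of \cref{sec:structure} delivers once the dichotomy is arranged to generate the whole matrix algebra, so the remaining work is bookkeeping: writing down explicit matrices, evaluating the word-trace, and verifying the $2\times 2$ coincidence. Finally, I would note that for the sub-list of standard quantum $f$-divergences — in particular the Petz Rényi divergences, and with a suitable modification of the reference distribution the maximal quantum $f$-divergences — there is a second, anti-unitary-free argument: every dichotomy $(\rho,\sigma)$ has a \emph{commuting} Nussbaum--Szko\l a dichotomy $(p,q)$ with $\DD(\rho,\sigma)=\DD(p,q)$ for all standard $f$-divergences $\DD$, so taking $(\rho,\sigma)$ non-commuting, \cref{thm:noteleportation} gives $(\rho,\sigma)\not\leftrightarrow(p,q)$ and these families fail to be sufficient as well.
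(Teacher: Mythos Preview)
Your approach is essentially the paper's: both exploit anti-unitary invariance of every listed divergence, then use the Koashi--Imoto structure theorem (\cref{thm:interconvertibility}) on an irreducible $3\times3$ dichotomy to reduce channel-interconvertibility to unitary equivalence, and exhibit a pair not unitarily equivalent to its conjugate/transpose. The only difference is in how non-equivalence is verified---the paper fixes $\sigma$ diagonal with simple spectrum and shows by a short phase argument that no diagonal unitary sends an explicit $\rho$ to $\overline\rho$, whereas you invoke the word-trace criterion via $\tr[\rho\sigma\rho^2\sigma^2]\notin\RR$; both work, and your closing Nussbaum--Szko\l a remark for the standard and maximal $f$-divergences also matches the paper's alternative argument (\cref{thm:insufficient}, \cref{sec:fdiv}).
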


For the convenience of the reader, in \cref{sec:zoo}, we provide definitions and summarize the main properties of all these divergences. 
The reason why none of these divergences are sufficient is very simple: Each of them is not only invariant under unitary conjugation but also invariant under anti-unitary conjugation. 
In particular, they assign the same values to $(\rho,\sigma)$ and $(\overline{\rho},\sigma)$, where $\overline\rho$ denotes entry-wise complex conjugate of $\rho$ in the eigenbasis of $\sigma$. However, $(\rho,\sigma)$ in general cannot be interconverted with $(\overline{\rho},\sigma)$ using completely positive trace-preserving maps. This can be shown using the normal form of interconvertible dichotomies.

This result shows that asking for suffiency for completely positive interconversion is too restrictive in the quantum case.
Since conjugation by anti-unitaries is still positive, it suggests to generalize to interconvertibility  by positive trace-preserving maps in order to obtain a well-posed problem. Indeed it is known that several of the above divergences are also monotone under this more general class of maps (at least for certain ranges of $\alpha$), see \cref{sec:zoo}. 
Adopting this generalized notion of interconvertibility we can still rule out the Petz and maximal quantum R\'enyi divergences as sufficient families.
The $\alpha$-$z$ R\'enyi divergences are only known to be monotonic under positive maps for a smaller range of $(\alpha,z)$ compared to completely positive maps. 
We conjecture:
\begin{introconjecture}\label{conj:introconj} The \emph{minimal quantum R\'enyi divergences} $D^{\min}_\alpha$ (\emph{sandwiched quantum R\'enyi divergences}) provide a sufficient family of divergences for interconversion of finite-dimensional density matrices via positive, trace-preserving maps.
\end{introconjecture}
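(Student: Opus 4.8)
\emph{Towards a proof of \cref{conj:introconj}.} The plan is to mirror the classical strategy behind \cref{thm:classical_solution_intro}: first reduce an arbitrary finite-dimensional dichotomy to a \emph{normal form}, then show that the family $\{D^{\min}_\alpha\}_\alpha$ determines that normal form --- but now with ``interconvertible'' meaning interconvertible by positive, trace-preserving (PTP) maps. The first task is to establish the PTP analogue of the Koashi--Imoto normal form used for density matrices earlier in the paper. Concretely, one expects that two dichotomies (which we may assume to satisfy $\hat\rho\ll\hat\sigma$, since otherwise finiteness of $D^{\min}_\alpha$ for $\alpha>1$ already separates them up to a standard support reduction) are PTP-interconvertible precisely when their Koashi--Imoto decompositions
\begin{equation}
  \rho_i=\bigoplus_k p_i(k)\,\hat\rho_{i,k}\ox\omega_{i,k},\qquad \sigma_i=\bigoplus_k q_i(k)\,\hat\sigma_{i,k}\ox\omega_{i,k}
\end{equation}
agree up to (a) a relabelling of $k$, (b) classical interconvertibility of the weight dichotomies $(p_i,q_i)$, and (c) on each variable tensor factor an identification of $(\hat\rho_{1,k},\hat\sigma_{1,k})$ with $(\hat\rho_{2,k},\hat\sigma_{2,k})$ by a unitary \emph{or an anti-unitary} (equivalently, the transpose in a fixed basis). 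The ``if'' direction is immediate, since unitary conjugation, the transpose in a fixed orthonormal basis, appending or discarding the inert factors $\omega_k$, and classical stochastic channels are all PTP. The ``only if'' direction is the genuine extension: it should follow from the CPTP rigidity argument by carrying along the additional $\mathbb{Z}_2$ of transposition, the point being that a PTP map inverted by another PTP map, once squeezed between two minimal forms, is forced to be a direct sum of isometries and co-isometries, each possibly composed with a transpose.

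Given the normal form, it remains to recover it from the divergences. The block decomposition yields
\begin{equation}\label{eq:plan-block-sum}
  e^{(\alpha-1)D^{\min}_\alpha(\rho,\sigma)}=\sum_k p(k)^\alpha\, q(k)^{1-\alpha}\, e^{(\alpha-1)D^{\min}_\alpha(\hat\rho_k,\hat\sigma_k)},
\end{equation}
so the problem splits into a classical and a genuinely quantum part. The classical weights are extracted exactly as in \cref{thm:classical_solution_intro} and \cref{introthm:ptildes}: the prefactors $p(k)^\alpha q(k)^{1-\alpha}$ are Laplace kernels, so the $\alpha$-dependence of \eqref{eq:plan-block-sum} pins down the classical normal form $(\tilde p,\tilde q)$ of the weight dichotomy. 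For the quantum part one must show that for an \emph{irreducible} dichotomy $(\hat\rho,\hat\sigma)$ --- full-rank $\hat\sigma$, trivial inert factor, no finer block structure --- the single function $\alpha\mapsto D^{\min}_\alpha(\hat\rho,\hat\sigma)=\frac1{\alpha-1}\log\tr\big[\big(\hat\sigma^{\frac{1-\alpha}{2\alpha}}\hat\rho\,\hat\sigma^{\frac{1-\alpha}{2\alpha}}\big)^{\alpha}\big]$ determines $(\hat\rho,\hat\sigma)$ up to unitary/anti-unitary equivalence, and then that the decomposition into irreducible blocks can be peeled off from the combined profile \eqref{eq:plan-block-sum}. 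A natural device for the peeling is the limit $\alpha\to\oo$, where $D^{\min}_\alpha\to D_{\max}$ isolates the block maximising $\lambda_{\max}(\hat\sigma_k^{-1/2}\hat\rho_k\hat\sigma_k^{-1/2})\cdot p(k)/q(k)$, followed by induction (resolving ties through the subleading $1/\alpha$ term).

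I expect the quantum uniqueness statement to be the main obstacle, and it is exactly the place where $\petz\alpha$ and $\geom\alpha$ fail: they are classically simulable via Nussbaum--Szko\l a distributions, so for them the assignment $(\hat\rho,\hat\sigma)\mapsto(D_\alpha(\hat\rho,\hat\sigma))_\alpha$ factors through a classical dichotomy and forgets $[\hat\rho,\hat\sigma]$, whereas $D^{\min}_\alpha$ does not --- consistently with $D^{\min}_\alpha$ violating the inequalities \eqref{eq:inequalities-intro}. Turning this surplus information into a uniqueness proof is a non-commutative moment problem; promising inputs are the measured/variational representation of $D^{\min}_\alpha$ for $\alpha>1$, the analytic continuation of $\alpha\mapsto e^{(\alpha-1)D^{\min}_\alpha(\hat\rho,\hat\sigma)}$ to complex $\alpha$, and a perturbative analysis near the commuting locus, where even a low-order expansion should already recover the off-diagonal part of $\hat\rho$. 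As supporting evidence one can (i) prove the conjecture directly for qubit dichotomies, where after fixing $\hat\sigma$ the unitary/anti-unitary invariants of $\hat\rho$ reduce to two real numbers and $\{D^{\min}_\alpha\}$ is manifestly over-determined, (ii) note that it holds trivially on the commuting locus by \cref{thm:classical_solution_intro}, and (iii) verify the case of pure $\hat\rho$.
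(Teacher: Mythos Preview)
The statement you are addressing is a \emph{conjecture} in the paper, not a theorem; the paper does not contain a proof, only partial evidence (the commuting case, the case where both $\rho_i$ are pure, the mixed case of \cref{thm:R\'enyis_commutativity}, and numerical checks of the complete-monotonicity inequalities). So there is no ``paper's proof'' to compare against, and your proposal should be read as a programme rather than an argument to be validated.

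Your strategy has two genuine gaps, one of which you flag yourself. The first is the PTP normal form. The paper explicitly records this as open (see the discussion and footnote right after \cref{lemma:anti-unitary} is applied in \cref{sec:conjecture}), and the example $(\tfrac12\rho\oplus\tfrac12\overline\rho,\tfrac12\sigma\oplus\tfrac12\sigma)\leftrightarrow_P(\rho,\sigma)$ shows that PTP interconvertibility can \emph{merge} Koashi--Imoto blocks. Your conditions (a)--(c) do not obviously accommodate this: if blocks are to be identified when their $(\hat\rho_k,\hat\sigma_k)$ are unitarily/anti-unitarily equivalent, then the weight dichotomies $(p_i,q_i)$ live on different index sets, and ``classical interconvertibility of $(p_i,q_i)$'' is not the correct matching condition (in the merging example $p=q=(\tfrac12,\tfrac12)$ on one side versus $(1,1)$ on the other, and the equivalence works only because the ratios $p_k/q_k$ coincide, not because of classical interconvertibility in general). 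Your claim that the ``only if'' direction ``should follow from the CPTP rigidity argument by carrying along the additional $\mathbb Z_2$'' is therefore optimistic: the CPTP argument uses that inverse pairs of CPTP maps between minimal forms are $*$-isomorphisms, and the analogue for PTP maps would give Jordan isomorphisms, whose block structure is more delicate.

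The second gap --- the one you identify --- is the quantum uniqueness statement for an irreducible block: that $\alpha\mapsto D^{\min}_\alpha(\hat\rho,\hat\sigma)$ determines $(\hat\rho,\hat\sigma)$ up to unitary/anti-unitary. This is the heart of the conjecture, and none of the tools you list (variational formula, analytic continuation, perturbative expansion) is known to close it. The paper's positive evidence is precisely of the type you mention in (ii) and (iii); the qubit case (i) would be a worthwhile addition but is not in the paper. Your peeling argument via $\alpha\to\infty$ also needs care: different blocks can share the same $D_\infty$ contribution, and the subleading asymptotics of $Q^{\min}_\alpha$ are not obviously fine enough to separate them.
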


\noindent A positive resolution to the conjecture
\begin{enumerate}[a)]
    \item implies that the minimal quantum R\'enyi divergences encode all relevant relative information between two quantum states (in the sense of interconvertibility via positive trace-preserving maps), 
    \item provides a new perspective into the problem of recovery of quantum operations.
\end{enumerate}
So far, not even the sufficiency of classical R\'enyi divergences seemed to be known. 
We hope that formulating and discussing the conjecture in detail motivates more researchers to study this interesting problem and make further progress towards its resolution.
In \cref{sec:conjecture} we substantiate our conjecture by proving it for the case that both $\rho_1$ and $\rho_2$ are pure states as well as in the special case where $\rho_1$ is pure and $[\rho_2,\sigma_2]=0$.
In the latter case, it then follows that $[\rho_1,\sigma_1]=0$ and $\rho_2$ is pure as well. 
As a side-result of general interest, we thereby show that there exists no general equivalent of the Nussbaum-Szko\l a distributions for the minimal quantum R\'enyi divergence:
\begin{introcorollary}
    The minimal quantum R\'enyi divergences can, in general, not be classically simulated: There exist non-commuting dichotomies $(\rho,\sigma)$, such that there is no classical dichotomy $(p,q)$ with $D^{\min}_\alpha(\rho,\sigma) = D_\alpha(p,q)$ for all $\alpha\in [\frac{1}{2},\infty)$. 
\end{introcorollary}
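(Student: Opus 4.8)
The plan is to prove the stronger statement that \emph{no} non-commuting dichotomy whose first state is pure admits a classical simulation; since such dichotomies obviously exist (e.g.\ a qubit with $\rho=\proj{0}$ and $\sigma$ full-rank but not diagonal in the computational basis, so that $[\rho,\sigma]\neq0$), this yields the corollary. So fix such a $(\rho,\sigma)$: $\rho$ pure, $\sigma$ of full rank, $[\rho,\sigma]\neq0$. Because $\rho$ is pure and $\sigma$ has full support, $\sand\alpha(\rho,\sigma)$ is finite for every $\alpha\in[\tfrac12,\oo)$. Suppose, toward a contradiction, that there is a classical dichotomy $(p,q)$ (necessarily with $p\ll q$, since finiteness of $D_\alpha(p,q)$ for $\alpha>1$ forces it) such that $\sand\alpha(\rho,\sigma)=D_\alpha(p,q)$ for all $\alpha\in[\tfrac12,\oo)$.

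Viewing $(p,q)$ as a commuting pair of diagonal density matrices, $\sand\alpha$ reduces to the classical R\'enyi divergence $D_\alpha$ on such arguments (recalled in \cref{sec:zoo}), hence $\sand\alpha(p,q)=D_\alpha(p,q)=\sand\alpha(\rho,\sigma)$ for every $\alpha\in[\tfrac12,\oo)$. We are therefore exactly in the configuration analysed in \cref{sec:conjecture}: two dichotomies with identical minimal quantum R\'enyi divergences on $[\tfrac12,\oo)$, where the first dichotomy $(\rho,\sigma)$ has a pure first state and the second dichotomy $(p,q)$ is commutative. The special case of \cref{conj:introconj} proved there then applies, and among its conclusions is that $[\rho,\sigma]=0$. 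This contradicts the choice of $(\rho,\sigma)$, so no such $(p,q)$ exists, which proves the corollary.

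With the results of \cref{sec:conjecture} in hand, this corollary costs essentially nothing: the only routine points are the finiteness of $\sand\alpha(\rho,\sigma)$ on $[\tfrac12,\oo)$ for $\rho$ pure and $\sigma$ full-rank, and the standard identity $\sand\alpha(p,q)=D_\alpha(p,q)$ for commuting $p,q$. The real work lies upstream, in establishing the pure-versus-commutative special case of \cref{conj:introconj}, and in particular its rigidity consequence $[\rho_1,\sigma_1]=0$ (together with $\rho_2$ being pure); once that is available the present statement is immediate. It is worth contrasting this with $\petz\alpha$ and $\geom\alpha$: for those divergences the Nussbaum--Szko\l a construction always produces a classical dichotomy with matching values, so the analogous non-simulability statement fails entirely.
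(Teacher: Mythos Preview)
Your approach is exactly the paper's: the corollary is derived from the results of \cref{sec:conjecture}, specifically \cref{thm:R\'enyis_commutativity} and its underlying \cref{thm:classical_R\'enyis2}. The mechanism you describe --- equality of $\sand\alpha$ on $[\tfrac12,\infty)$ with one dichotomy having pure first state and the other commuting forces $[\rho,\sigma]=0$, contradiction --- is precisely the intended one.

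One hypothesis you pass over deserves mention. \cref{thm:R\'enyis_commutativity} is stated for dichotomies on \emph{finite-dimensional} quantum systems, and the more general \cref{thm:classical_R\'enyis2} behind it assumes $(dp/dq)^{\pm1}\in L^\infty(q)$. A hypothetical classical simulator $(p,q)$ need not be finite-dimensional a priori; you do get $dp/dq\in L^\infty(q)$ for free from $D_\infty(p,q)=D_\infty^{\min}(\rho,\sigma)<\infty$, but boundedness of the pseudo-inverse $(dp/dq)^{-1}$ is not forced by equality on $[\tfrac12,\infty)$ alone. The paper carries the same caveat --- the text in \cref{sec:conjecture} explicitly says ``(sufficiently regular) probability measure'' --- so your argument has precisely the reach of the paper's own, but you should be aware that the lemma you invoke does not cover the fully general classical $(p,q)$ without this regularity.
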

This statement does not yet rule out the existence of classical simulations for certain specific dichotomies. But if our conjecture is true, then the minimal quantum R\'enyi divergence cannot be classically simulated for any non-abelian dichotomy.  In particular, the latter would always violate at least one of the inequalities \eqref{eq:inequalities-intro}. Conversely, \cref{conj:introconj} can be falsified by presenting a single example of a non-abelian dichotomy whose minimal quantum R\'enyi divergence can be classically simulated.
As an illustration and further evidence for our conjecture, we present additional numerical examples of non-abelian mixed dichotomies for which \eqref{eq:inequalities-intro} indeed does not hold for the minimal quantum R\'enyi divergence and hence for which no classical simulation exists (see Fig.~\ref{fig:qubits}).

\textbf{Application to Thermodynamics:} Our result for classical states has a direct application in the context of resource theories \cite{chitambar_quantum_2019}: The so-called resource theory of athermality \cite{Janzing2000,horodecki_fundamental_2013,horodecki_quantumness_2013,brandao_resource_2013,brandao_second_2015,yunger_halpern_beyond_2016} deals with dichotomies where the second state $\sigma$ is given by a Gibbs state $\omega_H :=\exp(-\beta H)/Z$ of some Hamiltonian $H$ with fixed inverse temperature $\beta> 0$ and all channels (free operations) must map Gibbs states to Gibbs states. 
In the case of abelian dichotomies $(\rho,\omega_H)$, the R\'enyi divergences $D_\alpha(\rho,\omega_H)$ with $\alpha\geq 0$ can be interpreted as generalized free energies $F_\alpha(\rho,H)$ of $\rho$ with respect to $H$. 
Our results then immediately imply that if $F_\alpha(\rho_1,H_1) = F_\alpha(\rho_2,H_2)$ for all $\alpha\geq 0$, then the two states are thermodynamically completely equivalent because they can be reversibly interconverted by thermodynamically free operations already in the single shot. Moreover, knowledge of the free energies for $\alpha>0$ is sufficient to decide whether a thermodynamic state-transition is possible.

The situation is quite different in the quantum setting.
Since no quantum R\'enyi divergence is sufficient for interconvertibility with quantum channels, equality of free energies (defined via quantum R\'enyi divergences with respect to the Gibbs states) does not imply equivalence of states in the resource theory of athermality. 
Interestingly, this is true because time-reversal, which is precisely an anti-unitary symmetry commuting with the Gibbs state, cannot be detected using free energies while it also cannot be implemented via completely positive maps. 
It is a longstanding an open problem to determine a meaningful and computable sufficient set of conditions for state transitions in the resource theory of athermality (see, however, Ref.~\cite{gour_quantum_2018}). Our result shows that such a set of conditions must be able to detect time-reversal. 

\paragraph{Structure of the paper:} We first discuss the general structure of interconvertible sets of quantum states in Sec.~\ref{sec:structure} and then specialize to dichotomies and the associated R\'enyi divergences in Sec.~\ref{sec:dichotomies}. In Sec.~\ref{sec:classical}, we prove the solution to the classical case. Starting from Sec.~\ref{sec:insufficiency_of_known}, we discuss the quantum case, which includes a discussion of the role of "complete monotonicity" leading to the inequalities \eqref{eq:inequalities-intro} in Sec.~\ref{sec:CM}.
In the Appendix, we discuss the proof of the classical conjecture without the assumption of absolute continuity (\ref{sec:no-abs-continuity}) and examine in detail the case of probability vectors (\ref{sec:explicit}). In particular, we also show that the question of convertibility of two dichotomies can be decided from the knowledge of the R\'enyi divergences for $\alpha$ in an open subset of $[0,\infty)$ alone.
\cref{sec:appendix} contains the proof of the connection between the Koashi-Imoto decomposition and the problem of interconvertibility in infinite dimensions.
\cref{sec:fdiv} contains a discussion of $f$-divergences.
Finally, in \cref{sec:zoo} we list all families of quantum Rényi divergences known to us and collect their relevant properties.

\section{Structure of interconvertible states}\label{sec:structure}

We will allow for possibly continuous classical systems and infinite-dimensional quantum systems.
For a quantum system $S$ we denote the Hilbert space by $\H$ and the set of density operators on $\H$ by $\states(S)$.
A classical system $S$ is described by a $\sigma$-finite measure space $(X,\mu)$. The state space is the set of probability distributions on $X$ which are absolutely continuous with respect to the reference measure $\mu$, i.e., $\states(S)$ is the space of positive normalized elements of $L^1(X,\mu)$.
We will usually identify an absolutely continuous measure with its corresponding $\mu$-density and write $L^1(X,\mu)$ for both sets.

A \emph{channel} between two systems $S_i$, $i=1,2$, is a linear\footnote{To be precise, we should say that $T$ is affine, i.e., $T(p\rho+(1-p)\sigma) = pT(\rho) +(1-p)T(\sigma)$, $0<p<1$, $\rho,\sigma\in\states(S_1)$. This, however, ensures that $T$ can be extended  \emph{linearly} to the whole span of the states.} completely positive map $T : \states(S_1)\to \states(S_2)$.
For quantum systems, this means that $T$ is a completely positive trace-preserving map $\T(\H_{S_1})\to\T(\H_{S_2})$ between the Banach spaces of trace class operators.
For classical systems, the definition is equivalent to that of a stochastic map.
If one of the systems is quantum and the other one is classical, channels describe either measurements or preparations depending on which system is classical or quantum.

\begin{definition}\label{def:interconvertibility}
    Consider two systems $S_1$ and $S_2$ and two families of states $(\rho_\theta\up i)_{\theta\in\Theta} \subset \states(S_i)$, $i=1,2$, indexed by a set $\Theta$.
    We say that $(\rho_\theta\up1)_{\theta\in\Theta}$ and $(\rho_\theta\up2)_{\theta\in\Theta}$ are interconvertible and write
    \begin{equation}\label{eq:interconvertibility}
        (\rho_\theta\up1)_{\theta\in\Theta} \longleftrightarrow(\rho_\theta\up2)_{\theta\in\Theta},
    \end{equation}
    if there are channels $T:\states(S_1)\to \states(S_2)$ and $R:\states(S_2)\to\states(S_1)$ so that $T(\rho_\theta\up1) = \rho_\theta\up2$, $R(\rho_\theta\up2)=\rho_\theta\up1$ for all $\theta\in\Theta$.
\end{definition}

Note that the systems in \cref{def:interconvertibility} can be classical or quantum systems.
The main result of this section is a structure theorem for interconvertibility.
To state it, we need the following result known as the Koashi-Imoto Theorem (see \cite{KoashiImoto}):

\begin{theorem}[Koashi-Imoto]\label{thm:KI}
Let $(\rho_\theta)_{\theta \in \Theta}$ be a collection of states on a separable Hilbert space $\mc H$ such that the union of the supports is (dense in) $\H$. 
Then $\mc H$ decomposes as
$\mc H = \bigoplus_{j=1}^N \mc J_j \otimes \mc K_j$, $N\in \mathbb N\cup\{\infty\}$, such that:
\begin{itemize}
	\item The states $\rho_\theta$ decompose as
	\begin{align}
		\rho_\theta = \oplus_jp_{j|\theta}\,\rho_{j|\theta} \otimes \omega_j,
	\end{align}
		where $\{p_{j|\theta}\}_j$ is a probability distribution, and $\rho_{j|\theta}$ is a density matrix for all $\theta\in\Theta$.
        Furthermore, $\omega_j$ are density matrices independent of $\theta$.
	\item For every channel $T$ on $\H$ that leaves the $\rho_\theta$ invariant we have
	\begin{align}\label{eq:KI_channel_form}
		T|_{\T (\mc J_j\otimes \mc K_j)} = \id\otimes T_j, 
	\end{align}
	with $\id$ on $\mc J_j$ and $T_j$ a channel on $\mc K_j$ such that $T_j(\omega_j)=\omega_j$. 
\end{itemize}
\end{theorem}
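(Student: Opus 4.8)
The plan is to follow the original argument of Koashi and Imoto \cite{KoashiImoto} in the finite-dimensional case and to indicate where its separable infinite-dimensional version needs the additional input of \cite{Jencova2006,Kuramochi2018}. The first step is a harmless reduction to a faithful reference state: by separability of $\H$ there is a countable subfamily $(\rho_{\theta_k})_k$ whose supports still span a dense subspace, so $\rho:=\sum_k 2^{-k}\rho_{\theta_k}$ is faithful, and every channel fixing all the $\rho_\theta$ in particular fixes $\rho$.

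The heart of the matter is to isolate the correct von Neumann algebra $\A\subseteq B(\H)$. Following \cite{KoashiImoto}, I would take $\A$ to be the largest von Neumann subalgebra on which every state-preserving channel acts as the identity -- concretely, the intersection, over all channels $T$ with $T(\rho_\theta)=\rho_\theta$ for all $\theta$, of the fixed-point algebras of the Heisenberg duals $T^\dagger$ (each of which is a von Neumann algebra carrying a normal conditional expectation, because $\rho$ is a faithful normal invariant state for $T^\dagger$). With this definition the property $T^\dagger|_{\A}=\id$ holds by construction. Two things then remain: (i) that $\A$ is atomic, so that $\H=\bigoplus_j\J_j\otimes\K_j$ with $\A=\bigoplus_j B(\J_j)\otimes\1_{\K_j}$ and commutant $\A'=\bigoplus_j\1_{\J_j}\otimes B(\K_j)$; and (ii) that the $\rho_\theta$ decompose compatibly, i.e.\ are block-diagonal with respect to the centre of $\A$ and equal $p_{j|\theta}\,\rho_{j|\theta}\otimes\omega_j$ inside the $j$-th block, with $\omega_j$ independent of $\theta$. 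Point (ii) reflects the maximality built into the definition of $\A$ and is established, as in \cite{KoashiImoto}, by inductively splitting off the largest tensor factor on which all the states agree. Granting (i) and (ii), $T^\dagger|_{\A}=\id$ puts $\A$ in the multiplicative domain of $T^\dagger$, making $T^\dagger$ an $\A$-bimodule map and hence blockwise of the form $\id_{B(\J_j)}\otimes T_j^\dagger$; dualizing gives $T|_{\T(\J_j\otimes\K_j)}=\id\otimes T_j$, and evaluating on $\rho_\theta$ (every block is nonzero because the supports span $\H$) gives $T_j(\omega_j)=\omega_j$.

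\textbf{Main obstacle.} The genuinely hard points are infinite-dimensional. A von Neumann subalgebra of $B(\H)$ with $\H$ separable need not be of type~I, so proving the atomicity of the fixed-point algebras above, extracting the tensor decomposition, and handling the possibly countably infinite direct sum all require real work; this, together with the normality of the conditional expectations used throughout, is precisely what the sufficiency and fixed-point analyses of \cite{Jencova2006,Kuramochi2018} provide, and I would invoke their results rather than reprove them. In finite dimensions everything reduces to the structure theory of finite-dimensional $C^*$-algebras and is routine.
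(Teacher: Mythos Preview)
The paper does not actually prove \cref{thm:KI}; it merely states the result and refers to the literature: the finite-dimensional case to \cite{KoashiImoto} (with the algebraic reformulation in \cite{Hayden2004}) and the infinite-dimensional case to \cite[Sec.~3.3]{Kuramochi2018}, noting that the channel-form statement \eqref{eq:KI_channel_form} carries over by the finite-dimensional argument. So there is no ``paper's own proof'' to compare against here---what you have written is essentially a synopsis of the very approach the paper is outsourcing.

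Your sketch is in line with those references. The reduction to a faithful invariant state, the definition of $\A$ as the common fixed-point algebra of the Heisenberg duals, the use of a faithful normal invariant state to guarantee that each fixed-point set is a von Neumann algebra with a normal conditional expectation, and the multiplicative-domain argument to extract the block form $\id\otimes T_j$ are all standard and correct. You are also right that atomicity of $\A$ is the crux in infinite dimensions and that this is precisely what \cite{Kuramochi2018} supplies. One small point: you describe obtaining the decomposition $\rho_\theta=\oplus_j p_{j|\theta}\rho_{j|\theta}\otimes\omega_j$ by ``inductively splitting off the largest tensor factor on which all the states agree,'' which is the iterative flavour of the original \cite{KoashiImoto} argument; the algebraic route (closer to \cite{Hayden2004,Kuramochi2018}) instead gets it in one stroke from the structure of $\A$ and the fact that the $\rho$-preserving conditional expectation onto $\A'$ fixes each $\rho_\theta$. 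Either way the conclusion is the same.
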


The original theorem was proved in the finite-dimensional case in \cite{KoashiImoto}.
An elegant algebraic proof is given in \cite{Hayden2004}.
A proof of the infinite-dimensional case can be found in \cite[Sec.~3.3.]{Kuramochi2018} (in that reference \cref{eq:KI_channel_form} is not proved, but it follows by the same argument as in the finite-dimensional case).

The family of states $(\rho_\theta)$ is interconvertible with the family of states $(\hat\rho_\theta)$ which is obtained from the Koashi-Imoto decomposition by $\hat \rho_\theta = \oplus_j p_{j|\theta}\rho_{j|\theta}$ on $\hat\H = \bigoplus_j\J_j$.
We will call the family of states $(\hat\rho_\theta)$ the {\bf Koashi-Imoto minimal form}.
Explicit channels interconverting $(\rho_\theta)$ with its Koashi-Imoto minimal form are 
\begin{equation}\label{eq:normal_intercon}
    \iota(\hat\sigma) = \oplus_j (P_{\J_j}\hat\sigma P_{\J_j}) \ox\omega_j, \quad\text{and}\quad \pi(\sigma) = \oplus_j \tr_{\K_j}[P_{\K_j\ox\J_j}\sigma P_{\K_j\ox\J_j}].
\end{equation}
Here, $P_V$ denotes the orthogonal projection onto the subspace $V$.
To be specific we have $\iota(\hat\rho_\theta) = \rho_\theta$ and $\pi(\rho_\theta) = \hat\rho_\theta$.

The following structure theorem says that two families of states are interconvertible if and only if they have unitarily equivalent Koashi-Imoto minimal forms.

\begin{theorem}\label{thm:interconvertibility}
    Let $(\rho_\theta\up i)_{\theta \in \Theta}$ be families of quantum states on quantum systems $S_i$, such that the union of the supports is all of $\H_i$, $i=1,2$.
    Consider the Koashi-Imoto decompositions
    \begin{align}
        \H_{S_i} &= \bigoplus_{j=1}^{N_i} \J_j\up i\otimes \K_j\up i,\quad \rho_\theta\up i = \oplus_{j=1}^{N_i}\,p_{j|\theta}\up i \ \rho_{j|\theta}\up i\ox \omega_j\up i, \quad \theta\in\Theta,\,i=1,2.
    \end{align}
    The following are equivalent
    \begin{enumerate}[(A)]
        \item\label{it:interconvertibility} $(\rho_\theta\up1)$ and $(\rho_\theta\up2)$ are interconvertible.
        \item\label{it:normal_form}
        There is a reordering of $j$-indices (and $N_1=N_2$) such that $p_{j|\theta}\up1=p_{j|\theta}\up2$ and there are unitaries $U_j:\J_j\up1\to\J_j\up2$ so that $\rho_{j|\theta}\up 1= U_j\rho_{j|\theta}\up 2U_j^*$.
    \end{enumerate}
    If the equivalent conditions hold and if $T$, $R$ are channels that interconvert the sets of states, then it follows that
    \begin{equation}\label{eq:intercon_channels}
        T |_{\T(\J_j\up1 \otimes\K_j\up1)} = (U_j (\cdot) U_j^*) \otimes T_j,
        \quad R |_{\T(\J_{j}\up2 \otimes\K_{j}\up2)} = (U_j^* (\cdot) U_j) \otimes R_j,
    \end{equation}
    where the $U_j$ are the unitaries from \ref{it:normal_form} and $T_j$ and $R_j$ are quantum channels with $T_j(\omega_j\up1) =\omega_j\up2$ and $R_j(\omega_j\up2) = \omega_j\up1$.
\end{theorem}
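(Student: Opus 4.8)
\emph{Overall strategy.} The plan is to derive everything from the Koashi--Imoto theorem (\cref{thm:KI}) together with the fact, recorded just above, that any family of states is interconvertible with its Koashi--Imoto minimal form via the explicit channels $\iota_i,\pi_i$ of \eqref{eq:normal_intercon}.

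\emph{From \ref{it:normal_form} to \ref{it:interconvertibility}.} Given the reordering, the equalities $p^{(1)}_{j|\theta}=p^{(2)}_{j|\theta}$, and the unitaries $U_j$, define $T$ on $\T(\H_{S_1})$ as the block pinching $\sigma\mapsto\bigoplus_j P_j\sigma P_j$ (with $P_j$ the projection onto $\J^{(1)}_j\otimes\K^{(1)}_j$) followed, on the $j$-th block, by $(U_j(\cdot)U_j^*)\otimes T_j$, where $T_j$ is any channel $\T(\K^{(1)}_j)\to\T(\K^{(2)}_j)$ with $T_j(\omega^{(1)}_j)=\omega^{(2)}_j$ (such $T_j$ always exists, e.g.\ $T_j(\cdot)=\tr[\cdot]\,\omega^{(2)}_j$). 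A direct computation with the decompositions gives $T(\rho^{(1)}_\theta)=\rho^{(2)}_\theta$, and the channel $R$ is built symmetrically; this simultaneously exhibits the form \eqref{eq:intercon_channels}.

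\emph{From \ref{it:interconvertibility} to \ref{it:normal_form} and \eqref{eq:intercon_channels}.} Let $T,R$ interconvert the two families. First reduce to minimal form: the maps $\hat T:=\pi_2 T\iota_1$ and $\hat R:=\pi_1 R\iota_2$ interconvert $(\hat\rho^{(1)}_\theta)$ and $(\hat\rho^{(2)}_\theta)$, whose own Koashi--Imoto decompositions are $\hat\H_i=\bigoplus_j\J^{(i)}_j$ with trivial multiplicity spaces. Now $\hat R\hat T$ fixes every $\hat\rho^{(1)}_\theta$, so \cref{thm:KI} applied to the minimal form forces $\hat R\hat T$ to act as the identity on each diagonal block $\T(\J^{(1)}_j)$, and likewise $\hat T\hat R$ on each $\T(\J^{(2)}_j)$. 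Hence, for every $j$, the restriction of $\hat T$ to states supported in $\J^{(1)}_j$ is a reversible channel, so it has the form $\hat\rho\mapsto W_j(\hat\rho\otimes\xi_j)W_j^*$ for an isometry $W_j$ and a fixed state $\xi_j$ (and symmetrically for $\hat R$). Passing $\hat\rho^{(1)}_\theta=\bigoplus_j p^{(1)}_{j|\theta}\rho^{(1)}_{j|\theta}$ through $\hat T$ and equating with $\hat\rho^{(2)}_\theta=\bigoplus_k p^{(2)}_{k|\theta}\rho^{(2)}_{k|\theta}$ for all $\theta$, and using that the supports of the $\hat\rho^{(2)}_\theta$ are dense in $\hat\H_2$, one shows that the ranges of the $W_j$ partition $\hat\H_2$ into the blocks $\J^{(2)}_k$, that each $\xi_j$ is one-dimensional, and that each $W_j$ is a unitary onto a single block; comparing weights and conditional states then gives $N_1=N_2$, a reordering with $p^{(1)}_{j|\theta}=p^{(2)}_{j|\theta}$, and the unitaries $U_j$. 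Finally, for the channel form \eqref{eq:intercon_channels} of the original $T,R$, apply \cref{thm:KI} to $R\circ T$ and $T\circ R$: this shows that the channel $\hat\rho\mapsto T(\hat\rho\otimes\omega^{(1)}_j)$, supported in block $j$, is reversible with recovery $\tr_{\K^{(1)}_j}\circ R$, so the same block-matching argument applies verbatim and pins down $T|_{\T(\J^{(1)}_j\otimes\K^{(1)}_j)}=(U_j(\cdot)U_j^*)\otimes T_j$ with $T_j(\omega^{(1)}_j)=\omega^{(2)}_j$, and symmetrically for $R$.

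\emph{Main obstacle.} I expect the delicate step to be the block-matching bookkeeping --- forcing the ancillary states $\xi_j$ to be one-dimensional and the isometries $W_j$ to respect the block decomposition --- which is also where separability and infinite dimensionality must be handled with care (replacing a ``faithful reference state'' by a countable convex combination $\sum_n 2^{-n}\hat\rho^{(i)}_{\theta_n}$ with dense support, and invoking the separable Hilbert space version of Koashi--Imoto from \cite{Kuramochi2018} and its channel part).
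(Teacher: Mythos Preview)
Your strategy---reduce to the Koashi--Imoto minimal forms, observe that $\hat R\hat T$ and $\hat T\hat R$ act as the identity by the channel part of \cref{thm:KI}, and then extract the unitaries from the resulting reversibility---is sound and can be made to work. It is, however, a genuinely different route from the paper's. After establishing $\hat R\hat T=\id$ and $\hat T\hat R=\id$ on the preduals $(\M_i)_*=\bigoplus_j\T(\J_j^{(i)})$, the paper passes to the Heisenberg picture: the duals $\alpha_i:\M_i\to\M_{3-i}$ are unital completely positive and mutually inverse, hence each is a surjective isometry, and Kadison's theorem on completely positive isometries between C*-algebras forces each $\alpha_i$ to be a $*$-isomorphism. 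Since $\M_i=\bigoplus_j\B(\J_j^{(i)})$, this immediately yields $N_1=N_2$, the permutation of blocks, and the unitaries $U_j$---no block-matching needed. Your approach instead unpacks reversibility via a Stinespring-type structure theorem $\hat T|_{\T(\J_j^{(1)})}=W_j(\,\cdot\,\otimes\xi_j)W_j^*$; to then force $\xi_j$ to be pure and the range of $W_j$ to lie in a single block you must invoke the \emph{minimality} of the KI decomposition on side~2 (otherwise a nontrivial refinement on side~2 would result), which is precisely the bookkeeping you flag as the main obstacle. Both routes are valid; Kadison's theorem simply packages that bookkeeping into a single citation and works uniformly in infinite dimension.

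For the channel form \eqref{eq:intercon_channels} the paper also takes a shortcut worth noting: rather than redoing the block-matching for the given $T,R$, it introduces the specific interconverting channel $S=\bigoplus_j\big(U_j(\,\cdot\,)U_j^*\big)\otimes\big(\omega_j^{(2)}\tr[\,\cdot\,]\big)$ and applies \cref{thm:KI} to $R\circ S$. Because $S$ collapses the $\K$-factor, the KI constraint on $R\circ S$ determines $R$ on inputs $\sigma_j\otimes\omega_j^{(2)}$ for all $\sigma_j$, from which the product form on the whole block follows.
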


For finite-dimensional systems, this theorem is essentially a consequence of \cite[Thm.~6]{Jencova2006}. 
We include a detailed proof which also covers infinite-dimensional systems in \cref{sec:appendix}.
As an illustration, we consider the question of interconvertibility of two families of qubit states:

\begin{example}
    Consider two collections of qubit states $\rho_\theta\up i\!\in M_2(\CC)$, $i=1,2$.
    Then either the Koashi-Imoto decomposition of $\rho_\theta\up i$ has a single direct summand or all states commute.
    In particular, the two families are interconvertible if and only if they are unitarily equivalent or if all states in each family are identical.
    The second case occurs if and only if, for each family, the (necessarily single) space $\K$ is two-dimensional so that the $\J$ space is trivial.
\end{example}

One can see from \cref{thm:interconvertibility} that a set of states $(\rho_\theta)$ is interconvertible with a set of commuting quantum states if and only if the $\rho_\theta$ are jointly diagonalizable.
In fact, we can even rule out continuous classical systems:

\begin{proposition}\label{thm:noteleportation}
    Let $(\rho_\theta)$ be a collection of states on a quantum system $S$ (resp.\ finite-dimensional quantum system $S$). 
    The following are equivalent
    \begin{enumerate}[(1)]
        \item\label{it:pairw_comm} The states $\rho_\theta$ are pairwise commutative.
        \item\label{it:one_dim_KI} The Koashi-Imoto decomposition of $\H_S$ has one-dimensional $\J_j$ spaces (see Theorem \ref{thm:KI}).
        \item\label{it:intercon_finite_cl} $(\rho_\theta)$ is interconvertible with a set of commuting states on a quantum system (resp.\ finite-dimensional quantum system) $S$.
        \item\label{it:intercon_cont_cl} $(\rho_\theta)$ is interconvertible with a set of states on a (not necessarily discrete) classical system.
        \item\label{it:entanglement_breaking} There is an entanglement-breaking channel $T:\states(S)\to\states(S)$ that has every $\rho_\theta$ as a fixpoint.
    \end{enumerate}
\end{proposition}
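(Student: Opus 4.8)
The plan is to establish the equivalences via the implications $(2)\Rightarrow(1)$, $(1)\Rightarrow(2)$, $(1)\Rightarrow(3)$, $(1)\Rightarrow(4)$, $(3)\Rightarrow(5)$, $(4)\Rightarrow(5)$, and $(5)\Rightarrow(2)$, which together close the loop. Throughout one may assume without loss of generality that the union of the supports of the $\rho_\theta$ is dense in $\H_S$ (otherwise restrict to its closed span), so that \cref{thm:KI} applies and gives $\H_S=\bigoplus_j\J_j\otimes\K_j$ with $\rho_\theta=\oplus_j p_{j|\theta}\rho_{j|\theta}\otimes\omega_j$. The implication $(2)\Rightarrow(1)$ is immediate: if every $\J_j$ is one-dimensional then $\rho_{j|\theta}=1$ and $\rho_\theta=\oplus_j p_{j|\theta}\,\omega_j$, so any two $\rho_\theta,\rho_{\theta'}$ are block-diagonal with respect to the same decomposition $\bigoplus_j\K_j$ and are scalar multiples of $\omega_j$ within each block, hence commute. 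For $(1)\Rightarrow(2)$ I would use the defining minimality of the Koashi--Imoto decomposition: commutativity of all $\rho_\theta$ forces, on each block (where $\omega_j\neq0$), commutativity of the family $\{\rho_{j|\theta}\}_\theta$; since density operators are trace-class, these admit a common orthonormal eigenbasis of $\J_j$, so $\{\rho_{j|\theta}\}_\theta$ can be simultaneously block-diagonalized on $\J_j$. By minimality of the Koashi--Imoto form this is impossible unless $\dim\J_j=1$. (Compactness of density operators makes the simultaneous diagonalization valid verbatim in infinite dimensions.)

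Given $(1)$ (equivalently $(2)$), diagonalize $\rho_\theta=\sum_k p_k(\theta)\proj k$ in a common orthonormal eigenbasis $\{\ket k\}$. Taking $T=R=\id$ and the commuting family $\{\rho_\theta\}$ itself gives $(3)$ (if $S$ is finite-dimensional, so is this system, so the parenthetical version holds too). The measurement channel $\sigma\mapsto(\bra k\sigma\ket k)_k$ and the preparation channel $(q_k)_k\mapsto\sum_k q_k\proj k$ interconvert $(\rho_\theta)$ with the family of probability distributions $(p_k(\theta))_k$ on the discrete classical system with index set $\{k\}$, which gives $(4)$. For $(4)\Rightarrow(5)$: if $T\colon\states(S)\to\states(S_{\mathrm{cl}})$ and $R\colon\states(S_{\mathrm{cl}})\to\states(S)$ interconvert $(\rho_\theta)$ with a family on a (possibly continuous) classical system $S_{\mathrm{cl}}$, then $R\circ T$ is a channel on $\states(S)$ that fixes every $\rho_\theta$ and factors through the state space of a classical system, hence is entanglement-breaking; this also shows that allowing non-discrete classical systems in $(4)$ costs nothing. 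For $(3)\Rightarrow(5)$: if $(\rho_\theta)\leftrightarrow(\tau_\theta)$ with the $\tau_\theta$ commuting, let $D$ be the dephasing channel in a common eigenbasis of the $\tau_\theta$; then $R\circ D\circ T$ fixes every $\rho_\theta$ and, factoring through the measure-and-prepare channel $D$, is entanglement-breaking.

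It remains to prove $(5)\Rightarrow(2)$. If $T$ is an entanglement-breaking channel fixing all $\rho_\theta$, then by the second part of \cref{thm:KI} it has the form $T|_{\T(\J_j\otimes\K_j)}=\id_{\J_j}\otimes T_j$. Suppose $d:=\dim\J_j\geq2$ for some $j$. Adjoin a reference system $\H_{\mathrm{ref}}\cong\J_j$, fix a unit vector $\ket\psi\in\K_j$, and let $\ket\Phi=\tfrac1{\sqrt d}\sum_a\ket a\ket a\in\H_{\mathrm{ref}}\otimes\J_j$ be maximally entangled. Applying $\id_{\mathrm{ref}}\otimes T$ to the input $\proj\Phi\otimes\proj\psi$ (supported in $\H_{\mathrm{ref}}\otimes\J_j\otimes\K_j$) yields $\proj\Phi\otimes T_j(\proj\psi)$, whose marginal on $\H_{\mathrm{ref}}\otimes\J_j$ is the maximally entangled state $\proj\Phi$; since a marginal of a separable state is separable, the output is not separable across $\mathrm{ref}\colon\H_S$, contradicting that $T$ is entanglement-breaking. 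Hence all $\J_j$ are one-dimensional. I expect the main obstacle to be the step $(1)\Rightarrow(2)$ -- correctly invoking the minimality of the Koashi--Imoto form -- together with making $(5)\Rightarrow(2)$ fully rigorous in infinite dimensions (the right notion of entanglement-breaking there and the reduction to a finite-dimensional subspace); the other implications are routine once the dephasing and measure-and-prepare constructions are in place.
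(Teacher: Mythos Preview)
Your proof is correct and, for the hard implication, takes a genuinely different route from the paper.

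For $(1)\Leftrightarrow(2)$ your argument is essentially the paper's: both exploit that the $\J_j$-factor admits no nontrivial channel fixing the $\rho_{j|\theta}$, so a commuting family there would contradict minimality unless $\dim\J_j=1$. The easy implications $(1)\Rightarrow(3),(4)$ and $(3),(4)\Rightarrow(5)$ are also handled the same way in spirit.

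The real divergence is in closing the loop. The paper proves $(4)\Rightarrow(2)$ by invoking the theory of minimal sufficient subalgebras: both systems have isomorphic minimal sufficient subalgebras (a result of Kuramochi), and since the classical one sits inside $L^\infty(X,\mu)$ it is abelian, forcing $\bigoplus_j\B(\J_j)$ to be abelian. Your $(5)\Rightarrow(2)$ is instead a direct entanglement argument: you use the second clause of the Koashi--Imoto theorem to get $T|_{\T(\J_j\otimes\K_j)}=\id_{\J_j}\otimes T_j$ and then observe that such a map preserves entanglement with a reference coupled to $\J_j$, contradicting the entanglement-breaking hypothesis. This is more elementary and self-contained than the sufficient-subalgebra machinery, and it needs no finite-dimensionality of $\J_j$ (a two-dimensional reference suffices). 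The paper does offer, in a remark, an alternative finite-dimensional argument via separability of the Choi state and the measure-and-prepare form, but your route is different from that too and arguably cleaner.

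One small point you gloss over: in $(4)\Rightarrow(5)$ you assert that a channel factoring through a (possibly continuous) classical system is entanglement-breaking. This is true, but for non-discrete classical systems it is not entirely trivial; the paper explicitly appeals to a result of Holevo, Shirokov and Werner for this step. Since you flagged the infinite-dimensional case as the expected obstacle, be aware that this is where it actually sits, not in your $(5)\Rightarrow(2)$.
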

\begin{remark}
   Point \ref{it:one_dim_KI} implies that for commuting states $(\rho_\theta)_{\theta\in\Theta}$ the Koashi-Imoto minimal form $\hat \rho_\theta$ of $\rho_\theta$ is simply given by the probability distribution $\{p_{j|\theta}\}_{j=1}^N$. See also \cref{sec:explicit} for an explicit discussion of the Koashi-Iomoto minimal form for finite-dimensional commuting states.
\end{remark}
\begin{proof}
    \ref{it:one_dim_KI} $\Rightarrow$ \ref{it:pairw_comm} is clear. The converse direction follows because any quantum channel on $\mc J_j$ that fixes all states $\rho_{j|\theta}$ (for fixed $j$) must be the identity channel per definition of the spaces $\mc J_j$. Since we assume that all the $\rho_{j|\theta}$ commute, this requires $\dim \mc J_j=1$.
    This and \cref{thm:interconvertibility} also implies the equivalence of \ref{it:pairw_comm} and \ref{it:one_dim_KI} with \ref{it:intercon_finite_cl}. The implication \ref{it:intercon_finite_cl} $\Rightarrow$ \ref{it:intercon_cont_cl} is clear.

    \ref{it:intercon_cont_cl} $\Leftrightarrow$ \ref{it:entanglement_breaking}:
    The assumption of \ref{it:intercon_cont_cl} guarantees that there is a channel $T : \states(S)\to\states(S)$ which has every $\rho_\theta$ as a fixed point and factorizes through a classical system.
    The result now follows from the fact channels factors through a classical system if and only if they are entanglement breaking.
    A proof for this which allows for continuous classical systems and infinite-dimensional quantum systems is given in \cite[Thm.~2]{HolevoEtAl2005} (their assumption that the classical system is a complete separable metric space is not used in the part of the argument proving this claim).

    \ref{it:intercon_cont_cl} $\Rightarrow$ \ref{it:one_dim_KI}:
    Let $(p_\theta)_{\theta\in\Theta}$ be a collection of states on a classical system described by $L^1(X,\mu)$ which are interconvertible with $(\rho_\theta)$.
    This means that the two von Neumann algebras with indexed families of states defined as $(\M_1,(\sigma_\theta\up1)) := (L^\infty(X,\mu),(p_\theta))$ and $(\M_2,(\sigma_\theta\up2)):=(\B(\H_S),(\rho_\theta)_{\theta\in\theta})$ are equivalent, i.e., there exist normal unital completely positive maps $\Phi:\M_1\to\M_2$ and $\Psi:\M_2\to\M_1$ such that $\sigma_\theta\up 1 \circ \Psi = \sigma_\theta\up2$ and $\sigma_\theta\up 2 \circ \Phi = \sigma_\theta\up1$ for all $\theta$.
    It follows from a recent result in the theory of sufficient subalgebras \cite[Thm.~1]{Kuramochi2017} that the minimal sufficient subalgebras of $\M_1$ and $\M_2$ are isomorphic.
    These subalgebras are
    \begin{equation}\label{eq:fixed_point_alg}
        \mc N_i = \bigcap_{\Phi\in \mathrm{Ch}(\M_i)} \bigl\{ x \in \M \,\big\vert\, \Phi(x)=x \ \text{ holds, if } \sigma_\theta\up i\circ\Phi=\sigma_\theta\up i \ \forall\theta\in\Theta \bigr\},
    \end{equation}
    where $\mathrm{Ch}(\M_i)$ is the set of normal unital completely positive maps $\M_i\to\M_i$.
    The restrictions of the $\sigma^{(2)}_\theta=\rho_\theta$ to $\mc N_2$ are just the minimal form obtained from the Koashi-Imoto decomposition of $\rho_\theta$, meaning that $\mc N_2 = \oplus_j \B(\J_j)$.
    On the other hand, $\mc N_1$ is commutative since it is a subalgebra of $L^\infty(X,\mu)$.
    Since $\mc N_1\cong \mc N_2$, all the $\J_j$ spaces must hence be one-dimensional.
\end{proof}

\cref{thm:noteleportation} can be interpreted as a generalization of the no-teleportation theorem as the latter is recovered if $(\rho_\theta)_{\theta\in\Theta}$ is the maximal family containing all states. 

\begin{remark}
    In the finite-dimensional case, there is a nice alternative proof of \ref{it:entanglement_breaking} $\Rightarrow$ \ref{it:intercon_finite_cl}:
    That $T$ is entanglement breaking implies that its Choi-Jamiol\-koswki matrix $\rho_T:= (\id\otimes T) (\proj{\Omega})$ is separable so that we can write $\rho_T = \sum_{i=1}^r p_i \,\omega_i\ox \tau_i$.    
    Since $T$ is trace-preserving we have $\sum_{i=1}^r p_i \omega_i = \1/d_S$. Defining the effect operators $E_i = p_i d_S \omega_i^\top$, where ${}^\top$ denotes the transposition induced by the choice of maximally entangled state $\ket\Omega$, we find that
    \begin{align}
        T(\sigma) = d_S\,\tr_1[(\sigma^\top\!\ox\1)\rho_T]=\sum_{i=1}^r  \tr[E_i \sigma]\, \tau_i.
    \end{align}
    Hence $T$ is a measure-and-prepare channel with respect to the POVM $\{E_i\}$ and thus factors through the classical system $X=\{1,\ldots,r\}$.
    In the infinite-dimensional case, a similar argument is possible. Here, one has to replace $\Omega$ by a vector state with a faithful marginal. However, this approach is complicated by the fact that separable states, in general, do not admit a countable convex combination into product states but might require an integral instead (see \cite{HolevoEtAl2005}).
\end{remark}

\begin{corollary}\label{thm:commnogo}
    Let $(\rho_\theta\up i)$ be interconvertible collections of states on systems $S_i$, $i=1,2$.
    If the $\rho_\theta\up 1$ commute (e.g., if $S_1$ is a classical system), the states $\rho_\theta\up2$ commute.
\end{corollary}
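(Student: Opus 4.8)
The plan is to obtain the statement as a quick consequence of \cref{thm:noteleportation} together with the transitivity of interconvertibility. First I would record that $\leftrightarrow$ is an equivalence relation: if $T,R$ interconvert $(\rho_\theta\up1)$ with $(\rho_\theta\up2)$ and $T',R'$ interconvert $(\rho_\theta\up2)$ with $(\rho_\theta\up3)$, then $T'\circ T$ and $R\circ R'$ interconvert $(\rho_\theta\up1)$ with $(\rho_\theta\up3)$, since a composition of channels is again a channel. (Reflexivity and symmetry are immediate.)

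Next, assume the $\rho_\theta\up1$ commute. By the implication \ref{it:pairw_comm} $\Rightarrow$ \ref{it:intercon_cont_cl} of \cref{thm:noteleportation}, the family $(\rho_\theta\up1)$ is interconvertible with some family of states on a classical system. By transitivity, $(\rho_\theta\up2)$ is then interconvertible with that same classical family, so $(\rho_\theta\up2)$ satisfies condition \ref{it:intercon_cont_cl} of \cref{thm:noteleportation} as well. Applying the implication \ref{it:intercon_cont_cl} $\Rightarrow$ \ref{it:pairw_comm} of that proposition to $(\rho_\theta\up2)$ yields that the $\rho_\theta\up2$ pairwise commute, which is the claim. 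The parenthetical special case ``$S_1$ classical'' needs nothing extra, since states on a classical system, regarded as elements of $L^1(X,\mu)$ (equivalently, as commuting multiplication operators), trivially commute.

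It is worth also spelling out the structural version via \cref{thm:interconvertibility}, which makes the mechanism transparent: after replacing each $\H_i$ by the closed linear span of $\bigcup_\theta \supp\rho_\theta\up i$ — a harmless reduction, as it only restricts $T$ and $R$ and does not change interconvertibility — \cref{thm:noteleportation} gives that all Koashi--Imoto spaces $\J_j\up1$ are one-dimensional; condition \ref{it:normal_form} of \cref{thm:interconvertibility} then supplies unitaries $U_j:\J_j\up1\to\J_j\up2$, so every $\J_j\up2$ is one-dimensional too, whence $\rho_\theta\up2=\bigoplus_j p_{j|\theta}\up2\,\omega_j\up2$ with the $\omega_j\up2$ independent of $\theta$; such operators commute for all values of $\theta$.

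There is no genuine obstacle here — the corollary is essentially a restatement of \cref{thm:noteleportation} combined with transitivity of $\leftrightarrow$. The only point deserving a line of care is the reduction to the full-support situation needed to invoke the Koashi--Imoto decomposition (equivalently, checking that \cref{thm:noteleportation} applies verbatim to families that do not span the whole Hilbert space), and this is routine.
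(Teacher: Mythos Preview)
Your proof is correct and essentially the same as the paper's: both reduce the corollary directly to \cref{thm:noteleportation}. The paper separates the two cases $S_1$ quantum (handled exactly as in your structural paragraph via \cref{thm:interconvertibility} and one-dimensionality of the $\J_j$) and $S_1$ classical (using condition~\ref{it:entanglement_breaking} by observing that $T\circ R$ factors through a classical system and fixes every $\rho_\theta\up2$), whereas your transitivity argument via~\ref{it:intercon_cont_cl} handles both cases uniformly and is arguably cleaner.
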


\begin{remark}\label{cor:positive-commnogo} As positive maps with abelian range or domain are completely positive, \cref{thm:commnogo} also applies to interconversion via positive instead of completely positive maps.
\end{remark}

    \begin{proof}
        Assume that $S_1$ is a quantum system.
        By \cref{thm:noteleportation}, commutativity of $\rho_\theta\up1$ implies that the spaces $\J_j\up i$ in \cref{thm:interconvertibility} are of dimension $1$.
        But said Theorem then implies that the same is true for $\rho_\theta\up2$.
        Applying \cref{thm:noteleportation} a second time shows the assertion.

        Now let $S_1$ be a classical system and let $T,R$ be interconverting channels. 
        Then $S = T\circ R:\states(S_2)\to\states(S_2)$ is a channel that factors through a classical system which has every $\rho_\theta\up2$ as a fixpoint. The result follows from \cref{thm:noteleportation}.
    \end{proof}
\begin{remark}
    For pairs $(\rho, \sigma)$ of finite-dimensional density matrices \cref{thm:commnogo} can also be deduced from \cite[Thm.~6]{berta_variational_2017}.
\end{remark}

\section{Dichotomies: Sufficient families of R\'enyi divergences}
\label{sec:dichotomies}

For this section, we restrict to interconvertibility in the case of $|\Theta|=2$.
That is we consider two \emph{dichotomies} $(\rho_i, \sigma_i)$, $i=1,2$, of states and ask whether there exist channels $T$ and $R$ mapping between the two systems such that
\begin{align*}
   &&T(\rho_1) &= \rho_2, & R(\rho_2)&= \rho_1,&&\\
    &&T(\sigma_1) &= \sigma_2, & R(\sigma_2)&=\sigma_1.&&
\end{align*}
We are interested in examining the connection of this problem to suitable families of (quantum) R\'enyi divergences.
For this reason, we will always make the following assumption when talking about dichotomies:
\begin{assumption*} 
In the main text, we only consider dichotomies $(\rho,\sigma)$ with $\rho\ll\sigma$.
For quantum states, this means that $\mathrm{supp}(\rho)\subseteq\mathrm{supp}(\sigma)$ while it means that $\rho$ is absolutely continuous with respect to $\sigma$ (in the sense of measures) for classical systems. The classical case is treated without the assumption on absolute continuity in \cref{sec:no-abs-continuity}.
\end{assumption*}

Here, a family of R\'enyi divergences means functions $\DD_\alpha(\rho,\sigma)$ defined for pairs of states with $\mathrm{supp}(\rho)\subseteq \mathrm{supp}(\sigma)$ and certain values of $\alpha$ and which satisfies certain axioms \cite[Sec.~4.2]{tomamichel}.
For us, the most important properties a family of R\'enyi divergences has to satisfy are the \emph{data processing inequality},
\begin{equation}
    \DD_\alpha(T(\rho),T(\sigma)) \le \DD_\alpha(\rho,\sigma),
\end{equation}
which should hold on a subset $\Lambda\subset\RR$ (independent of $(\rho,\sigma)$),
and the property that for commuting states one recovers the classical R\'enyi divergences, i.e., if $\rho = \sum p_i \ketbra ii$, $\sigma_i=\sum q_i\ketbra ii$ then $\DD_\alpha(\rho,\sigma) = D_\alpha(p,q)$, where
\begin{equation}\label{eq:vectorQRE}
    D_\alpha(p,q) = \frac1{\alpha-1} \log \sum p_i^\alpha q_i^{1-\alpha}.
\end{equation}

\begin{definition}
    Let $\DD_\alpha$ be a family of R\'enyi divergences that fulfills data-processing for a subset $\Lambda\subseteq [0,\infty)$.
    We say that the family $\DD_\alpha$ is {\bf sufficient} if the following holds:
    Let $(\rho_1,\sigma_1)$ and $(\rho_2,\sigma_2)$ be dichotomies on systems $S_1$ and $S_2$ such that $\DD_\alpha(\rho_i, \sigma_i)$ is finite for all $\alpha$ in an open subset interval $(a,b)\subset\Lambda$.
    Then, the dichotomies are interconvertible if and only if all their R\'enyi divergences are equal:
    \[
        (\rho_1,\sigma_1)\leftrightarrow(\rho_2,\sigma_2) \iff \DD_\alpha(\rho_1,\sigma_1)= \DD_\alpha(\rho_2,\sigma_2)\ \ \forall \alpha\in(a,b)\subset\Lambda.
    \]
\end{definition}
In the following, we explore the notion of sufficiency of R\'enyi divergences in the classical and in the quantum case.

\subsection{The classical case}\label{sec:classical}

For classical systems, there is a unique family of R\'enyi divergences given by
\begin{equation}\label{eq:cl_R\'enyi}
    D_\alpha(p,q) = \frac1{\alpha-1}\log \int_X \Big(\frac{dp}{dq}\Big)^\alpha dq\quad\text{if $p\ll q$}.
\end{equation}
where $p$, $q$ are both probability measures on a space $X$. As already stated, the conjecture is true for dichotomies on (possibly continuous) classical systems, which we will prove in this section.
Recall that by $L^1(X,\mu)$, we denote the $\mu$-absolutely continuous measures as well as their $\mu$-densities at the same time.

We briefly comment on the existence of the R\'enyi divergences for continuous distributions.
As the integrand in \cref{eq:cl_R\'enyi} is positive, the R\'enyi divergences $D_\alpha(p,q)$ make sense in $[0,\oo]$ for all $\alpha\in [0,\oo)$ if $p\ll q$.
In the limit $\alpha\to 1$, the R\'enyi divergence agrees with the relative entropy (or Kullback-Leibler divergence) $\int \log(\tfrac{dp}{dq})dp$.
In the following, we use the notation
\begin{equation}
    Q_\alpha(p,q) = e^{(\alpha-1)D_\alpha(p,q)} = \int \Big(\frac{dp}{dq}\Big)^\alpha\, dq.
\end{equation}
These quantities also make sense for all $\alpha<0$ (but may be infinite) if we define $(\tfrac{dp}{dq})^{\alpha}$ as the \emph{pseudo-inverse}.
If one has $p\ll q$ and $q\ll p$ then it follows that $(\tfrac{dp}{dq})^{-\alpha} =(\tfrac{dq}{dp})^\alpha$ for all $\alpha\in \RR$.
In this case it holds that $Q_{\alpha}(p,q)=Q_{1-\alpha}(q,p)$ and $D_\alpha(p,q) = \tfrac\alpha{1-\alpha} D_{1-\alpha}(q,p)$.

\begin{theorem}\label{thm:tilde}
    Let $(X,\mu)$ be a probability space and let $p\ll q$ be probability distributions in $L^1(X,\mu)$.
    Then there is a Borel probability measure $\tilde q$ on $(-\oo,\oo]$ such that 
    \begin{equation}\label{eq:renyi_with_tilde}
        D_\alpha(p,q) = \frac1{\alpha-1}\log \int_{(-\oo,\oo]} e^{-t\alpha}d\tilde q(t)\quad \forall \alpha\in\RR,
    \end{equation}
    where both sides of the equality may be infinite.
    It follows that $\tilde p := e^{-t}\tilde q\ll \tilde q$ is a probability measure and that $(\tilde p,\tilde q)\leftrightarrow(p,q)$ where the second dichotomy is viewed in $L^1((-\oo,\oo],\tilde q)$.
    If the R\'enyi divergences $D_\alpha(p,q)$ exist for an open set of $\alpha\in\RR$, then $\tilde q$ and, hence, $\tilde p$ are uniquely determined by \eqref{eq:renyi_with_tilde}.
\end{theorem}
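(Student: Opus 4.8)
The plan is to recognize the R\'enyi divergence as essentially a (two-sided) Laplace transform in disguise and to obtain $\tilde q$ as the push-forward of $q$ under the log-likelihood-ratio map. Concretely, write $\ell = \log\frac{dp}{dq}: X\to[-\oo,\oo)$, which is a measurable function (finite $q$-a.e.\ since $p\ll q$, but possibly $-\oo$ on the set where $\frac{dp}{dq}=0$). Then for every $\alpha\in\RR$,
\begin{equation}
    Q_\alpha(p,q) = \int_X \Big(\tfrac{dp}{dq}\Big)^{\alpha} dq = \int_X e^{\alpha\ell}\, dq,
\end{equation}
where on the right we interpret $e^{\alpha\cdot(-\oo)}$ as $0$ for $\alpha>0$, as $1$ for $\alpha=0$, and as $+\oo$ for $\alpha<0$ — which matches the pseudo-inverse convention from the excerpt. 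Now define $\tilde q$ as the push-forward of $q$ under the map $x\mapsto -\ell(x) = -\log\frac{dp}{dq}(x) \in (-\oo,\oo]$ (the point $+\oo$ carries the mass $q(\{\frac{dp}{dq}=0\})$, so $\tilde q$ is genuinely a Borel probability measure on $(-\oo,\oo]$). By the change-of-variables formula for push-forwards,
\begin{equation}
    \int_{(-\oo,\oo]} e^{-t\alpha}\, d\tilde q(t) = \int_X e^{\alpha\ell(x)}\, dq(x) = Q_\alpha(p,q),
\end{equation}
which on taking $\tfrac1{\alpha-1}\log(\cdot)$ is exactly \eqref{eq:renyi_with_tilde}. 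One should just check at $\alpha=1$ separately that the limiting expression matches the relative entropy, which is immediate since $D_1 = \int \ell\, dp = \int (-t)\, d\tilde p(t)$ with $\tilde p$ as below.

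Next I would verify that $\tilde p := e^{-t}\tilde q$ is a probability measure: $\int_{(-\oo,\oo]} e^{-t}\, d\tilde q(t) = Q_0(p,q) = \int_X \mathbf 1\{\tfrac{dp}{dq}>0\}\, dq = p(X) = 1$ (here the indicator appears because the pseudo-inverse makes $(\tfrac{dp}{dq})^0$ vanish where $\tfrac{dp}{dq}=0$; and $p(X)=1$ uses $p\ll q$). Note $\tilde p$ assigns no mass to $\{+\oo\}$ since $e^{-\oo}=0$, consistently with $\tilde p\ll\tilde q$. For the interconvertibility $(\tilde p,\tilde q)\leftrightarrow(p,q)$: the map $\pi: X\to (-\oo,\oo]$, $x\mapsto -\ell(x)$, induces a stochastic map (Markov kernel / conditional-expectation channel) $L^1(X,\mu)\to L^1((-\oo,\oo],\tilde q)$ that by construction sends $q\mapsto\tilde q$ and, since the density of $p$ against $q$ is $e^{\ell}=e^{-t}$ which is measurable with respect to $\pi$, also sends $p\mapsto \tilde p$. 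Conversely, the channel $L^1((-\oo,\oo],\tilde q)\to L^1(X,\mu)$ given by "undoing" the coarse-graining, i.e.\ the adjoint Markov kernel determined by the conditional disintegration of $q$ over the value of $\ell$, maps $\tilde q\mapsto q$ and $\tilde p\mapsto p$; this is the standard fact that a sufficient statistic's channel is reversible on the dichotomy generating it (equivalently, apply \cref{thm:interconvertibility} / the $f$-divergence characterization, since the likelihood ratio is a sufficient statistic for $(p,q)$).

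Finally, for uniqueness: suppose $D_\alpha(p,q)$ is finite for all $\alpha$ in some nonempty open interval $I\subset\RR$. Then $\alpha\mapsto \int e^{-t\alpha}\, d\tilde q(t)$ is finite on $I$, so the two-sided Laplace transform of $\tilde q$ (equivalently, after the substitution $s=e^{-t}$, the Mellin/Hausdorff-moment data of the image measure on $(0,\oo]$) converges on a nontrivial vertical strip, and is real-analytic there. If $\tilde q'$ is another Borel probability measure on $(-\oo,\oo]$ with the same R\'enyi divergences on $I$, then $\int e^{-t\alpha}\, d\tilde q(t) = \int e^{-t\alpha}\, d\tilde q'(t)$ for all $\alpha\in I$; the injectivity of the two-sided Laplace transform on its strip of convergence (a standard uniqueness theorem, provable e.g.\ by analytic continuation to a complex strip and Fourier uniqueness along vertical lines, handling the atom at $+\oo$ by noting it contributes $0$ for $\alpha>0$ and $+\oo$ for $\alpha<0$, hence is pinned down by finiteness on $I$) forces $\tilde q=\tilde q'$, and then $\tilde p = e^{-t}\tilde q$ is determined as well.

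The main obstacle I expect is bookkeeping around the endpoints and the pseudo-inverse: making sure the atom of $\tilde q$ at $+\oo$ (coming from $\{\tfrac{dp}{dq}=0\}$) is handled consistently with the conventions $Q_\alpha(p,q)=+\oo$ for $\alpha<0$ when this set has positive $q$-mass, and separately that the $\alpha=1$ case (relative entropy) and $\alpha=0$ case really do fall out of the same formula. The measure-theoretic core — push-forward under the log-likelihood ratio plus Laplace-transform uniqueness — is routine; the care is entirely in the edge cases and in citing the correct reversibility statement for the sufficient-statistic channel.
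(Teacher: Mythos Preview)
Your approach is essentially the paper's: define $\tilde q$ as the push-forward of $q$ under $f=-\log\tfrac{dp}{dq}$, get \eqref{eq:renyi_with_tilde} from the change-of-variables formula, and deduce uniqueness from injectivity of the two-sided Laplace transform on a strip (the paper isolates this as a separate lemma). Two minor corrections: the normalization of $\tilde p$ should read $\int e^{-t}\,d\tilde q = Q_1(p,q) = \int \tfrac{dp}{dq}\,dq = 1$, not $Q_0$ (your intermediate expression $\int \mathbf 1\{\tfrac{dp}{dq}>0\}\,dq$ is $q(\{\tfrac{dp}{dq}>0\})$, which need not equal $1$); and for the reverse channel the paper simply takes the explicit pullback $R(g)=g\circ f$ on densities and verifies $R(1)=1$, $R(e^{-t})=\tfrac{dp}{dq}$ directly, which is more elementary than invoking disintegration or sufficiency theory (and note that \cref{thm:interconvertibility} as stated is for quantum systems, so it is not the right citation here).
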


With this we obtain that the R\'enyi divergences are sufficient for classical systems:

\begin{theorem}[Sufficiency of R\'enyi divergences]\label{thm:classical_solution}
    Let $(X_i,\mu_i)$, $i=1,2$, be $\sigma$-finite measure spaces and $p_i\ll q_i$ probability measures in $L^1(X_i,\mu_i)$.
    If there is an open interval $(a,b)\subset\RR$ such that
    \begin{equation}
        D_\alpha(p_1,q_1) = D_\alpha(p_2,q_2) <\oo \quad\forall\alpha\in(a,b),
    \end{equation}
    there are stochastic maps between $L^1(X_1,\mu_1)$ and $L^1(X_2,\mu_2)$ interconverting the dichotomies $(p_1,q_1)$ and $(p_2,q_2)$.
\end{theorem}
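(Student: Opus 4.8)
The plan is to read off Theorem~\ref{thm:classical_solution} directly from Theorem~\ref{thm:tilde}. The key point is that Theorem~\ref{thm:tilde} attaches to every classical dichotomy $(p,q)$ a canonical ``normal form'' $(\tilde p,\tilde q)$ which (i) is interconvertible with $(p,q)$ via explicit stochastic maps, and (ii) is determined by the family $\alpha\mapsto D_\alpha(p,q)$. Consequently, two dichotomies with the same finite R\'enyi divergences on an open interval must have literally the same normal form, and chaining the interconverting maps through that common normal form finishes the proof.

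In detail I would proceed as follows. First, replace each reference measure $\mu_i$ by $q_i$; this changes neither the dichotomy $(p_i,q_i)$ nor any value $D_\alpha(p_i,q_i)$, which depend only on the pair of probability measures, and it is undone by the identity Markov kernel, so we may assume $(X_i,q_i)$ is a probability space and invoke Theorem~\ref{thm:tilde}. This produces Borel probability measures $\tilde q_i$ on $(-\oo,\oo]$ with $\tilde p_i=e^{-t}\tilde q_i$ a probability measure and $(\tilde p_i,\tilde q_i)\leftrightarrow(p_i,q_i)$, such that
\[
D_\alpha(p_i,q_i)=\frac1{\alpha-1}\log\int_{(-\oo,\oo]}e^{-t\alpha}\,d\tilde q_i(t),\qquad i=1,2,
\]
for all $\alpha\in\RR$ (both sides possibly infinite). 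By hypothesis the two right-hand sides agree and are finite for $\alpha\in(a,b)$; equivalently, the (bilateral) Laplace transforms $\alpha\mapsto\int e^{-t\alpha}\,d\tilde q_i(t)$ coincide and are finite on the open interval $(a,b)$. In particular the hypothesis of the uniqueness clause of Theorem~\ref{thm:tilde} is met for each $i$, and since the transforms agree on $(a,b)$ it follows that $\tilde q_1=\tilde q_2=:\tilde q$ and hence $\tilde p_1=e^{-t}\tilde q=\tilde p_2$.

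It then only remains to compose channels: $(\tilde p_1,\tilde q_1)$ and $(\tilde p_2,\tilde q_2)$ are now one and the same dichotomy on $L^1((-\oo,\oo],\tilde q)$, so the stochastic maps supplied by Theorem~\ref{thm:tilde} compose to stochastic maps $L^1(X_1,\mu_1)\to L^1((-\oo,\oo],\tilde q)\to L^1(X_2,\mu_2)$ (and back), interconverting $(p_1,q_1)$ with $(p_2,q_2)$; composites of stochastic maps are stochastic. I do not anticipate a serious obstacle at this stage, since all the analytic substance — building $\tilde q$ from the Laplace-transform representation of $Q_\alpha(p,q)$ and proving its uniqueness once the transform is finite on an open set — is already carried by Theorem~\ref{thm:tilde}. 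The only points that need a moment's care here are the change of reference measure, the removable singularity of the displayed formula at $\alpha=1$ (harmless because $\int e^{-t}\,d\tilde q_i=1$ always), and the bookkeeping of the mass of $\tilde q_i$ at $t=+\oo$, which is invisible to $Q_\alpha$ for $\alpha>0$ but is pinned down by the normalization $\tilde q_i((-\oo,\oo])=1$ (and forced to vanish when $(a,b)$ reaches below $0$, since then finiteness of $Q_\alpha$ fails unless $\tilde q_i(\{+\oo\})=0$).
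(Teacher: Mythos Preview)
Your proposal is correct and follows essentially the same route as the paper's own proof: apply Theorem~\ref{thm:tilde} to each dichotomy, invoke its uniqueness clause (which rests on the injectivity of the two-sided Laplace transform, Lemma~\ref{thm:injectivity}) to conclude $(\tilde p_1,\tilde q_1)=(\tilde p_2,\tilde q_2)$, and then chain the interconvertibilities. Your explicit reduction from $\sigma$-finite $\mu_i$ to the probability space $(X_i,q_i)$ is exactly Lemma~\ref{thm:WLOG_mu=q}, and your remarks on the atom at $+\infty$ and the removable point $\alpha=1$ are accurate bookkeeping that the paper leaves implicit.
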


\begin{proof}
    From \cref{thm:tilde} we get two dichotomies $(\tilde p_i,\tilde q_i)$ on $(-\oo,\oo]$.
    From the assumption that the R\'enyi divergences exist (i.e.\ are finite) on an open set of $\alpha\in\RR$ and the uniqueness statement in \cref{thm:tilde} we obtain $(\tilde p_1,\tilde q_1)=(\tilde p_2,\tilde q_2)$ and hence $(p_1,q_1)\leftrightarrow(\tilde p_1,\tilde q_1)=(\tilde p_2,\tilde q_2)\leftrightarrow(p_2,q_2)$.
\end{proof}

For finite-dimensional classical systems, this can be reformulated as:

\begin{corollary}\label{thm:classical_solution_vecs}
    Let $\mathbf p_i,\mathbf q_i\in\RR^{n_i}$, $i=1,2$, be probability vectors (with $\mathbf p_i\ll \mathbf q_i$).
    Then there are stochastic matrices $T\in \RR^{n_1\times n_2}$ and $R\in \RR^{n_2\times n_1}$ so that $T\mathbf p_1 =\mathbf p_2$, $T\mathbf q_1=\mathbf q_2$, $R\mathbf p_2=\mathbf p_1$ and $R\mathbf q_2=\mathbf q_1$ if and only if $D_\alpha(\mathbf p_1,\mathbf q_1)=D_\alpha(\mathbf p_2,\mathbf q_2)$ for an open set of $\alpha$.
\end{corollary}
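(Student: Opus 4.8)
The plan is to derive \cref{thm:classical_solution_vecs} as the finite-dimensional specialization of \cref{thm:classical_solution,thm:tilde}, together with the elementary converse coming from data processing; the only point that needs a small argument is that the interconverting maps can be chosen to be honest (finite) stochastic matrices.

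\emph{Interconvertibility $\Rightarrow$ equality of divergences.} This is immediate from the data-processing inequality. If stochastic matrices $T,R$ interconvert $(\mathbf p_1,\mathbf q_1)$ and $(\mathbf p_2,\mathbf q_2)$, then applying data processing to $T$ gives $D_\alpha(\mathbf p_2,\mathbf q_2)=D_\alpha(T\mathbf p_1,T\mathbf q_1)\le D_\alpha(\mathbf p_1,\mathbf q_1)$, and applying it to $R$ gives the reverse inequality; hence $D_\alpha(\mathbf p_1,\mathbf q_1)=D_\alpha(\mathbf p_2,\mathbf q_2)$ for every $\alpha\ge 0$ (for probability vectors with $\mathbf p_i\ll\mathbf q_i$ all these quantities are finite), in particular on any open set of $\alpha$.

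\emph{Equality of divergences $\Rightarrow$ interconvertibility.} View $\mathbf p_i,\mathbf q_i$ as $\mu_i$-densities on the finite space $X_i=\{1,\dots,n_i\}$ with counting measure $\mu_i$, so that $L^1(X_i,\mu_i)=\RR^{n_i}$ and stochastic maps between these spaces are exactly (left-)stochastic matrices. An open set of $\alpha$ contains an open interval $(a,b)$, on which we have $D_\alpha(\mathbf p_1,\mathbf q_1)=D_\alpha(\mathbf p_2,\mathbf q_2)<\oo$, so \cref{thm:classical_solution} applies verbatim and produces stochastic matrices $T,R$ interconverting the two dichotomies. The one thing worth spelling out is that invoking \cref{thm:classical_solution} does not secretly push us onto the continuum $(-\oo,\oo]$: by \cref{thm:tilde} the normal form $\tilde q_i$ is the push-forward of $\mathbf q_i$ under the map $j\mapsto-\log(p_{i,j}/q_{i,j})$, hence a finitely supported measure with atoms at the finitely many distinct likelihood-ratio values (and possibly at $+\oo$). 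Thus $(\tilde{\mathbf p}_i,\tilde{\mathbf q}_i)$ is again a finite-dimensional dichotomy; the channels realizing $(\mathbf p_i,\mathbf q_i)\leftrightarrow(\tilde{\mathbf p}_i,\tilde{\mathbf q}_i)$ are the coarse-graining that merges coordinates with equal likelihood ratio and the splitting map $j\mapsto(k,\ q_{i,j}/\tilde q_{i,k})$, both honest stochastic matrices; and the uniqueness clause of \cref{thm:tilde} (linear independence of the exponentials $\alpha\mapsto e^{-t\alpha}$ on an open interval, with the atom at $+\oo$ pinned down by normalization) forces $(\tilde{\mathbf p}_1,\tilde{\mathbf q}_1)=(\tilde{\mathbf p}_2,\tilde{\mathbf q}_2)$ after reordering. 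Composing the four finite stochastic matrices yields the desired $T$ and $R$.

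\emph{Main obstacle.} Honestly there is none beyond bookkeeping: once \cref{thm:classical_solution} and \cref{thm:tilde} are in hand, this corollary is a routine translation into matrix language, and the only genuine check is the finite-support claim above, which guarantees that the abstract stochastic maps are actual stochastic matrices.
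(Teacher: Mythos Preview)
Your proof is correct and follows essentially the same route as the paper, which treats the corollary as immediate from \cref{thm:classical_solution} (with the explicit matrix construction deferred to \cref{sec:explicit}). One small remark: your extra paragraph checking that the normal form stays finite-dimensional is not strictly needed, since the conclusion of \cref{thm:classical_solution} already hands you stochastic maps between $L^1(X_1,\mu_1)=\RR^{n_1}$ and $L^1(X_2,\mu_2)=\RR^{n_2}$, and linear maps between these finite-dimensional spaces are automatically matrices; but the observation is sound and in fact recovers precisely the explicit $T$ and $R$ of \cref{sec:explicit}.
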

In \cref{sec:explicit}, we discuss the setting of probability vectors in detail and show how to construct the matrices $T$ and $R$ explicitly.

\begin{lemma}\label{thm:WLOG_mu=q}
    A dichotomy $(p,q)$ in $L^1(X,\mu)$ is interconvertible with the same dichotomy viewed as an element of $L^1(X,q)$ and both have the same R\'enyi divergences.
\end{lemma}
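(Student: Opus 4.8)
The plan is to observe that the R\'enyi divergences require no work at all, and then to write down the two interconverting channels by hand. For the divergences: the formula \eqref{eq:cl_R\'enyi} builds $D_\alpha(p,q)$ only from the measures $p$ and $q$ themselves -- via the Radon--Nikodym derivative $\tfrac{dp}{dq}$, which exists because $p\ll q$, and integration against $q$ -- with no reference to $\mu$. Since replacing $L^1(X,\mu)$ by $L^1(X,q)$ changes neither $p$ nor $q$ as measures, only the ambient $L^1$ space in which we view them, every $D_\alpha(p,q)$ is literally the same number. So the real content is the construction of the interconverting channels.

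I would set $g:=\tfrac{dq}{d\mu}\in L^1(X,\mu)$ (available by $\sigma$-finiteness and $q\ll\mu$) and $A:=\{g>0\}$, so that $\mu$ and $q$ have the same null sets within $A$ while $q(X\setminus A)=p(X\setminus A)=0$. The reverse channel $R\colon L^1(X,q)\to L^1(X,\mu)$ is multiplication by $g$, sending a $q$-density $f$ to the $\mu$-density $fg$; this is just the inclusion of $q$-absolutely continuous measures into $\mu$-absolutely continuous ones, it is positive and trace preserving since $\int_X fg\,d\mu=\int_X f\,dq$, and it carries the $q$-densities $\tfrac{dp}{dq}$ and $\1$ of $p$ and $q$ to their $\mu$-densities, so $R(p)=p$ and $R(q)=q$. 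The forward channel $T\colon L^1(X,\mu)\to L^1(X,q)$ will restrict a $\mu$-density to $A$, divide by $g$, and reinsert the mass that sat on $X\setminus A$ as a multiple of $q$:
\begin{equation*}
    T(h) := \1_A\,\frac hg + \Big(\int_{X\setminus A} h\,d\mu\Big)\,\1,
\end{equation*}
with $\1$ the constant function, i.e. the $q$-density of $q$. Then $\1_A h/g\in L^1(X,q)$ because $\int_A|h/g|\,dq=\int_A|h|\,d\mu<\infty$, $T$ is plainly linear and positive, and $\int_X T(h)\,dq=\int_A h\,d\mu+\int_{X\setminus A}h\,d\mu=\int_X h\,d\mu$, so $T$ is trace preserving; a positive, trace-preserving map between classical systems is a channel. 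Since $p$ and $q$ are supported on $A$ the correction term vanishes on them, and $T(p)=\tfrac{dp}{dq}$, $T(q)=\1$ -- exactly $p$ and $q$ seen in $L^1(X,q)$.

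I do not expect a real obstacle here. The only point requiring a moment's care is that a Dirac mass $\delta_x$ with $x\in A$ is not $q$-absolutely continuous, so $T$ cannot be taken to be the identity on measures supported on $A$; the fix is the integrability estimate above, which says that averaging such Diracs over $A$ against $d\mu$ nonetheless lands in $L^1(X,q)$ (because $\mu$ and $q$ have the same null sets inside $A$), together with the fact that the leftover mass on $X\setminus A$ -- which $p$ and $q$ do not carry -- only needs to be routed to some fixed $q$-absolutely continuous probability measure, with $q$ itself the convenient choice. In Markov-kernel language, $R$ is the identity kernel and $T$ the kernel $x\mapsto\delta_x$ on $A$, $x\mapsto q$ off $A$; the remaining verifications are one-liners.
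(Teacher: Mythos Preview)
Your proof is correct and follows essentially the same approach as the paper: both directions are multiplication by $g=\tfrac{dq}{d\mu}$ and its pseudo-inverse $\mathbf 1_A/g$, and the equality of R\'enyi divergences is immediate since the defining formula references only $p$ and $q$. Your treatment is in fact slightly more careful than the paper's, which simply writes ``multiplication with the pseudo-inverse $(\tfrac{dq}{d\mu})^{-1}$'' for the map $L^1(X,\mu)\to L^1(X,q)$ without the correction term; that map is only substochastic when $\mu(X\setminus A)>0$, whereas your added term $\big(\int_{X\setminus A}h\,d\mu\big)\,\mathbf 1$ makes $T$ genuinely trace-preserving while leaving its action on $p$ and $q$ unchanged.
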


\begin{proof}
    To see the claim we consider the stochastic map $T:L^1(X,q) \to L^1(X,\mu)$ defined by multiplication with the Radon-Nikodym derivative $\tfrac{dq}{d\mu}$.
    This map takes $\tfrac{dp}{dq}\equiv p\in L^1(X,q)$ to $\tfrac{dp}{d\mu}\equiv p\in L^1(X,\mu)$ and $1 \equiv q$ to $\tfrac{dq}{d\mu}\equiv q$.
    A stochastic map in the other direction is given by multiplication with the pseudo-inverse $(\tfrac{dq}{d\mu})^{-1}$.
    This shows that the dichotomies $(p,q)$ viewed as elements of these two different $L^1$ spaces are interconvertible.
    It is also clear that the R\'enyi divergence is the same for both $L^1$-spaces.
\end{proof}

\begin{lemma}\label{thm:injectivity}
    Let $\mu_1,\mu_2$ be finite Borel measures on $\RR$. 
    If there are $a<b \in \RR$ so that
    \begin{equation}
        \int_{-\oo}^\oo e^{-\alpha t}\,d\mu_1(t) = \int_{-\oo}^\oo e^{-\alpha t}\,d\mu_2(t) <\oo \quad \forall\alpha\in (a,b),
    \end{equation}
    then $\mu_1=\mu_2$.
\end{lemma}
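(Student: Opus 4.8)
The plan is to reduce this to the classical uniqueness theorem for the two-sided Laplace transform. First I would observe that the hypothesis says the two-sided Laplace transforms (equivalently, bilateral Laplace transforms) $\mathcal L\mu_i(\alpha) := \int_{-\oo}^\oo e^{-\alpha t}\,d\mu_i(t)$ agree and are finite on the open interval $(a,b)$. Pick any fixed $\alpha_0 \in (a,b)$ and define new finite measures $d\nu_i(t) := e^{-\alpha_0 t}\,d\mu_i(t)$; these are genuinely finite because $\mathcal L\mu_i(\alpha_0) < \oo$. For these we have $\int e^{-\beta t}\,d\nu_i(t) = \mathcal L\mu_i(\alpha_0 + \beta)$, which is finite and agrees for $i=1,2$ whenever $\alpha_0 + \beta \in (a,b)$, i.e. for $\beta$ in an open interval around $0$. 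So without loss of generality it suffices to prove the statement under the extra assumption that the common strip of convergence contains a neighbourhood of $0$; equivalently, $\nu_1,\nu_2$ are finite measures whose moment generating functions $M_i(\beta) = \int e^{-\beta t}\,d\nu_i(t)$ are finite and equal on a neighbourhood of $0$.

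Next I would exploit analyticity: the function $\beta \mapsto \int e^{-\beta t}\,d\nu_i(t)$ is holomorphic in the open vertical strip $\{\beta \in \CC : \Re\beta \in (a - \alpha_0, b - \alpha_0)\}$ (differentiation under the integral sign is justified by the finiteness on the real sub-interval, via a standard dominated-convergence / Cauchy-estimate argument). Since the two holomorphic functions agree on a real interval inside the strip, they agree on the whole strip by the identity theorem; in particular they agree on the vertical line $\Re\beta = 0$, which is exactly the Fourier transform of the finite signed measure $\nu_1 - \nu_2$ evaluated along the imaginary axis: $\widehat{\nu_1 - \nu_2}(\xi) = \int e^{-i\xi t}\,d(\nu_1-\nu_2)(t) = 0$ for all $\xi \in \RR$. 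By injectivity of the Fourier transform on finite (signed) Borel measures on $\RR$, $\nu_1 = \nu_2$, and undoing the reweighting by $e^{\alpha_0 t}$ gives $\mu_1 = \mu_2$.

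The main technical point to be careful about — the only place where there is real content rather than bookkeeping — is the justification that the Laplace transform extends holomorphically to the full complex strip and that differentiation under the integral is legitimate. This is classical (e.g. Widder's book on the Laplace transform), and the key estimate is that for $\Re\beta$ in a compact subinterval of $(a-\alpha_0, b-\alpha_0)$ one can dominate $|e^{-\beta t}|$ by a fixed $\nu_i$-integrable function, namely a convex combination $c(e^{-\beta_- t} + e^{-\beta_+ t})$ for endpoints $\beta_\pm$ of that subinterval lying in the real interval of finiteness, since $e^{-st}$ is a convex function of $s$. I would state this as a known fact and cite it rather than reprove it. Everything else — passing to the finite measures $\nu_i$, applying the identity theorem, invoking Fourier uniqueness — is routine, so the write-up can be quite short.

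A remark worth including: this is really just the statement that a finite Borel measure on $\RR$ is determined by its Laplace transform on any open interval where the latter is finite, and the restriction to finite measures (rather than $\sigma$-finite) in the hypothesis is exactly what makes the reduction to Fourier uniqueness clean. In the intended application (proving \cref{thm:tilde} and thence \cref{thm:classical_solution}), the relevant measures $\tilde q_i$ are probability measures, so finiteness is automatic.
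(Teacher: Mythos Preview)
Your proof is correct and follows essentially the same route as the paper's: reweight by $e^{-\alpha_0 t}$ for some $\alpha_0\in(a,b)$, use analyticity of the Laplace transform in the strip $a<\Re\alpha<b$ to pass from equality on the real interval to equality on the vertical line $\Re\alpha=\alpha_0$, and conclude by injectivity of the Fourier transform on finite signed measures. The only cosmetic difference is that the paper forms the signed difference $\mu=\mu_1-\mu_2$ at the outset and works with a single measure, whereas you carry the two measures in parallel until the Fourier step; this changes nothing.
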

\begin{proof}
    Set $\mu=\mu_1-\mu_2$.
    For $\alpha\in(a,b)$ the function $t\mapsto e^{-\alpha t}$ is in $L^1(\RR,|\mu|)$ (because $|\mu|\le\mu_1+\mu_2$).
    The same holds for complex $\alpha$ if $a<\Re(\alpha)<b$.
    The Laplace transform $\int_{-\oo}^\oo e^{-\alpha t} d\mu(t)$ is analytic in the strip with real part in $(a,b)$ and, by assumption, equal to zero on this strip.
    Pick $c\in(a,b)$ and consider the finite Borel measure $\nu=e^{-ct}d\mu(t)$ whose Fourier transform is $\hat\nu(\xi)=\int_{-\oo}^\oo e^{-(c+i\xi)t}d\mu(t)$ and hence vanishes for all $\xi$.
    Injectivity of the Fourier transform implies $\nu=0$ and hence $\mu_1=\mu_2$.
\end{proof}
This Lemma is an injectivity statement for the two-sided Laplace transform. 
While this is surely known, we could not find this precise statement in the literature.

For the proof of \cref{thm:tilde}, we need the concept of the \emph{push-forward measure} from measure theory \cite{tao2011introduction}.
Let $X_1$ and $X_2$ be measurable spaces, and let $f:X_1\to X_2$ be a measurable function.
If $q$ is a measure of $X_1$, then the push-forward measure $f_*(q)$ is defined by $f_*(q)(A) = q(f^{-1}(A))$.
One has $g\in L^1(X_2,f_*(q))$ if and only if $g\circ f \in L^1(X_1,q)$. In this case it holds that
\begin{equation}\label{eq:push-forward-measure}
    \int_{X_2} g \,df_*(q) = \int_{X_1} g\circ f \,dq.
\end{equation}

\begin{proof}[Proof of \cref{thm:tilde}]
    By \cref{thm:WLOG_mu=q} we may assume $(X,\mu)=(X,q)$.
    Set $f = -\log(\tfrac{dp}{dq}) : X\to (-\oo,\oo]$.
    Define $\tilde q:=f_*(q)$ as the push-forward measure of $q$ under $f$.
    Then \eqref{eq:renyi_with_tilde} follows from \eqref{eq:push-forward-measure}:
    \begin{equation}
        Q_\alpha(p,q)=\int_X \Big(\tfrac{dp}{dq}\Big)^\alpha dq
        =\int_X e^{-\alpha f(dp/dq)} dq
        =\int_{-\oo}^\oo e^{-\alpha t}d\tilde q(t)
    \end{equation}
    
    We now show that $(p,q)$ and $(\tilde p,\tilde q)$ are interconvertible.
    We may consider the push-forward as a stochastic map $f_*: L^1(X,q)\to L^1((-\oo,\oo],\tilde{q})$.
    This definition makes use of the fact that all elements in $L^1(X,q)$ may be regarded as measures on $X$ (which are absolutely continuous with respect to $q$).\footnote{The proper definition in terms of densities is $f_*:L^1(X,q)\ni g \mapsto \tfrac{d f_*(gq)}{d\tilde q}\in L^1((-\oo,\oo],\tilde q)$.}
    Then $\tilde p = e^{-t} \tilde q$ is equal to $f_*(p)$ because
    \begin{equation}\label{eq:Radon-Niko}
        f_*(p)(A) = p(f^{-1}(A))=\int_{f^{-1}(A)}\tfrac{dp}{dq} dq = \int_{f^{-1}(A)}e^{-f} dq = \int_A e^{-t} d\tilde q(t)=\tilde p(A)
    \end{equation}
    for all Borel measurable $A\subset \RR$.
    We now define a stochastic map $R : L^1((-\oo,\oo],\tilde q) \to L^1(X,q)$ as $g \mapsto R(g) = g\circ f$.
    This definition uses that elements of $L^1$ spaces are essentially functions (in contrast to the other definitions where we viewed them as measures).
    From the definition, it is clear that $R$ preserves positivity and that $R(g)$ is always measurable.
    Since $\tilde q=f_*(q)$, $R$ satisfies $\int R(g)\,dq=\int g\, d\tilde q$ and hence maps probability densities to probability densities (therefore, $R$ is indeed a stochastic map).
    It remains to be shown that $R$ maps the dichotomy $(\tilde p,\tilde q)$ to $(p,q)$.
    In terms of densities this means that $R(t\mapsto\!e^{-t}) = \tfrac{dp}{dq}$ and $R1 =1$ $q$-almost everywhere. The latter is clear and we check the former:
    \begin{equation}
        \int_A R(t\mapsto\!e^{-t}) dq = \int_A e^{-f(x)}dq(x) = \int_{A} \frac{dp}{dq} dq=p(A).
    \end{equation}
\end{proof}

\subsection{Quantum case: Insufficiency of known quantum R\'enyi divergences}\label{sec:insufficiency_of_known}
We now turn to the quantum case. Let us start by providing examples of quantum R\'enyi divergences. The most well-known families are: 
\begin{enumerate}
\item The \emph{Petz quantum R\'enyi divergence}, defined as
\begin{equation}\label{eq:PetzQRE}
    \petz{\alpha}(\rho, \sigma) = \frac 1 {\alpha-1} \log\tr[\rho^\alpha\sigma^{1-\alpha}],
\end{equation}
where $\sigma^{1-\alpha}$ denotes the \emph{pseudo-inverse} of $\sigma$. $\petz{\alpha}$ satisfies the DPI on $\Lambda=[0,2]$ \cite[Sec.~4.4]{tomamichel}, see also \cref{sec:fdiv}.

\item The \emph{minimal quantum R\'enyi divergence} (or sandwiched R\'enyi divergence) given by
\begin{equation}\label{eq:minQRE}
    D^{\min}_\alpha(\rho,\sigma) = \frac1{\alpha-1}\log \tr\big[ \big(\sigma^{\frac{1-\alpha}{2\alpha} }\rho \sigma^{\frac{1-\alpha}{2\alpha}}\big)^\alpha\big] =: \frac{1}{\alpha-1}\log Q^{\min}_\alpha(\rho,\sigma),
\end{equation}
which satisfies the DPI on $\Lambda = [\tfrac12,\oo)$ \cite[Sec.~4.3]{tomamichel}\cite{berta_variational_2017}. 
The name is referring to the fact that among all families of additive  R\'enyi divergences, the minimal quantum R\'enyi divergence is the smallest \cite[Sec.~4.2]{tomamichel}.

\item The maximal quantum R\'enyi divergence
\begin{equation}\label{eq:maxQRE}
    D_\alpha^{\max}(\rho,\sigma) = \frac1{\alpha-1} \log \tr\big[ \sigma \big(\sigma^{-\frac12}\rho\sigma^{-\frac12} \big)^\alpha \big].
\end{equation}
It satisfies the DPI on $\Lambda=[0,2]$ (and is also only maximal on this region) \cite[Thm.~4.4]{hiai}, \cite[Sec.~4.2.3]{tomamichel}, see also \cref{sec:fdiv}.
\end{enumerate}
All three divergences are additive. Furthermore $D_\alpha^{\min}$ and $\petz{\alpha}$ coincide in the limits $\alpha\rightarrow 1,\infty$ (if the limits exist) \cite{tomamichel}:
\begin{align}
    \lim_{\alpha\rightarrow \infty } D^{\min}_\alpha(\rho,\sigma) =  \lim_{\alpha\rightarrow \infty } \petz{\alpha}(\rho,\sigma) 
    =: D_\infty(\rho,\sigma)
\end{align}
and 
\begin{align}
    \lim_{\alpha\rightarrow 1 } D^{\min}_\alpha(\rho,\sigma) =  \lim_{\alpha\rightarrow 1 } \petz{\alpha}(\rho,\sigma) =  D(\rho,\sigma).
\end{align}
The \emph{quantum max-divergence} $D_\infty(\rho,\sigma)$ can be defined as $D_\infty(\rho,\sigma) = \inf \{\lambda\ :\ \rho \leq \exp(\lambda) \sigma \}$.
In contrast to the minimal and Petz quantum R\'enyi divergences, the maximal quantum R\'enyi divergence has the limit \cite{tomamichel}
\begin{align}
    \lim_{\alpha \rightarrow 1} D^{\max}_\alpha(\rho,\sigma) = \tr[\rho \log\left(\rho^{\frac{1}{2}} \sigma^{-1} \rho^{\frac{1}{2}} \right)] =: D^{\mathrm{BS}}(\rho,\sigma),
\end{align}
where $D^{\mathrm{BS}}(\rho,\sigma)$ is known as Belavkin-Staszewski relative entropy \cite{belavkin_cast_1982}. It fulfills $D(\rho,\sigma) \leq D^{\mathrm{BS}}(\rho,\sigma)$ with equality if and only if the density matrices commute \cite{hiai}.

Apart from the given examples, a variety of quantum R\'enyi divergences has been defined in the literature. 
In \cref{sec:zoo}, we provide an overview of those quantum R\'enyi divergences known to us and summarize their most important properties.

Since quantum R\'enyi divergences fulfill the data-processing inequality, they must be invariant under unitary transformations. 
Importantly, the known R\'enyi divergences (that we discuss in \cref{sec:zoo}) are also invariant under conjugation by an anti-unitary operator:
\begin{align}
\DD_\alpha(\rho,\sigma) = \DD_\alpha(\overline{\rho},\overline{\sigma}) = \DD_\alpha(\rho^\top,\sigma^\top),
\end{align}
where $\overline{A}$ and ${A}^\top$ denote complex conjugation and transposition of an operator $A$, respectively, with respect to some chosen basis. The reason for this is that all of the given divergences are constructed as (optimizations over) traces of matrix functions of $\rho$ and $\sigma$. This property is, therefore, closely related to the fact that one can actually compute the R\'enyi divergences. 
\begin{lemma}\label{lemma:anti-unitary}
    Let $\DD_\alpha$ be a quantum R\'enyi divergence that is anti-unitarily invariant. Then $\DD_\alpha$ is not sufficient.
    \begin{proof}
        Consider a dichotomy $(\rho,\sigma)$ that is \emph{irreducible}, meaning that any operator $A$ for which $[\rho,A]=[\sigma,A]=0$ must be proportional to the identity. (Equivalently, consider a single block in the Koashi-Imoto minimal form.) By anti-unitary invariance we have $\DD_\alpha(\rho,\sigma) = \DD_\alpha(\overline\rho,\sigma)$, where complex conjugation is with respect to the eigenbasis of $\sigma$. If $\DD_\alpha$ was sufficient, \cref{thm:interconvertibility} would imply that there exists a unitary $U$ such that
        \begin{align}
            U\rho U^* = \overline{\rho},\quad U\sigma U^*  = \sigma. 
        \end{align}
        However, in general, such unitaries do not exist. As an example, consider
        \begin{align}
            \rho = \frac{1}{6}\begin{pmatrix} 
            2&i &i\\
            -i& 2& i\\
           -i&-i&2
            \end{pmatrix}
        \end{align}
        in the eigenbasis of $\sigma$ and assume all eigenspaces of $\sigma$ are $1$-dimensional. Then $U = \mathrm{diag}(e^{\mathrm i \phi_1},e^{\mathrm i \phi_2},e^{\mathrm i \phi_3})$, but $U\rho U^* = \overline{\rho}$
        requires
        \begin{align}
            e^{\mathrm i (\phi_1-\phi_2)} = -1,  e^{\mathrm i (\phi_1-\phi_3)} = -1, e^{\mathrm i (\phi_2-\phi_3)} = -1,
        \end{align}
        which is a contradiction since the first two equations imply $e^{\mathrm i (\phi_2-\phi_3)} = +1$.
    \end{proof}
\end{lemma}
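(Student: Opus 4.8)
The plan is to push the equality of divergences through the structure theorem \cref{thm:interconvertibility}. Fix the eigenbasis of $\sigma$ as the reference basis for complex conjugation, so that $\overline\sigma = \sigma$; anti-unitary invariance then gives $\DD_\alpha(\rho,\sigma) = \DD_\alpha(\overline\rho,\sigma)$ for every admissible $\alpha$. If $\DD_\alpha$ were sufficient, this equality would force $(\rho,\sigma)\leftrightarrow(\overline\rho,\sigma)$. I would therefore look for a counterexample among \emph{irreducible} dichotomies --- a single block of the Koashi--Imoto minimal form, i.e.\ such that the only operators commuting with both $\rho$ and $\sigma$ are scalar multiples of the identity --- and note that conjugation by the (anti-unitary) complex conjugation in the $\sigma$-eigenbasis preserves this property, so $(\overline\rho,\sigma)$ is irreducible as well. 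For such dichotomies \cref{thm:interconvertibility} applies with $N_1 = N_2 = 1$ and trivial $\K$-factors, producing a single unitary $U$ with $U\sigma U^* = \sigma$ and $U\rho U^* = \overline\rho$; the goal is then to choose $(\rho,\sigma)$ so that no such $U$ exists.

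The second step is the explicit obstruction. Take $\sigma$ full-rank with non-degenerate spectrum; then $U\sigma U^* = \sigma$ forces $U = \mathrm{diag}(e^{\mathrm i\phi_1},\dots,e^{\mathrm i\phi_n})$ in the $\sigma$-eigenbasis, so $(U\rho U^*)_{jk} = e^{\mathrm i(\phi_j-\phi_k)}\rho_{jk}$ and the requirement reduces to $e^{\mathrm i(\phi_j-\phi_k)}\rho_{jk} = \overline{\rho_{jk}}$ for all $j,k$. Now I would pick $\rho$ with all off-diagonal entries purely imaginary and nonzero; then $\overline{\rho_{jk}} = -\rho_{jk}$ and the system collapses to $e^{\mathrm i(\phi_j-\phi_k)} = -1$ for all $j \ne k$, which is already inconsistent in dimension $3$ because $e^{\mathrm i(\phi_1-\phi_2)}\,e^{\mathrm i(\phi_2-\phi_3)} = e^{\mathrm i(\phi_1-\phi_3)}$ would read $(-1)(-1) = -1$. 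A concrete witness is the $3\times3$ density matrix with all diagonal entries $\tfrac13$ and off-diagonal entries $\tfrac{\mathrm i}{6}$ above and $-\tfrac{\mathrm i}{6}$ below the diagonal, paired with any full-rank diagonal $\sigma$ with distinct eigenvalues; a short computation shows this $\rho$ is positive (eigenvalues $\tfrac13$ and $\tfrac{2\pm\sqrt3}{6}$), has all off-diagonal entries nonzero --- hence the pair is irreducible once $\sigma$ has simple spectrum --- and satisfies $\rho\ll\sigma$.

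I do not expect a genuine obstacle here. The only routine verifications are positivity and irreducibility of the explicit pair (an eigenvalue computation, plus the remark that a diagonal matrix commuting with $\rho$ must be scalar when every $\rho_{jk}\neq 0$) and the observation that restricting attention to a single Koashi--Imoto block loses nothing, since failure of sufficiency requires only one counterexample. All the conceptual weight lies in \cref{thm:interconvertibility}: it is exactly what upgrades ``equal divergences'' to ``unitarily conjugate'', and anti-unitary conjugation is the obstruction precisely because it is positive but not completely positive, so an anti-unitarily invariant family of R\'enyi divergences can never distinguish $(\rho,\sigma)$ from $(\overline\rho,\sigma)$.
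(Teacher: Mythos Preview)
Your proposal is correct and follows essentially the same route as the paper: reduce to an irreducible dichotomy, use anti-unitary invariance in the $\sigma$-eigenbasis to get $\DD_\alpha(\rho,\sigma)=\DD_\alpha(\overline\rho,\sigma)$, invoke \cref{thm:interconvertibility} to force a unitary $U$ diagonal in that basis, and exhibit the identical $3\times3$ matrix with purely imaginary off-diagonals to reach the phase contradiction $(-1)(-1)=-1$. Your write-up adds a couple of small checks the paper leaves implicit---that $(\overline\rho,\sigma)$ is again irreducible, the eigenvalue computation for positivity of $\rho$, and why nonzero off-diagonals plus simple $\sigma$-spectrum give irreducibility---but the argument is the same.
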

\begin{corollary}
    None of the known families of quantum R\'enyi divergence presented in \cref{sec:zoo} is sufficient. Moreover, the whole collection of quantum R\'enyi divergences taken together is also not sufficient. 
\end{corollary}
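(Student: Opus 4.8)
The plan is to obtain the corollary as an immediate consequence of \cref{lemma:anti-unitary}, supplemented by one uniform counterexample for the "taken together" part. The single structural fact that does all the work is this: every family $\DD_\alpha$ catalogued in \cref{sec:zoo} — the standard (hence Petz) and maximal quantum $f$-divergences, the sandwiched divergences, the $\alpha$-$z$ divergences, and the measured, sharp, and Kringel divergences — is defined as an optimization over traces of functional-calculus expressions in $\rho$ and $\sigma$, and each such expression is covariant under the anti-linear map $A\mapsto\overline A$ (complex conjugation in a fixed basis). Consequently $\DD_\alpha(\rho,\sigma)=\DD_\alpha(\overline\rho,\overline\sigma)=\DD_\alpha(\rho^\top,\sigma^\top)$ throughout the parameter range on which the divergence is defined. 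First I would state this invariance cleanly and verify it family by family; this is the only place genuine case-work enters, and since in each case the defining formula is manifestly transposition-invariant, I would relegate the check to \cref{sec:zoo}.

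Granting anti-unitary invariance, the non-sufficiency of each individual family follows verbatim from \cref{lemma:anti-unitary}. Concretely, I would take the irreducible qutrit dichotomy $(\rho,\sigma)$ exhibited in the proof of that lemma, with $\sigma$ of simple spectrum and $\rho$ the explicit matrix written there in the eigenbasis of $\sigma$. Anti-unitary invariance gives $\DD_\alpha(\rho,\sigma)=\DD_\alpha(\overline\rho,\sigma)$ for every admissible $\alpha$, whereas \cref{thm:interconvertibility} forbids $(\rho,\sigma)\leftrightarrow(\overline\rho,\sigma)$ since no diagonal unitary conjugates $\rho$ to $\overline\rho$. The states are full-rank and finite-dimensional, so each listed $\DD_\alpha$ is finite on a nonempty open $\alpha$-interval contained in its data-processing range; the hypotheses in the definition of sufficiency are therefore met, and sufficiency fails.

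For the collection of all these divergences taken together, the key point is that the \emph{same} pair $\{(\rho,\sigma),(\overline\rho,\sigma)\}$ witnesses the failure for every member at once: by its own anti-unitary invariance, each $\DD_\alpha$ in the list assigns equal values to the two dichotomies, so the full system of equalities holds across all divergences and all admissible $\alpha$ while the dichotomies remain non-interconvertible. Hence no sub-collection containing even one of them can be sufficient, and in particular the whole collection is not. The only real obstacle is the bookkeeping deferred to \cref{sec:zoo}: one must confirm that each listed divergence is invariant under transposition (not merely under unitary conjugation) and that each admits a nonempty open parameter interval on which it satisfies data-processing and is finite on the counterexample, so that "sufficiency" is a well-posed property to test. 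Both hold for every family we list, but these are precisely the facts that must be pinned down there rather than here.
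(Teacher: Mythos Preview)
Your proposal is correct and follows essentially the same approach as the paper: the corollary is stated immediately after \cref{lemma:anti-unitary} with no separate proof, since the paper has already remarked that all divergences in \cref{sec:zoo} are anti-unitarily invariant (being built from traces of matrix functions of $\rho$ and $\sigma$), and the single qutrit counterexample from that lemma therefore defeats every family simultaneously. Your explicit treatment of the ``taken together'' clause and the finiteness/DPI bookkeeping is a bit more careful than what the paper spells out, but the argument is the same.
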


\subsection{Sufficiency and positive maps: A conjecture}\label{sec:conjecture}
As we saw above, there is a simple way to show that a given quantum R\'enyi divergence is not sufficient: Just check if it is invariant under complex conjugation. This test rules out all of the commonly employed quantum R\'enyi divergences.
Complex conjugation is a particular example of positive, trace-preserving maps and any such map maps quantum states to quantum states. This suggests broadening the perspective and discussing interconvertibility with respect to positive maps instead of completely positive maps. 
In the following, we therefore write
\begin{align}
    (\rho_1,\sigma_1) \ \longleftrightarrow_P (\rho_2,\sigma_2)
\end{align}
if the two dichotomies can be interconverted using positive trace-preserving maps $T,R$.
This generalization, in fact, matches well with the reason why we use completely positive maps to describe quantum processes in the first place: They are required due to the possible presence of entanglement. However, a dichotomy $(\rho,\sigma)$ does not contain any information about how the given system may be correlated to another systems.

Interestingly, several of the known quantum R\'enyi divergences are known to be monotonic under positive trace-preserving maps, at least for certain ranges of their parameters. This includes the Petz R\'enyi divergences, the maximal quantum R\'enyi divergences as well as the minimal quantum R\'enyi divergences. We now explore sufficiency of R\'enyi divergences with respect to interconversion by positive maps.

Let us first observe that the Koashi-Imoto minimal form is not the right minimal form for positive instead of completely positive maps. Consider again the dichotomy $(\rho,\sigma)$ from the proof of \cref{lemma:anti-unitary}. Then we find
\begin{align}
    \left(\frac{1}{2}\rho\oplus \frac{1}{2}\overline{\rho},\frac{1}{2}\sigma\oplus\frac{1}{2}\sigma\right)\ \longleftrightarrow_P\ \left(\frac{1}{2}\rho\oplus \frac{1}{2}\rho,\frac{1}{2}\sigma\oplus\frac{1}{2}\sigma\right) \simeq (\rho\otimes \1/2,\sigma\otimes \1/2)\ \longleftrightarrow (\rho,\sigma).\nonumber
\end{align}
This provides an example where positive maps allow to interconvert a minimal form with two blocks to a minimal form with one block. It is an open problem to find a suitable minimal form for positive maps.\footnote{We suspect that the right minimal form arises from the Koashi-Imoto minimal form by mapping all unitary or anti-unitary equivalent blocks to a single block, but leave it to future work to settle this question.}

In the following, we say that a quantum R\'enyi divergence $\DD_\alpha$ can be \emph{classically simulated} if for 
every dichotomy $(\rho,\sigma)$ there exists classical dichotomy $(p,q)$ such that
\begin{align}
    \DD_\alpha(\rho,\sigma) = D_\alpha(p,q)\quad\forall \alpha\in\Lambda.
\end{align}
\begin{lemma}[\cite{hiai}, see also \cref{sec:fdiv}]
    Both the Petz and the maximal quantum R\'enyi divergences can be classically simulated.
\end{lemma}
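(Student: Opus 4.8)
The plan is to exhibit, for every dichotomy $(\rho,\sigma)$ with $\rho\ll\sigma$, an explicit \emph{classical} dichotomy that reproduces the relevant quantum R\'enyi divergence for all $\alpha\in\Lambda$ (in fact for all $\alpha\in\RR$). For the Petz divergence the right object is the \emph{Nussbaum--Szko\l a distribution}: writing spectral decompositions $\rho=\sum_i\lambda_i\proj{u_i}$ and $\sigma=\sum_j\mu_j\proj{v_j}$, set $p_{ij}=\lambda_i|\braket{u_i}{v_j}|^2$ and $q_{ij}=\mu_j|\braket{u_i}{v_j}|^2$ on the index set of pairs $(i,j)$ with $\mu_j>0$. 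For the maximal divergence a different construction is used, built from the eigenbasis of $\sigma^{-\frac12}\rho\sigma^{-\frac12}$.

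For the Petz case I would first check that $(p,q)$ is a classical dichotomy. Since $\supp(\rho)\subseteq\supp(\sigma)$, the vectors $v_j$ with $\mu_j>0$ form an orthonormal basis of a subspace containing every $u_i$ with $\lambda_i>0$, so $\sum_{j:\mu_j>0}|\braket{u_i}{v_j}|^2=1$ for such $i$; this gives $\sum_{ij}p_{ij}=\sum_i\lambda_i=1$, and similarly $\sum_{ij}q_{ij}=\sum_j\mu_j=1$. Moreover $q_{ij}=0$ forces $p_{ij}=0$ (if $\mu_j=0$ and $\lambda_i>0$ then $v_j\perp\supp(\sigma)\supseteq\supp(\rho)\ni u_i$, so $\braket{u_i}{v_j}=0$), hence $p\ll q$. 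The key identity is then the one-line computation, expanding the trace in the two eigenbases and using $\rho^\alpha=\sum_i\lambda_i^\alpha\proj{u_i}$, $\sigma^{1-\alpha}=\sum_{j:\mu_j>0}\mu_j^{1-\alpha}\proj{v_j}$,
\[
\tr[\rho^\alpha\sigma^{1-\alpha}]=\sum_{ij}\lambda_i^\alpha\mu_j^{1-\alpha}|\braket{u_i}{v_j}|^2=\sum_{ij}p_{ij}^\alpha q_{ij}^{1-\alpha},
\]
which gives $\petz\alpha(\rho,\sigma)=D_\alpha(p,q)$ for all $\alpha$.

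For the maximal divergence I would diagonalize the bounded positive operator $A:=\sigma^{-\frac12}\rho\sigma^{-\frac12}$ on $\supp(\sigma)$ as $A=\sum_k t_k\proj{w_k}$, with $\{w_k\}$ an orthonormal basis of $\supp(\sigma)$, and set $q_k:=\bra{w_k}\sigma\ket{w_k}$ and $p_k:=t_k q_k$. Because $\sigma$ is strictly positive on its support, $q_k>0$ for all $k$, so $p\ll q$ is automatic; normalization follows from $\sum_k q_k=\tr\sigma=1$ and $\sum_k p_k=\tr[\sigma A]=\tr\rho=1$. Then
\[
\sum_k p_k^\alpha q_k^{1-\alpha}=\sum_k t_k^\alpha q_k=\tr[\sigma A^\alpha]=Q^{\max}_\alpha(\rho,\sigma),
\]
so $\geom\alpha(\rho,\sigma)=D_\alpha(p,q)$ for all $\alpha$.

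All the computations are elementary; the only point that requires a little care is the support bookkeeping — choosing the index sets so that $p\ll q$ holds, that the normalizations come out correctly, and (in the Petz case) that vanishing overlaps $\braket{u_i}{v_j}=0$ do not introduce spurious mass, together with the usual conventions $0^{\alpha}=0$ and $0\cdot\infty=0$ matching the pseudo-inverse convention on the quantum side. Beyond this there is no real obstacle; the statement is the classical simulability observation of Nussbaum--Szko\l a together with its analogue for the geometric divergence, as in \cite{hiai} and spelled out in \cref{sec:fdiv}.
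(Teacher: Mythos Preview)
Your argument is correct. For the Petz divergence it is essentially identical to the paper's: both use the Nussbaum--Szko\l a distributions $P_{ij}=\lambda_i|\braket{u_i}{v_j}|^2$, $Q_{ij}=\mu_j|\braket{u_i}{v_j}|^2$ and the one-line trace computation.

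For the maximal divergence you take a genuinely different and more elementary route. The paper (in \cref{sec:fdiv}) does \emph{not} diagonalize $T_{\rho,\sigma}=\sigma^{-1/2}\rho\sigma^{-1/2}$ directly; instead it passes to $\omega=\rho+\sigma$, uses that $T_{\rho,\omega}$ is always bounded with spectrum in $[0,1]$, and builds a classical dichotomy $(p,q)=(t,1-t)$ with respect to the scalar spectral measure $\mu(A)=\tr[\omega\,P_{\rho,\omega}(A)]$. This detour buys generality: in infinite dimensions $T_{\rho,\sigma}$ need not be bounded (your parenthetical ``bounded'' is not automatic there), and even when it is, it may have continuous spectrum, so the eigenvector expansion $A=\sum_k t_k\proj{w_k}$ is unavailable. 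Your construction, by contrast, is shorter and fully explicit, gives a finite classical alphabet, and verifies $\tr[\sigma A^\alpha]=\sum_k t_k^\alpha q_k$ without appealing to \cite{hiai} for the final equality; it is the right argument in finite dimensions (and whenever $T_{\rho,\sigma}$ is compact), which suffices for the use the paper makes of the lemma. If you want to match the paper's level of generality, replace the eigenbasis sum by the spectral measure of $T_{\rho,\sigma}$ (or adopt the $\omega$-trick); otherwise just note that you are working in finite dimensions.
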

For the case of the Petz R\'enyi divergences, the associated classical dichotomies are known as \emph{Nussbaum-Szko\l a distributions} \cite{NussbaumSzkola}.
Recall from \cref{thm:commnogo} and \cref{cor:positive-commnogo} that non-commuting dichotomies cannot be interconverted with classical dichotomies using  positive maps. We thus find:
\begin{corollary}\label{thm:insufficient}
    Neither the Petz quantum R\'enyi divergences nor the maximal R\'enyi divergence are sufficient.
\end{corollary}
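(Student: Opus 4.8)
The statement to prove is \cref{thm:insufficient}: neither the Petz nor the maximal quantum Rényi divergences are sufficient.

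\bigskip

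The plan is to combine the preceding \cref{lemma:anti-unitary}-style reasoning with the fact, just recalled, that both of these divergences admit a classical simulation. First I would take the lemma on classical simulation as given: for every dichotomy $(\rho,\sigma)$ there is a classical dichotomy $(p,q)$ — the Nussbaum--Szko\l a distribution in the Petz case, and the analogous construction for the maximal divergence — with $\DD_\alpha(\rho,\sigma) = D_\alpha(p,q)$ for all admissible $\alpha$. Now suppose, for contradiction, that $\DD_\alpha$ (either $\petz\alpha$ or $\geom\alpha$) were sufficient. Pick any non-commuting dichotomy $(\rho,\sigma)$; such dichotomies exist already in dimension two. Let $(p,q)$ be the simulating classical dichotomy. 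Then $\DD_\alpha(\rho,\sigma) = D_\alpha(p,q)$ on an open interval of $\alpha$ inside the data-processing range $\Lambda$, and since the classical $D_\alpha$ coincides with $\DD_\alpha$ evaluated on the commuting embedding of $(p,q)$, the sufficiency hypothesis forces $(\rho,\sigma) \leftrightarrow (p,q)$ as dichotomies of quantum states (embedding $(p,q)$ as a commuting pair of density matrices).

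\bigskip

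The second step is to derive the contradiction from \cref{thm:commnogo} together with \cref{cor:positive-commnogo}: a non-commuting dichotomy cannot be interconverted with any commuting (in particular classical) dichotomy, even allowing positive trace-preserving maps and \emph{a fortiori} with completely positive ones. Since $(p,q)$ is commutative and $(\rho,\sigma)$ is not, no interconverting channels $T,R$ can exist, contradicting what sufficiency gave us. Hence $\DD_\alpha$ is not sufficient. The same argument works verbatim for both $\petz\alpha$ and $\geom\alpha$ because the only inputs are (i) the existence of a classical simulation and (ii) the commutativity obstruction; one only has to be mildly careful that the open interval of $\alpha$ on which equality holds lies inside the respective data-processing range ($[0,2]$ for both $\petz\alpha$ and $\geom\alpha$), which it does.

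\bigskip

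I do not expect a serious obstacle here — the two ingredients are both already established in the excerpt (the classical-simulation lemma is quoted from \cite{hiai}, and \cref{thm:commnogo} with \cref{cor:positive-commnogo} is proved in Section~\ref{sec:structure}). The one point that needs a line of care is making sure "sufficiency" is applied correctly: the definition requires finiteness of $\DD_\alpha$ on an open subinterval of $\Lambda$, so I would simply note that for a non-commuting dichotomy $(\rho,\sigma)$ one can choose $\sigma$ faithful (or restrict to the support) so that $\petz\alpha$ and $\geom\alpha$ are finite near $\alpha=1$, and likewise $D_\alpha(p,q)$ is finite there since $p\ll q$; this is automatic from the matching values anyway. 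With that remark in place the proof is essentially two sentences.
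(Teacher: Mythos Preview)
Your proposal is correct and follows essentially the same route as the paper: the corollary is deduced in one line from the classical-simulation lemma just quoted together with \cref{thm:commnogo} (and \cref{cor:positive-commnogo}), exactly as you outline. Your additional remarks about finiteness on an open subinterval and embedding $(p,q)$ as commuting density matrices are fine but not strictly needed, since in finite dimensions with $\rho\ll\sigma$ both $\petz\alpha$ and $\geom\alpha$ are finite on all of $[0,2]$.
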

In fact, both families arise in the context of \emph{quantum $f$-divergences} \cite{hiai}, all of which can be classically simulated and hence are not sufficient, see \cref{sec:fdiv}. 
To our knowledge, no other quantum R\'enyi divergence is known to be classically simulable.

In particular, for $\alpha\in [1/2,1)\cup(1,\infty)$ we have $D^{\min}_\alpha(\rho,\sigma) \leq \petz\alpha(\rho,\sigma)$ with equality if and only if $[\rho,\sigma]=0$ (as seen above, the two quantities coincide for arbitrary dichotomies in the limits $\alpha\rightarrow 1,\infty$). 
This follows from the Araki-Lieb-Thirring inequality \cite{araki_inequality_1990,lieb_inequalities_2005} and its equality conditions \cite{hiai_equality_1994}, see also \cite{datta_limit_2014,sutter_approximate_2018}.
Since we cannot use the Nussbaum-Szko\l a distributions to show an analogous statement for $D^{\min}_\alpha$, this family might, therefore, still be sufficient.
We will see below that $D^{\min}_\alpha$ is not classically simulable by constructing a restricted class of non-commuting states for which there cannot exist any (sufficiently regular) probability measure with matching R\'enyi divergences. This leads us to our conjecture:
 
\begin{conjecture}\label{con:explicit_form}
    Let $(\rho_1,\sigma_1)$ and $(\rho_2,\sigma_2)$ be pairs of density operators on quantum system $S_1$ and $S_2$.
    Let $(a,b)$, $\tfrac12\leq a<b$, be any interval on which the minimal quantum R\'enyi divergences of both dichotomies are finite.
    Then the dichotomies are interconvertible via positive, trace-preserving maps if and only if they have the same minimal quantum R\'enyi divergences on this interval, i.e.,
    \begin{equation}\label{eq:explicit_form}
        (\rho_1,\sigma_1)\leftrightarrow(\rho_2,\sigma_2) \iff D^{\min}_\alpha(\rho_1,\sigma_1)=D^{\min}_\alpha(\rho_2,\sigma_2)<\infty \ \ \forall \alpha\in(a,b).
    \end{equation}
\end{conjecture}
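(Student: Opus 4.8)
The plan is to follow the classical blueprint of \cref{thm:tilde} and \cref{thm:classical_solution}: identify a canonical minimal form for the relation $\longleftrightarrow_P$, show that $\sand\alpha$ restricted to an open interval reconstructs it, and deduce the converse from monotonicity. The ``only if'' half of \eqref{eq:explicit_form} is immediate, since $\sand\alpha$ obeys data processing under positive trace-preserving maps for every $\alpha\in[\tfrac12,\oo)$; so the work lies entirely in the ``if'' direction.

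For that direction I would first reduce to the Koashi-Imoto minimal form. One checks that $\sand\alpha$ is additive over direct sums and blind to the $\theta$-independent factors $\omega_j$ of \cref{thm:KI}, so that for a dichotomy with minimal form $\hat\rho=\bigoplus_j p^\rho_j\rho_j$, $\hat\sigma=\bigoplus_j p^\sigma_j\sigma_j$ one has $Q^{\min}_\alpha(\rho,\sigma)=\sum_j (p^\rho_j)^\alpha (p^\sigma_j)^{1-\alpha}\,Q^{\min}_\alpha(\rho_j,\sigma_j)$; hence $\sand\alpha$ sees only the minimal form. The central claim is then that $\alpha\mapsto \sand\alpha(\rho,\sigma)$, known on an open interval in $[\tfrac12,\oo)$, determines the Koashi-Imoto minimal form up to a reordering of blocks together with a unitary or anti-unitary conjugation inside each block. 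Granting the claim, the argument closes as in \cref{thm:classical_solution}: equality of the $\sand\alpha$ on $(a,b)$ makes the two minimal forms block-wise unitarily-or-anti-unitarily equivalent with matching weights; unitary conjugation is completely positive and anti-unitary conjugation (complex conjugation in the eigenbasis of the relevant $\sigma_j$) is positive and trace-preserving, so a direct sum of such maps, composed with the channels $\iota,\pi$ of \eqref{eq:normal_intercon} exactly as in \cref{thm:interconvertibility}, interconverts $(\rho_1,\sigma_1)$ and $(\rho_2,\sigma_2)$ by positive trace-preserving maps. As a by-product this supplies the correct minimal form for $\longleftrightarrow_P$ --- the Koashi-Imoto form with all unitarily or anti-unitarily equivalent blocks merged and their weights added, as anticipated after \cref{lemma:anti-unitary}.

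The central claim would be proved in two stages. First, a \emph{reduction to a single irreducible block}: using the direct-sum formula above, analyze the $\alpha\to\oo$ asymptotics of $Q^{\min}_\alpha$ --- as in the complete-monotonicity analysis of \cref{sec:CM} --- so that the leading exponential rate recovers $\max_j\bigl(\log p^\rho_j-\log p^\sigma_j+D_\oo(\rho_j,\sigma_j)\bigr)$, peel off the dominant contribution, and induct on the number of blocks; this is the role the push-forward $\tilde q$ plays automatically in \cref{thm:tilde}. Second, the \emph{single-block case}: show that $Q^{\min}_\alpha(\rho,\sigma)$ on an open interval determines an irreducible dichotomy $(\rho,\sigma)$ up to a unitary or anti-unitary commuting with $\sigma$. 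When $\rho=\proj\psi$ is pure this is elementary: diagonalizing $\sigma=\sum_k\lambda_k\proj k$ gives $Q^{\min}_\alpha(\rho,\sigma)^{1/\alpha}=\sum_k \bigl(|\langle k|\psi\rangle|^2/\lambda_k\bigr)\lambda_k^{1/\alpha}$, a generalized Dirichlet series in $1/\alpha$ whose exponents $\{\lambda_k\}$ and coefficients $\{|\langle k|\psi\rangle|^2\}$ are recovered uniquely from the values on $(a,b)$ by analyticity and linear independence of exponentials --- and that data is exactly $\psi$ up to phases, i.e.\ up to a unitary or anti-unitary fixing $\sigma$. The same computation settles the case $\rho$ pure with $[\rho,\sigma]=0$.

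The general mixed, non-commuting single-block case is where the approach stalls, and it is the main obstacle. There is no Laplace- or Dirichlet-type representation of $Q^{\min}_\alpha(\rho,\sigma)$ at hand --- equivalently, by \cref{thm:commnogo} and \cref{cor:positive-commnogo}, $\sand\alpha$ is not classically simulable for a non-abelian dichotomy --- so reconstructing $(\rho,\sigma)$ from the $\alpha$-dependence of the eigenvalues of $\sigma^{\frac{1-\alpha}{2\alpha}}\rho\,\sigma^{\frac{1-\alpha}{2\alpha}}$ seems to require a genuinely new ingredient: either a rigidity argument extracting $\rho$ and $\sigma$ perturbatively from the small-$(\alpha-1)$ and large-$\alpha$ expansions of $Q^{\min}_\alpha$, or a replacement for the missing integral transform. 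A single non-abelian dichotomy whose $\sand\alpha$ is classically simulable --- equivalently, one satisfying all the inequalities \eqref{eq:inequalities-intro} --- would refute the statement, and the numerical evidence below is consistent with no such dichotomy existing.
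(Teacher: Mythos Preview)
The statement is \cref{con:explicit_form}, which the paper leaves as an open conjecture; there is no proof in the paper to compare against. What the paper does provide is partial evidence --- \cref{thm:commuting_q_states,thm:solution_pure_case,thm:R\'enyis_commutativity} --- and your outline recovers exactly those cases: your Dirichlet-series analysis of $Q^{\min}_\alpha(\proj\psi,\sigma)^{1/\alpha}$ is essentially the proof of \cref{thm:solution_pure_case}, and your remark on the commuting case is \cref{thm:commuting_q_states}. You are also honest that the mixed non-commuting irreducible block is where the argument stalls, and that is precisely the gap the paper cannot close either.

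Two genuine issues in your outline deserve flagging. First, your ``central claim'' as stated --- that $\sand\alpha$ on $(a,b)$ determines the Koashi-Imoto minimal form up to reordering and block-wise (anti-)unitary conjugation --- is literally false, and the paper's own example just before \cref{con:explicit_form} shows why: $(\tfrac12\rho\oplus\tfrac12\bar\rho,\tfrac12\sigma\oplus\tfrac12\sigma)$ and $(\rho,\sigma)$ have identical $\sand\alpha$ but different numbers of Koashi-Imoto blocks. You gesture at the fix (merge (anti-)unitarily equivalent blocks and add their weights), but the claim has to be formulated for that merged object from the start, and the ``peeling off'' step must then separate blocks only up to that coarser equivalence --- which makes the induction more delicate than the classical $\tilde q$ construction it is meant to mimic. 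Second, your peeling argument appeals to the $\alpha\to\oo$ asymptotics of $Q^{\min}_\alpha$, but the hypothesis gives you $\sand\alpha$ only on a bounded interval $(a,b)$; you would first need an analytic-continuation argument to access large $\alpha$, and even then identifying and subtracting the dominant block requires already knowing that block's $Q^{\min}_\alpha$ exactly, which is circular unless the single-block rigidity is established first. In short, your proposal is a reasonable roadmap that reproduces the paper's partial results and correctly isolates the open problem, but it does not advance beyond what the paper proves.
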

As in the classical case, a positive resolution of the conjecture directly implies a corresponding no-go result for catalysis, because $D^{\min}_\alpha$ is additive over tensor products:
\begin{corollary}[No catalysis]\label{cor:no-cat}
    If \cref{con:explicit_form} is true and $D^{\min}_\alpha(\rho_1\otimes \omega,\sigma_1\otimes \chi) <\infty$ for $\alpha\in(a,b)$,
    \begin{align}
        (\rho_1\otimes \omega,\sigma_1\otimes \chi)\ \longleftrightarrow_P\ (\rho_2\otimes\omega,\sigma_2\otimes\chi) \quad \Rightarrow\quad (\rho_1,\sigma_1) \ \longleftrightarrow_P\ (\rho_2,\sigma_2).
    \end{align}
\end{corollary}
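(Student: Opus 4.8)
The plan is to reduce the statement to a direct application of \cref{con:explicit_form} together with the additivity and monotonicity of $D^{\min}_\alpha$. First I would record the ingredient that $D^{\min}_\alpha$ satisfies the data-processing inequality under \emph{positive} trace-preserving maps for $\alpha$ in the relevant range $(a,b)\subseteq[\tfrac12,\oo)$ -- this is precisely the monotonicity property of the minimal quantum R\'enyi divergences referenced in \cref{sec:conjecture} and collected in \cref{sec:zoo}. Granting this, suppose $(\rho_1\ox\omega,\sigma_1\ox\chi)\longleftrightarrow_P(\rho_2\ox\omega,\sigma_2\ox\chi)$ via positive trace-preserving maps $T$ and $R$. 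Applying $T$ to the first dichotomy and $R$ to the second and invoking data processing in both directions yields
\[
    D^{\min}_\alpha(\rho_1\ox\omega,\sigma_1\ox\chi)=D^{\min}_\alpha(\rho_2\ox\omega,\sigma_2\ox\chi)\qquad\forall\alpha\in(a,b).
\]

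Next I would use additivity of $D^{\min}_\alpha$ over tensor products, $D^{\min}_\alpha(\rho_i\ox\omega,\sigma_i\ox\chi)=D^{\min}_\alpha(\rho_i,\sigma_i)+D^{\min}_\alpha(\omega,\chi)$ for $i=1,2$. Since R\'enyi divergences of states are nonnegative (the positivity axiom), the assumed finiteness of $D^{\min}_\alpha(\rho_1\ox\omega,\sigma_1\ox\chi)$ on $(a,b)$ forces both $D^{\min}_\alpha(\rho_1,\sigma_1)$ and $D^{\min}_\alpha(\omega,\chi)$ to be finite there. In particular the finite term $D^{\min}_\alpha(\omega,\chi)$ may legitimately be cancelled from both sides of the displayed identity, giving
\[
    D^{\min}_\alpha(\rho_1,\sigma_1)=D^{\min}_\alpha(\rho_2,\sigma_2)\qquad\forall\alpha\in(a,b),
\]
with the common value finite (so that $D^{\min}_\alpha(\rho_2,\sigma_2)<\oo$ on $(a,b)$ as well).

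Finally, $(a,b)$ is an interval contained in $[\tfrac12,\oo)$ on which the minimal quantum R\'enyi divergences of both $(\rho_1,\sigma_1)$ and $(\rho_2,\sigma_2)$ are finite and coincide, so \cref{con:explicit_form} applies and delivers $(\rho_1,\sigma_1)\longleftrightarrow_P(\rho_2,\sigma_2)$, which is the claim. I do not expect a genuine obstacle: this is the standard "an additive monotone cannot be changed by a catalyst" argument, exactly parallel to the classical no-catalysis remark in the introduction. The only points requiring a little care are the finiteness bookkeeping -- ensuring $D^{\min}_\alpha(\omega,\chi)<\oo$ so that the cancellation is valid -- and making sure the invoked data-processing inequality for $D^{\min}_\alpha$ holds under positive (not merely completely positive) trace-preserving maps on the whole interval $(a,b)$, which is why the hypothesis fixes $a\ge\tfrac12$.
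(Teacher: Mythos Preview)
Your proposal is correct and follows exactly the approach the paper indicates: the paper does not give a separate proof but simply notes that the corollary is immediate ``because $D^{\min}_\alpha$ is additive over tensor products.'' You have merely written out the standard details---data processing under positive trace-preserving maps, additivity, the finiteness bookkeeping for cancelling $D^{\min}_\alpha(\omega,\chi)$, and the final application of \cref{con:explicit_form}---all of which match the intended one-line argument.
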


The no catalysis corollary offers a way to falsify our conjecture: It is sufficient to find a single non-trivial example of catalysis for interconvertibility via positive maps to rule out sufficiency for any additive divergence. 
On the other hand, we believe that it may be possible to prove a no-go result for catalysis in the quantum case without first proving \cref{con:explicit_form}, possibly by making use of the normal form for interconvertible dichotomies, see \cref{sec:structure}.
Such a result would certainly constitute evidence in favor of our conjecture.
However, we have not been able to prove such a result.
In fact, even in the classical case, we are not aware of a proof that does not already also essentially imply the sufficiency of R\'enyi divergences. We therefore leave it as an open problem to prove that there is no catalysis for the interconvertibility of quantum dichotomies via positive maps.

We also leave as an open problem to find a way to create numerical examples of dichotomies $(\rho_1,\sigma_1)$ and $(\rho_2,\sigma_2)$ for which $D^{\min}_\alpha(\rho_1,\sigma_1) = D^{\min}_\alpha(\rho_2,\sigma_2)$ for all $\alpha\in\Lambda$, but for which one does not know a priori that $(\rho_1,\sigma_1)\leftrightarrow_P (\rho_2,\sigma_2)$.

In general $D^{\min}_\alpha(\rho,\sigma) \neq D^{\min}_\alpha(\sigma,\rho)$. We have stated the conjecture with only one ordering of the states in the R\'enyi divergences, as this is sufficient in the classical case. Indeed, for commuting states, we have 
\begin{align}
D^{\min}_\alpha(\rho,\sigma) = \frac{\alpha}{1-\alpha}D^{\min}_{1-\alpha}(\sigma,\rho)
\end{align}
if  $\rho\ll\sigma\ll \rho$.  Therefore, as long as the R\'enyi divergence is finite for $\alpha\in(0,1)$, it is not necessary to consider both orderings explicitly. 
It can be shown that this relation holds \emph{if and only if} $[\rho,\sigma]=0$ using the Araki-Lieb-Thirring inequality and its equality condition. 
In particular, if one demands $D^{\min}_\alpha(\rho_1,\sigma_1) = D^{\min}_\alpha(\rho_2,\sigma_2)$ \emph{and} $D^{\min}_\alpha(\sigma_1,\rho_1) = D^{\min}_\alpha(\sigma_2,\rho_2)$, then $[\rho_1,\sigma_1]=0$ implies $[\rho_2,\sigma_2]=0$. We prove a (much) weaker statement of this form, see \cref{thm:R\'enyis_commutativity} below.
This may indicate that the corresponding quantum result requires a condition about equality of R\'enyi divergences with respect to both orderings.

We will now present affirmative results to weaker statements supporting the idea that \cref{con:explicit_form} should be true.
We say that the conjecture holds for two pairs of states (or for a certain class of pairs of states) if \eqref{eq:explicit_form} holds for these states. 
As a corollary of the complete solution for the classical case, we get:

\begin{lemma}\label{thm:commuting_q_states}
    The conjecture holds for commuting pairs of states.
\end{lemma}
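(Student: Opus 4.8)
The plan is to reduce the commuting quantum case to the already-proven classical case via the structure theory developed in Section~\ref{sec:structure}. Suppose $(\rho_1,\sigma_1)$ and $(\rho_2,\sigma_2)$ are dichotomies of commuting density operators with $\rho_i\ll\sigma_i$. Because $[\rho_i,\sigma_i]=0$, each pair is jointly diagonalizable, so there are orthonormal bases in which $\rho_i=\sum_k p^{(i)}_k\ketbra kk$ and $\sigma_i=\sum_k q^{(i)}_k\ketbra kk$; these define classical dichotomies $(\mathbf p_i,\mathbf q_i)$ of probability vectors with $\mathbf p_i\ll\mathbf q_i$. By the defining property of a family of R\'enyi divergences (that it reduces to the classical R\'enyi divergences on commuting states), we have $D^{\min}_\alpha(\rho_i,\sigma_i)=D_\alpha(\mathbf p_i,\mathbf q_i)$ for all $\alpha$ in the interval $(a,b)$.

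Now assume the hypothesis of the conjecture, namely $D^{\min}_\alpha(\rho_1,\sigma_1)=D^{\min}_\alpha(\rho_2,\sigma_2)<\infty$ for all $\alpha\in(a,b)$. Then $D_\alpha(\mathbf p_1,\mathbf q_1)=D_\alpha(\mathbf p_2,\mathbf q_2)<\infty$ on the open interval $(a,b)$, so \cref{thm:classical_solution} (equivalently \cref{thm:classical_solution_vecs}) provides stochastic matrices $T$, $R$ interconverting $(\mathbf p_1,\mathbf q_1)$ and $(\mathbf p_2,\mathbf q_2)$. A stochastic matrix, viewed as a map between the diagonal (abelian) subalgebras and then extended to all of $M_{n_i}(\CC)$ by first pinching to the diagonal in the chosen eigenbasis, is a completely positive trace-preserving map; call these extensions $\hat T$ and $\hat R$. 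Since $\rho_i,\sigma_i$ are diagonal in the relevant bases, $\hat T(\rho_1)=\rho_2$, $\hat T(\sigma_1)=\sigma_2$, $\hat R(\rho_2)=\rho_1$, $\hat R(\sigma_2)=\sigma_1$. Thus $(\rho_1,\sigma_1)\leftrightarrow(\rho_2,\sigma_2)$ via completely positive maps, hence also via positive maps, establishing the nontrivial direction of \eqref{eq:explicit_form}. The reverse direction is immediate from the data-processing inequality for $D^{\min}_\alpha$ on $\Lambda=[\tfrac12,\oo)\supseteq(a,b)$, which is part of the assumption $\tfrac12\le a<b$.

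There is essentially no hard obstacle here: the only point requiring a little care is bookkeeping about how a classical stochastic map is promoted to a quantum channel between matrix algebras of possibly different dimension, and the observation that "pinch, then apply the stochastic map, then embed diagonally" is genuinely CPTP and acts correctly on the (diagonal) states in question. One should also note that the classical interconvertibility result is stated for probability measures on general measure spaces, so specializing to finite probability vectors via \cref{thm:classical_solution_vecs} is the cleanest route. No appeal to the positive-map refinement is actually needed for commuting states, since the classical channels already lift to completely positive ones — this is consistent with the fact that, by \cref{thm:commnogo} and \cref{cor:positive-commnogo}, positivity versus complete positivity makes no difference when an abelian algebra is involved.
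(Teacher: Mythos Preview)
Your argument is correct and is exactly what the paper intends: the lemma is stated in the paper as an immediate corollary of the classical result (\cref{thm:classical_solution}), with no further proof given. Your reduction via joint diagonalization, application of the classical sufficiency theorem, and lifting of the stochastic maps to CPTP maps (pinch-then-embed) is the natural way to spell out that corollary.
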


Under the additional assumption that one state is pure, the conjecture is true:

\begin{lemma}\label{thm:solution_pure_case}
    The conjecture is true for pairs $(\rho_1,\sigma_1)$ and $(\rho_2,\sigma_2)$ of density operators if the $\rho_i = \kettbra{\psi^{(i)}}$ are pure states.
\end{lemma}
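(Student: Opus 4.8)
\textbf{Proof plan for \cref{thm:solution_pure_case}.}
The plan is to reduce the pure-state case to an explicit classical computation using the sandwiched R\'enyi divergence formula for a pure numerator state, and then invoke \cref{thm:classical_solution} (the solved classical case) together with \cref{thm:interconvertibility}. First I would record the key simplification: if $\rho = \kettbra{\psi}$ is pure, then $Q^{\min}_\alpha(\rho,\sigma) = \tr[(\sigma^{\frac{1-\alpha}{2\alpha}}\kettbra{\psi}\sigma^{\frac{1-\alpha}{2\alpha}})^\alpha]$, and since $\sigma^{\frac{1-\alpha}{2\alpha}}\kettbra\psi\sigma^{\frac{1-\alpha}{2\alpha}}$ is rank one with the single nonzero eigenvalue $\braket\psi{\sigma^{\frac{1-\alpha}\alpha}}{\psi}=\bra\psi \sigma^{\frac{1-\alpha}{\alpha}}\ket\psi$, one gets the clean closed form
\begin{equation}
    D^{\min}_\alpha(\kettbra\psi,\sigma) = \frac{\alpha}{\alpha-1}\log\bra\psi\sigma^{\frac{1-\alpha}{\alpha}}\ket\psi.
\end{equation}
Writing $\sigma=\sum_k q_k\proj{k}$ in its eigenbasis and $c_k := |\braket{k}{\psi}|^2$, this becomes $D^{\min}_\alpha(\rho,\sigma) = \frac{\alpha}{\alpha-1}\log\sum_k c_k q_k^{\frac{1-\alpha}{\alpha}}$. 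The crucial observation is that up to the reparametrization $\beta = 1/\alpha$ (equivalently $\frac{1-\alpha}\alpha = \beta-1$ and $\frac{\alpha}{\alpha-1} = \frac{1}{1-\beta}=-\frac{1}{\beta-1}$), this is exactly $-D_\beta(q,c)$ for the classical dichotomy formed by the probability vector $q$ (the spectrum of $\sigma$) and the vector $c=(c_k)_k$ — note $\sum_k c_k = \braket\psi\psi = 1$, so $c$ is itself a probability vector, and $q\ll c$ is automatic on the support of $\psi$ since we assume $\rho\ll\sigma$. More precisely, $D^{\min}_\alpha(\kettbra\psi,\sigma) = D_{1/\alpha}(c,q)\cdot\frac{\alpha-1}{\alpha}\cdot\frac{\alpha}{\alpha-1}$ after matching $Q$-functions: $Q^{\min}_\alpha(\rho,\sigma) = \sum_k c_k^{1}q_k^{\frac{1-\alpha}{\alpha}}$ which is $Q_{\beta}(c,q)$ with $\beta=1/\alpha$ only if we also swap — I would carefully pin down the exact identity $Q^{\min}_\alpha(\kettbra\psi,\sigma) = Q_{\beta}(p,q)$ with $\beta = 1/\alpha$, $p_k = c_k$, by direct comparison with \eqref{eq:vectorQRE}; the point is that $D^{\min}_\alpha(\rho_i,\sigma_i)$ on an open $\alpha$-interval determines and is determined by a classical R\'enyi divergence $D_\beta(p_i,q_i)$ on the corresponding open $\beta$-interval.

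Given this dictionary, the argument runs as follows. Assume $D^{\min}_\alpha(\rho_1,\sigma_1) = D^{\min}_\alpha(\rho_2,\sigma_2)<\infty$ for all $\alpha\in(a,b)$. By the identity above, the classical dichotomies $(p_1,q_1)$ and $(p_2,q_2)$ — where $q_i$ is the spectrum of $\sigma_i$ and $p_i$ is the vector of overlaps of $\psi^{(i)}$ with the eigenbasis of $\sigma_i$ — satisfy $D_\beta(p_1,q_1) = D_\beta(p_2,q_2)<\infty$ on the open interval obtained by applying $\alpha\mapsto 1/\alpha$ to $(a,b)$. By \cref{thm:classical_solution}, these classical dichotomies are interconvertible by stochastic maps. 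Now I must translate this classical interconvertibility back to the quantum level. The key is that for a pure-state dichotomy $(\kettbra{\psi},\sigma)$, the quantum dichotomy itself is, up to unitary equivalence, determined by the classical data $(p,q)$: one can diagonalize $\sigma$ and rotate $\ket\psi$ by phases to a canonical form with $\braket k\psi = \sqrt{c_k}\ge 0$, so the Koashi-Imoto minimal form (or simply a direct unitary normal form) of $(\kettbra\psi,\sigma)$ depends only on $(p,q) = ((c_k),(q_k))$. Hence equality of the classical data for $i=1,2$ already yields a unitary $U$ with $U\rho_1 U^* = \rho_2$, $U\sigma_1 U^* = \sigma_2$ after padding the smaller system with a zero eigenvalue block of $\sigma$ of appropriate dimension (using that kernel components of $\sigma$ that also lie in the kernel of $\rho$ can be added or removed by the isometric embeddings $\iota,\pi$ of \eqref{eq:normal_intercon}); in particular the two dichotomies are interconvertible even by completely positive maps, a fortiori by positive ones. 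Actually one need not even go through \cref{thm:classical_solution} in full strength here: since for pure $\rho$ the classical ``normal form'' of $(p,q)$ is just the sorted pair $(c_k,q_k)$ up to merging equal-ratio atoms, and \cref{thm:tilde} tells us this sorted pair is recoverable from the R\'enyi divergences on an open interval, the matching of $D^{\min}_\alpha$ on $(a,b)$ forces the canonical forms to agree, hence the original dichotomies to be unitarily equivalent after trivial padding.

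The main obstacle, and the step I would spend the most care on, is the bookkeeping in the reduction to the classical case: getting the reparametrization $\alpha\leftrightarrow 1/\alpha$ and the prefactor $\frac\alpha{\alpha-1}$ exactly right so that an \emph{open interval} of finite $D^{\min}_\alpha$ maps to an \emph{open interval} of finite classical $D_\beta$ (one must check that $\alpha=1$ is handled correctly in the limit and that the endpoint behaviour does not shrink the interval to a point), and especially the claim that the classical data $(p,q)$ is a \emph{complete} invariant of the pure dichotomy up to unitary equivalence and trivial kernel padding — i.e. that no quantum information beyond these overlaps and eigenvalues survives in the Koashi-Imoto minimal form when $\rho$ is pure. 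This last point is where I would lean on \cref{thm:interconvertibility}: compute the Koashi-Imoto decomposition of $(\kettbra\psi,\sigma)$ explicitly, observing that purity of $\rho$ forces each $\rho_{j|\theta}$-type component to be one-dimensional in a strong sense, so that the minimal form is literally a commuting (classical) pair, and then the conjecture for pure states collapses onto \cref{thm:commuting_q_states}. If that collapse works cleanly it gives the shortest proof; otherwise the explicit-normal-form route sketched above is the fallback.
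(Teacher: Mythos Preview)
Your primary route---identify $D^{\min}_\alpha(\kettbra\psi,\sigma)$ with a classical R\'enyi divergence $D_\beta(p,q)$ after reparametrizing $\beta=1/\alpha$ and then invoke \cref{thm:classical_solution}---does not go through. The expression $\sum_k c_k\, q_k^{(1-\alpha)/\alpha}$ is never of the form $\sum_k p_k^\beta \tilde q_k^{1-\beta}$ for probability vectors $p,\tilde q$, because the exponent on $c_k$ is fixed at $1$ independently of $\alpha$; this is a structural obstruction, not bookkeeping, so the classical sufficiency theorem cannot be applied directly. Your Koashi--Imoto fallback also fails: the minimal form of a pure-state dichotomy is \emph{not} commuting in general (take $\rho=\kettbra+$, $\sigma=\mathrm{diag}(1/3,2/3)$ on $\CC^2$, which is already in minimal form and does not commute), so there is no collapse onto \cref{thm:commuting_q_states}. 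There is also a degeneracy issue in your normal-form claim: when $\sigma$ has repeated eigenvalues, the individual overlaps $c_k=|\braket{k}{\psi}|^2$ depend on the choice of eigenbasis within eigenspaces; only the aggregated $u_i=\bra\psi P_i\ket\psi$ with $P_i$ the spectral projectors onto \emph{distinct} eigenvalues $v_i$ are invariants.

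The approach that works is the one you mention only in passing, and it is exactly what the paper does. From equality of $Q^{\min}_\alpha=\bigl(\sum_i \tfrac{u_i}{v_i}\, v_i^{1/\alpha}\bigr)^{\!\alpha}$ on an open interval, take $\alpha$-th roots, analytically continue, substitute $1/\alpha\mapsto\alpha$, and use linear independence of the exponentials $\alpha\mapsto v_i^{\alpha}$ to conclude that the multisets $\{(u_i,v_i)\}$ for the two dichotomies agree up to permutation. (This is the Laplace-transform injectivity you allude to via \cref{thm:tilde}, applied directly rather than through a classical R\'enyi identity.) The paper then constructs the interconverting channels by hand: a unitary between the spans of the projected vectors $P_i\up j\ket{\psi\up j}$, which exists precisely because the $(u_i,v_i)$ data match, combined with an arbitrary trace-preserving channel on the orthogonal complements---these complements carry parts of $\sigma_1,\sigma_2$ of equal trace but possibly different spectra, so ``padding with a zero eigenvalue block'' is not enough and a genuine non-unitary channel is needed there.
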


\begin{lemma}\label{thm:R\'enyis_commutativity}
    Let $(\rho_1,\sigma_1)$ and $(\rho_2,\sigma_2)$ be pairs of density operators with equal minimal quantum R\'enyi divergences on finite-dimensional quantum systems.
    If $(\rho_1,\sigma_1)$ commute and $\rho_2$ is pure, the conjecture is true.
\end{lemma}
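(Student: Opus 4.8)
The plan is to combine the commutativity rigidity of the minimal Rényi divergences with the already-established pure-state case (\cref{thm:solution_pure_case}). We are given that $(\rho_1,\sigma_1)$ commute and that $\rho_2 = \kettbra{\psi}$ is pure, and that $D^{\min}_\alpha(\rho_1,\sigma_1) = D^{\min}_\alpha(\rho_2,\sigma_2)$ on an interval $(a,b)\subset[\tfrac12,\infty)$. The first step is to show $[\rho_2,\sigma_2]=0$ as well. Since $(\rho_1,\sigma_1)$ commute, the relation $D^{\min}_\alpha(\rho_1,\sigma_1) = \tfrac{\alpha}{1-\alpha}D^{\min}_{1-\alpha}(\sigma_1,\rho_1)$ holds (as noted in the text, using the classical formula \eqref{eq:vectorQRE} on jointly diagonalized $\rho_1,\sigma_1$; here one must be slightly careful about finiteness on $(0,1)$, but since $\sigma_1$ has full support on the support of $\rho_1$ in the irreducible block one can assume $\rho_1\ll\sigma_1\ll\rho_1$ after restricting to $\supp(\sigma_1)$ — more on this below). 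By the Araki--Lieb--Thirring inequality and its equality conditions \cite{hiai_equality_1994}, the identity $D^{\min}_\alpha(\rho_2,\sigma_2) = \tfrac{\alpha}{1-\alpha}D^{\min}_{1-\alpha}(\sigma_2,\rho_2)$ forces $[\rho_2,\sigma_2]=0$. The subtlety is that we only know $D^{\min}_\alpha$ is finite on $(a,b)$; I would argue that since the $Q^{\min}_\alpha$ are analytic in $\alpha$ and $D^{\min}_\alpha(\rho_1,\sigma_1)$ extends (as a classical Rényi divergence of finitely many numbers) to an analytic function with the stated reflection symmetry, the equality $D^{\min}_\alpha(\rho_1,\sigma_1)=D^{\min}_\alpha(\rho_2,\sigma_2)$ propagates by analytic continuation, giving the reflection symmetry for $(\rho_2,\sigma_2)$ on a suitable range and hence commutativity.

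The second step, given $[\rho_2,\sigma_2]=0$: now both dichotomies are classical (commuting), so \cref{thm:commuting_q_states} — equivalently the classical sufficiency \cref{thm:classical_solution} applied to the Nussbaum--Szko\l a / eigenvalue distributions — shows that equality of all $D^{\min}_\alpha = D_\alpha$ on $(a,b)$ already implies $(\rho_1,\sigma_1)\leftrightarrow(\rho_2,\sigma_2)$ (via genuine classical stochastic maps, hence via completely positive, hence positive, trace-preserving maps). That gives the forward implication of \eqref{eq:explicit_form}; the converse is just data processing. The third, essentially bookkeeping, step is to deduce the claimed structural consequences: interconvertibility of $(\rho_1,\sigma_1)$ (commuting) with $(\rho_2,\sigma_2)$ forces via \cref{thm:commnogo} that $[\rho_2,\sigma_2]=0$ (already shown) and, since $\rho_2$ is pure and commutes with $\sigma_2$, the Koashi--Imoto minimal form of $(\rho_2,\sigma_2)$ is a classical dichotomy supported on one point with $\tilde p_2$ a point mass; matching normal forms with $(\rho_1,\sigma_1)$ via \cref{thm:interconvertibility} then forces the minimal form of $(\rho_1,\sigma_1)$ to be one-dimensional, i.e. $\rho_1$ is (after the interconversion) effectively pure — so in fact $\rho_1$ pure and $\rho_2$ pure, consistent with \cref{thm:solution_pure_case}.

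I expect the main obstacle to be the first step: rigorously upgrading "$D^{\min}_\alpha$ equal on an open interval" to "the ALT reflection identity holds for $(\rho_2,\sigma_2)$ somewhere it can be used", given that finiteness is only assumed on $(a,b)\subseteq[\tfrac12,\infty)$ and the reflection relates $\alpha$ to $1-\alpha$, which may lie outside $[\tfrac12,\infty)$. The clean way around this is: since $(\rho_1,\sigma_1)$ is classical, $Q^{\min}_\alpha(\rho_1,\sigma_1) = \sum_i p_i^\alpha q_i^{1-\alpha}$ is entire in $\alpha$; equality with $Q^{\min}_\alpha(\rho_2,\sigma_2)$ on $(a,b)$, together with analyticity of $\alpha\mapsto Q^{\min}_\alpha(\rho_2,\sigma_2)$ on $(\tfrac12,\infty)$ (indeed wherever it is finite), lets us conclude $Q^{\min}_\alpha(\rho_2,\sigma_2) = \sum_i p_i^\alpha q_i^{1-\alpha}$ for \emph{all} $\alpha$ in a neighbourhood of $1$, hence $D^{\min}_\alpha(\rho_2,\sigma_2)$ inherits the reflection symmetry near $\alpha=\tfrac12$ from the classical side; applying the ALT equality condition at such an $\alpha$ then yields $[\rho_2,\sigma_2]=0$. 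Once commutativity is in hand, everything reduces to the already-proved classical and pure cases and no further difficulty arises.
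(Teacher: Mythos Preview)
Your second and third steps are fine, but the first step---deducing $[\rho_2,\sigma_2]=0$ via the reflection identity and the Araki--Lieb--Thirring equality condition---has a genuine gap. The identity you want to establish is
\[
D^{\min}_\alpha(\rho_2,\sigma_2)=\tfrac{\alpha}{1-\alpha}D^{\min}_{1-\alpha}(\sigma_2,\rho_2),
\]
and the ALT equality condition says this holds iff $[\rho_2,\sigma_2]=0$. But to verify this identity you must control \emph{both} sides. Analytic continuation of the hypothesis only tells you that the function $\alpha\mapsto Q^{\min}_\alpha(\rho_2,\sigma_2)$ coincides with the classical function $\alpha\mapsto\sum_i p_i^\alpha q_i^{1-\alpha}$; it tells you nothing about $Q^{\min}_{1-\alpha}(\sigma_2,\rho_2)$, which involves the \emph{swapped} dichotomy and is not constrained by any assumption. (The paper explicitly remarks, just before the Lemma, that the ALT route would require the stronger hypothesis that \emph{both} orderings have matching divergences, and calls the present Lemma ``much weaker''.) Moreover, since $\rho_2=\kettbra\psi$ is pure, $\sigma_2\ll\rho_2$ fails whenever $\sigma_2\ne\rho_2$, so $D^{\min}_{1-\alpha}(\sigma_2,\rho_2)$ is typically infinite and the reflection identity cannot even be formulated in the range you need.

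The paper's argument avoids ALT and the swapped ordering entirely. It studies the analytic behaviour of $\alpha\mapsto Q^{\min}_\alpha(\rho_2,\sigma_2)$ near $\alpha=0$, using the explicit pure-state formula
\[
Q^{\min}_\alpha(\kettbra\psi,\sigma_2)=\Big(\sum_i \tfrac{u_i}{v_i}\, v_i^{1/\alpha}\Big)^\alpha.
\]
One computes that the one-sided limits as $\alpha\to 0^\pm$ equal the largest resp.\ smallest eigenvalue $v_\pm$ of $\sigma_2$ with $u_i\ne 0$; these differ precisely when $[\rho_2,\sigma_2]\ne 0$, so in that case $Q^{\min}_\alpha(\rho_2,\sigma_2)$ has an essential singularity at $\alpha=0$. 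On the other hand, since $(\rho_1,\sigma_1)$ is commuting and finite-dimensional, $Q_\alpha(\rho_1,\sigma_1)=\sum_i p_i^\alpha q_i^{1-\alpha}$ extends holomorphically through $\alpha=0$. Analytic continuation from $(a,b)$ then forces $Q^{\min}_\alpha(\rho_2,\sigma_2)$ to be regular at $0$ as well, contradicting the essential singularity; hence $[\rho_2,\sigma_2]=0$, and the proof concludes via \cref{thm:commuting_q_states} as you indicated. The key point is that purity of $\rho_2$ gives the explicit formula and the singularity, so only the single ordering $(\rho_2,\sigma_2)$ is needed.
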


\begin{remark}\label{rem:salt}
    We comment on the implications of \cref{thm:solution_pure_case,thm:R\'enyis_commutativity}:
    One should take \cref{thm:solution_pure_case} with a grain of salt as it is also true for the Petz quantum R\'enyi divergences (with a similar proof).
    \cref{thm:R\'enyis_commutativity} however is false for the Petz quantum R\'enyi divergence (and by the same argument also for $\geom\alpha$).
    This is a consequence of the existence of the Nussbaum-Sko\l a distributions $P$ and $Q$ as defined in \cref{sec:fdiv}.
    As noted before (see \cref{thm:insufficient} and after), their Petz quantum R\'enyi divergences coincide with those of $(\rho,\sigma)$ but they cannot be interconvertible.
    Since for non-commuting $\rho$ and $\sigma$ the distribution $P$ is never pure and $\petz\alpha(\rho, \sigma) > \sand\alpha(\rho,\sigma)$ this is not in conflict with the validity of \cref{thm:solution_pure_case} for $\petz\alpha$ nor with \cref{thm:R\'enyis_commutativity}.
\end{remark}

\begin{remark}[$\alpha$-$z$ R\'enyi divergences]
    The $\alpha$-$z$ quantum R\'enyi divergence generalizes the minimal and Petz quantum R\'enyi divergences, see \cref{sec:zoo}. If we formulate the conjecture in terms of the $\alpha$-$z$ R\'enyi divergence instead of $D^{\min}_\alpha$ we obtain a stronger statement than \cref{thm:R\'enyis_commutativity}: Suppose that $D_{\alpha,z}(\rho_1,\sigma_1)=D_{\alpha,z}(\rho_2,\sigma_2)$ for the values $(\alpha,z) = (\alpha^*,\alpha^*)$ and $(\alpha,z)=(\alpha^*,1)$  with $\alpha^* \in [1/2,1)$ and moreover $[\rho_1,\sigma_1]=0$. Then we can deduce already that $[\rho_2,\sigma_2]=0$ even if $\rho_2$ is not pure. To see this, note that (since $[\rho_1,\sigma_1]=0$)
    \begin{align}
        D_{\alpha,z}(\rho_1,\sigma_1) = \petz{\alpha}(\rho_1,\sigma_1) = D^{\min}_{\alpha}(\rho_1,\sigma_1)
    \end{align}
    and $D_{\alpha,\alpha}(\rho_2,\sigma_2) = D^{\min}_\alpha(\rho_2,\sigma_2), D_{\alpha,1}(\rho_2,\sigma_2)=\petz{\alpha}(\rho_2,\sigma_2)$. 
    Therefore
    \begin{align}
        \petz{\alpha^*}(\rho_2,\sigma_2) = \petz{\alpha^*}(\rho_1,\sigma_1) = D^{\min}_{\alpha^*}(\rho_1,\sigma_1) = D^{\min}_{\alpha^*}(\rho_2,\sigma_2),
    \end{align}
    which is only possible if $[\rho_2,\sigma_2]=0$. 
    It is therefore conceivable that \cref{con:explicit_form} is true for the $\alpha$-$z$ quantum R\'enyi divergence but not for the minimal quantum R\'enyi divergence.
\end{remark}

\cref{thm:solution_pure_case,thm:R\'enyis_commutativity} both deal with a dichotomy $(\rho,\sigma)$ where $\rho$ is pure.
Let us, therefore, first evaluate the minimal quantum R\'enyi divergence for this special case.
It is useful to define the quantity
\begin{equation}
    Q^{\min}_\alpha(\rho,\sigma) = \tr\big[\big(\sigma^{\frac{1-\alpha}{2\alpha}}\rho\sigma^{\frac{1-\alpha}{2\alpha}}\big)^\alpha\big] = \tr\big[\big(\rho^{\frac12} \sigma^{\frac{1-\alpha}\alpha}\rho^{\frac12}\big)^\alpha\big],
\end{equation}
so that $D_\alpha^{\min}(\rho,\sigma) = (\alpha-1)^{-1}\log Q^{\min}_\alpha(\rho,\sigma)$.
The second equality is seen as follows: Set $A=\sigma^{\frac{1-\alpha}{2\alpha}}\rho^{\frac12}$ and use that the singular value decomposition implies $\tr[(A^*A)^\alpha] = \tr[(AA^*)^\alpha]$.
If $\rho=\kettbra\psi$ is pure and $\sigma = \sum v_i P_i$ is its spectral resolution, then \begin{align}\label{eq:QalphaPure}
    Q_\alpha^{\min}(\rho,\sigma) = \ip\psi{ \sigma^{\frac{1-\alpha}\alpha}\psi}^\alpha = \big(\sum u_i v_i^{\frac{1-\alpha}\alpha}\big)^\alpha = \big(\sum \frac{u_i}{v_i} v_i^{1/\alpha}\big)^\alpha,
\end{align}
where we used the shorthand $p_i=\ip\psi{P_i\psi}$.

\begin{proof}[Proof of Lemma~\ref{thm:solution_pure_case}] For $j=1,2$, let $\rho_j=\proj{\psi^{(j)}}$ with $\ket{\psi^{(j)}}\in \mc H_j$ and let
\begin{equation}\label{eq:sigma_q}
    \sigma_j=\sum_{i=1}^{n_j} v_i^{(j)} P_i^{(j)}
\end{equation}
with $P_i^{(j)}$ being the spectral projectors of $\sigma_j$ and $\{v_i^{(j)}\}_{i=1}^{n_j}$ the spectrum of $\sigma_j$.
Note that $n_j$ counts the number of distinct eigenvalues of the state $\sigma_j$ and not the dimension of $\mc H_j$. 
We further define the sub-normalized vectors $\ket{\psi_{i}^{(j)}} = P_i^{(j)}\ket{\psi^{(j)}}$ with associated norms
\begin{align}\label{eq:normsPure}
    u^{(j)}_i = \big\|\ket{\psi_i^{(j)}}\big\|^2 = \bra{\psi^{(j)}} P_i^{(j)}\ket{\psi^{(j)}}.
\end{align}
Finally, denote by $R^{(j)}$ the othonormal projectors onto $\mathrm{span}\{\ket{\psi_i^{(j)}}\}_{i=1}^{n_j}$. 
By construction we have $[R^{(j)},\sigma^{(j)}]=0$ and $R^{(j)}\ket{\psi^{(j)}} = \ket{\psi^{(j)}}$.
Now let $D_\alpha^{\min}(\rho_1,\sigma_1) = D_\alpha^{\min}(\rho_2,\sigma_2)$ for all $\alpha$ in some open subset of $(1,\infty)$. Using \eqref{eq:QalphaPure} we obtain
\begin{align}
   \sum_i \left(\frac{u^{(1)}_i}{v^{(1)}_i}\right) {q^{(1)}_i}^{\frac{1}{\alpha}} = \sum_i \left(\frac{u^{(2)}_i}{v^{(2)}_i}\right) {v^{(2)}_i}^{\frac{1}{\alpha}}
\end{align}
on some non-empty open interval of $(1,\infty)$.
Both sides are analytic in $\alpha$ and can be analytically continued to all $\alpha>0$. The equality then extends to all $\alpha>0$.
Substituting $1/\alpha\mapsto \alpha$, we get for all $\alpha>0$
\begin{align}
    \sum_{i=1}^{n_1} \left(\frac{u^{(1)}_i}{v^{(1)}_i}\right) {v^{(1)}_i}^{\alpha}= \sum_{i=1}^{n_2} \left(\frac{u^{(2)}_i}{v^{(2)}_i}\right) {v^{(2)}_i}^{\alpha}.
\end{align}
Since exponential functions $x\mapsto \exp(a x)$ with distinct $a$ are linearly independent and the $v^{(j)}_i$ are distinct for fixed $j$, we find that $n_1=n_2$ and that there exists a permutation $\pi$ such that:
\begin{align}\label{eq:coeffsPure}
     v^{(2)}_j = v^{(1)}_{\pi(j)},\quad u^{(2)}_j = u^{(1)}_{\pi(j)}.
\end{align}
From now on, we relabel the indices so that $\pi=\id$. By construction we have
\begin{align}
    R^{(1)}\sigma_1 = \sum_j \frac{v_j^{(1)}}{u_j^{(1)}}\proj{\psi_j\up 1},\quad R^{(2)}\sigma_2 = \sum_j\frac{v_j^{(2)}}{u_j^{(2)}}\proj{\psi_j\up 2}.
\end{align}
Eqs.~\eqref{eq:normsPure} and \eqref{eq:coeffsPure} imply that there exists a unitary operator $U:R^{(1)}\mc H_1\rightarrow R^{(2)}\mc H_2$ such that
\begin{align}
        U\ket{\psi^{(1)}} = \ket{\psi^{(2)}},\quad U R^{(1)}\sigma_1 R^{(1)}U^* = R^{(2)} \sigma_2 R^{(2)}.
    \end{align}
    Furthermore
    \begin{align}
        \tr[(\1-R^{(1)})\sigma_1] = 1 - \sum_j v_j^{(1)} = 1-\sum_j v_j^{(2)} =\tr[(\1-R^{(2)})\sigma_2].
    \end{align}
    Hence there exist trace-preserving completely positive maps $\tilde T:\T((\1-R^{(1)})\mc H_1)\rightarrow \T((\1-R^{(2)})\mc H_2)$ and $\tilde S:\T((\1-R^{(2)})\mc H_2)\rightarrow \T((\1-R^{(1)})\mc H_1)$ such that
    \begin{align}
        \tilde T((\1-R^{(1)})\sigma_1) &= (\1-R^{(2)})\sigma_2\\
        \tilde S((\1-R^{(2)})\sigma_2) &= (\1-R^{(1)})\sigma_1.
    \end{align}
    We can hence define the quantum channels $T:\T(\mc H_1)\rightarrow \T(\mc H_2)$ and $S:\T(\mc H_2)\rightarrow \T(\mc H_1)$ by
    \begin{align}
        T(\tau) := U R^{(1)} \tau R^{(1)} U^* + \tilde T((\1-R^{(1)})\tau (\1-R^{(1)}))\\
        S(\varphi) := U^* R^{(2)} \varphi R^{(2)} U + \tilde S((\1-R^{(2)})\varphi(\1-R^{(2)})), 
    \end{align}
    which fulfill
    \begin{align}
        T(\rho_1) = \rho_2, \quad S(\rho_2)=\rho_1,\quad T(\sigma_1)=\sigma_2,\quad S(\sigma_2)=\sigma_1.
    \end{align}
\end{proof}
\begin{remark}
    The above proof of \cref{thm:solution_pure_case} can also be carried out if $\mc H_j$ are infinite-dimensional provided that $D_\alpha(\rho_j,\sigma_j)$ are finite for all $\alpha>1$. This is not always the case. 
    To see this, first note that any two discrete probability distributions $u,v\in \ell^1$ with $u\ll v$ may arise from a suitable pure state $\ket\psi$ and mixed state $\sigma$ as in \eqref{eq:QalphaPure}.
    For example, consider
    \begin{align}
        u_i = \frac{6}{\pi^2}\frac{1}{i^2},\quad v_i = \frac{1}{\zeta(2+s)}\frac{1}{i^{2+s}},
    \end{align}
    where $\zeta$ is the Riemann-Zeta function. Then 
    \begin{align}
        D^{\min}_\alpha(\rho,\sigma) = \log\big(\zeta(2+s)\big) + \frac{\alpha}{\alpha-1}\log\big(\frac{6}{\pi^2} \sum_{i=1}^\infty i^{-\frac{2+s-\alpha s}{\alpha}}\big),
    \end{align}
    which diverges for $\alpha \geq \frac{2+s}{1+s}$. Incidentally, the Petz quantum R\'enyi divergence, in this case, is given by
    \begin{align}
        \petz\alpha(\rho,\sigma) = \log\big(\zeta(2+s)\big) + \frac{1}{\alpha-1}\log\big(\frac{6}{\pi^2}\sum_{i=1}^\infty i^{-(2+(1-\alpha)(2+s))}\big),
    \end{align}
    which converges if and only if $\alpha<\frac{3+s}{2+s}$. Thus, for $\alpha\in(\frac{3+s}{2+s},\frac{2+s}{1+s})$ the Petz quantum R\'enyi divergence is infinite while the minimal quantum R\'enyi divergence is finite, showing that their difference is in general unbounded.
\end{remark}

\cref{thm:R\'enyis_commutativity} will follow from the following more general statement:

\begin{lemma}\label{thm:classical_R\'enyis2}
    Let $\rho\ll\sigma$ be finite-dimensional quantum states with $\rho$ being pure and $[\rho,\sigma]\neq 0$.
    Then there is no classical system with probability distributions $p\ll q$ and $(\tfrac{dp}{dq})^{\pm1}\in L^\oo(q)$ (pseudo-inverse!) with $D_\alpha(p,q) = D_\alpha^{\min}(\rho,\sigma)$.
\end{lemma}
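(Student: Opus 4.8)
The plan is to argue by contradiction: suppose a classical dichotomy $(p,q)$ as in the statement exists and $D_\alpha(p,q)=D^{\min}_\alpha(\rho,\sigma)$ on an open interval of $\alpha$'s (this is all that is needed, and is the situation in which the lemma is used). The strategy is to compare the two sides as functions of $\alpha$ near $\alpha=0^+$: the two boundedness hypotheses will force $Q_\alpha(p,q)$ to be the restriction of an \emph{entire} function of $\alpha$, hence real-analytic at $\alpha=0$, whereas the pure-state formula \eqref{eq:QalphaPure} will make $Q^{\min}_\alpha(\rho,\sigma)$ fail to be real-analytic at $0^+$ exactly when $[\rho,\sigma]\neq0$. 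First I would unpack the quantum side. Write $\rho=\kettbra\psi$ and $\sigma=\sum_i v_iP_i$ with the $v_i$ the distinct eigenvalues. Since $\rho\ll\sigma$ and $[\rho,\sigma]\neq0$, $\ket\psi$ is not contained in a single eigenspace of $\sigma$, so in \eqref{eq:QalphaPure} at least two of the numbers $u_i=\ip\psi{P_i\psi}$ are nonzero; moreover $\sigma$ cannot be pure (otherwise $\rho\ll\sigma$ would force $\rho=\sigma$), so all $v_i<1$. Setting $w_i=u_i/v_i>0$ and $c_i=-\log v_i>0$ for those $i$, there are at least two of them with distinct $c_i$, and \eqref{eq:QalphaPure} reads $Q^{\min}_\alpha(\rho,\sigma)=\big(\sum_i w_ie^{-c_i/\alpha}\big)^\alpha$, which is real-analytic and strictly positive on $(0,\oo)$. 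With $c_{\min}=\min_i c_i$ and $w_0=\sum_{c_i=c_{\min}}w_i$, factoring out $e^{-c_{\min}/\alpha}$ gives, for $\alpha>0$,
\[
 \log Q^{\min}_\alpha(\rho,\sigma)= -c_{\min}+\alpha\log w_0+\alpha\log\!\Big(1+\tfrac1{w_0}\!\!\sum_{c_i>c_{\min}}\!\! w_ie^{-(c_i-c_{\min})/\alpha}\Big).
\]
The last term is strictly positive for every $\alpha>0$, but the inner sum (hence the whole last term) extends to a $C^\infty$ function on $[0,\oo)$ with all one-sided derivatives vanishing at $0$; therefore $\log Q^{\min}_\alpha(\rho,\sigma)$ is not real-analytic at $0^+$ — its Taylor series there is $-c_{\min}+\alpha\log w_0$, which it does not equal for any $\alpha>0$.

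For the classical side, \cref{thm:tilde} gives a probability measure $\tilde q$ on $(-\oo,\oo]$ with $Q_\alpha(p,q)=\int e^{-\alpha t}\,d\tilde q(t)$. The hypothesis $\tfrac{dp}{dq}\in L^\oo(q)$ bounds $-\log\tfrac{dp}{dq}$ from below and $(\tfrac{dp}{dq})^{-1}\in L^\oo(q)$ bounds it from above on $\{\tfrac{dp}{dq}>0\}$ (where $\tfrac{dp}{dq}=0$ one has $-\log\tfrac{dp}{dq}=\oo$), so $\tilde q=\tilde q_{\mathrm{fin}}+\mu_\oo\delta_\oo$ with $\tilde q_{\mathrm{fin}}$ a finite positive measure on a compact interval and $\mu_\oo<1$ (otherwise $p$ would be a null measure). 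Since $e^{-\alpha\cdot\oo}=0$ one gets $Q_\alpha(p,q)=\int e^{-\alpha t}\,d\tilde q_{\mathrm{fin}}(t)$ for all $\alpha$, an entire function of $\alpha$ that is strictly positive on $\RR$ (as $\tilde q_{\mathrm{fin}}\neq0$); in particular $\log Q_\alpha(p,q)$ is real-analytic near $\alpha=0$. Now assuming $D_\alpha(p,q)=D^{\min}_\alpha(\rho,\sigma)$, and hence $\log Q_\alpha(p,q)=\log Q^{\min}_\alpha(\rho,\sigma)$, on the open interval, both sides are real-analytic on the connected set $(0,\oo)$, so they agree on all of $(0,\oo)$ — contradicting that the left-hand side is, and the right-hand side is not, real-analytic at $0^+$. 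This proves the lemma.

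The hard part will be the second step above: one has to extract from the explicit expression for $Q^{\min}_\alpha$ a qualitative invariant that detects $[\rho,\sigma]\neq0$ and that every sufficiently regular classical dichotomy is forced to possess (here: real-analyticity at $\alpha=0^+$), and to recognize that the boundedness hypotheses are exactly what delivers this regularity. The underlying mechanism is that the outer power $(\,\cdot\,)^\alpha$ in $Q^{\min}_\alpha$ produces terms $e^{-(c_i-c_{\min})/\alpha}$ with $c_i>c_{\min}$, which are smooth but not analytic at $0$; this structure is absent, e.g., from the Petz quantity $\tr[\rho^\alpha\sigma^{1-\alpha}]$ (a sum of exponentials $v_i^{1-\alpha}$ in $\alpha$), which is consistent with \cref{thm:R\'enyis_commutativity} being false for $\petz\alpha$. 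One should also be mindful that without either boundedness assumption the classical $Q_\alpha(p,q)$ need not extend analytically past $\alpha=0$ at all, so the comparison would break down.
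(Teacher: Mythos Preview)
Your proof is correct and follows essentially the same strategy as the paper: both show that the boundedness hypotheses make $Q_\alpha(p,q)$ extend to an entire function of $\alpha$, while $Q^{\min}_\alpha(\rho,\sigma)$ has a singularity at $\alpha=0$ precisely when $[\rho,\sigma]\neq0$. The only difference is in how the singularity is detected. The paper observes that $\lim_{\alpha\to0^+}Q^{\min}_\alpha(\rho,\sigma)=v_+$ (the largest eigenvalue of $\sigma$ seen by $\psi$) while $\lim_{\alpha\to0^-}Q^{\min}_\alpha(\rho,\sigma)=v_-$ (the smallest), and $v_+\neq v_-$ exactly when $[\rho,\sigma]\neq0$; hence the meromorphic extension of $Q^{\min}_\alpha$ has an essential singularity at $0$. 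You instead stay on the positive real axis and argue that $\log Q^{\min}_\alpha$ is $C^\infty$ but not real-analytic at $0^+$, via the flat-but-nonzero remainder $\alpha\log\big(1+\tfrac1{w_0}\sum_{c_i>c_{\min}}w_ie^{-(c_i-c_{\min})/\alpha}\big)$. Both conclusions are equivalent and yield the same contradiction with the entirety of $Q_\alpha(p,q)$; the paper's two-sided limit computation is a bit shorter, while your argument has the minor conceptual advantage of never leaving $\alpha>0$.
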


For the proof of \cref{thm:classical_R\'enyis2} we use the following argument: 
The function $Q_\alpha^{\min}(\rho,\sigma)$ can be extended to a meromorphic function on $\CC$, which has an essential singularity at the origin if and only if $\rho$ and $\sigma$ do not commute.
The assumption that the Radon-Nikodym derivative is essentially bounded from above and below will, however, imply that $Q_\alpha(p,q)$ is holomorphic in the origin.
This shows that $\rho$ and $\sigma$ commute so that interconvertibility follows from \cref{thm:commuting_q_states}.

\begin{proof}
That $Q_\alpha(\rho,\sigma)$ indeed extends to a meromorphic function is seen from \cref{eq:QalphaPure}.
Also note that $\rho$ and $\sigma$ commute if and only if exactly one of the $u_i$'s is equal to one (so that the others are zero by \eqref{eq:sigma_q}). This is, in turn, equivalent to $Q_\alpha(\rho,\sigma)$ being proportional to an exponential function.
The limits of $Q_\alpha(\rho,\sigma)$ as $\alpha \to 0^\pm$ are
\begin{align}
    \lim_{\alpha\to0^\pm}\Big(\sum \frac{u_i}{v_i} v_i^{1/\alpha} \Big)^\alpha  = v_\pm\cdot\lim_{\alpha\to0^\pm}\Big(\sum \frac{u_i}{v_i} \Big(\frac{v_i}{v_\pm}\Big)^{1/\alpha}\Big)^\alpha = v_\pm,
\end{align}
where $v_+$ is the largest and $v_-$ is the smallest number among the $v_i$ (i.e.\ the maximal, resp.\ minimal, eigenvalue of $\sigma$).
Therefore, $Q_\alpha(\rho,\sigma)$ has an essential singularity at the origin if $v_+ > v_-$ which is equivalent to non-commutativity of $\rho$ and $\sigma$.
In the commuting case $Q_\alpha(\rho,\sigma)$ is an entire function and has no singularity.
Therefore, having the essential singularity at the origin is indeed equivalent to commutativity of $\rho$ and $\sigma$.

Now let $p\ll q$ be probability distributions on a $\sigma$-finite measure space such that $(\tfrac{dp}{dq})^{\pm 1}\in L^\oo(q)$.
We will show that $Q_\alpha(p,q)$ has an analytic extension to all of $\CC$.
This holomorphic extension is defined by allowing complex $\alpha$ in the definition $Q_\alpha(p,q) = \int (\tfrac{dp}{dq})^\alpha dq$.
These integrals exist in $L^1$ for all complex $\alpha = \pm a +ib$, $a>0$, $b\in\RR$, because $\int |(\tfrac{dp}{dq})^\alpha| dq =\int |\tfrac{dp}{dq}|^{\pm a} dq \le \big\|{(\tfrac{dp}{dq})^{\pm1}}\big\|_{L^\oo(q)}^a <\oo$.
\end{proof}

\subsection{Complete monotonicity}
\label{sec:CM}

A non-constant smooth function $f:(0,\oo)\to\RR$ is called \emph{completely monotone}, if 
\begin{equation}
    (-1)^n \frac{d^n}{d\alpha^n}f(\alpha) >0, \quad \forall n\in\NN,\alpha\in\RR.
\end{equation}
Bernstein's theorem \cite{CMfunctions} states that completely monotone functions are precisely the Laplace transforms of positive Borel measures $\mu$ on $[0,\oo)$, i.e.,
\begin{equation}\label{eq:CMfunction}
    f(\alpha) = \int_0^\oo e^{-t\alpha}\,d\mu(t), \quad\alpha>0.
\end{equation}
One has $\lim_{\alpha\to0^+}f(\alpha)=\mu(\{0\})$ and this limit exists in $(0,\oo]$.
Note that multiplying $f$ by an exponential corresponds to a shift of the measure $\mu$: $f(\alpha)e^{s\alpha} = \int_0^\oo e^{-(t-s)\alpha} \,d\mu(t)= \int_0^\oo e^{-t\alpha}\,d\mu(t+s)$.

For probability distributions $p\ll q$ on a $\sigma$-finite measure space $(X,\mu)$ the argument used to prove the classical case constructs a probability measure $\tilde q$ on $(-\oo,\oo]$ so that the two-sided Laplace transform of $\tilde q$ is $Q_\alpha(p,q)$.
The support of this measure $\tilde q$ is contained in $[-D_\oo(p,q),\oo]$.
If now $D_\oo(p,q) = \log \|\tfrac{dp}{dq}\|_{L^\oo(q)}$ is finite, then we can multiply $Q_\alpha(p,q)$ by $e^{\alpha D_\oo(p,q)}$ to obtain a completely monotone function.
With this idea, we obtain:

\begin{proposition}\label{thm:CM}
    Let $\rho\ll\sigma$ be quantum states and let $\DD_\alpha$ be a family of quantum R\'enyi divergences with $\lim_{\alpha\to\oo}\DD_\alpha(\rho,\sigma)=D_\oo(\rho,\sigma)<\oo$.
    Then there are classical probability distributions $p\ll q$ on a $\sigma$-finite measure space such that $\DD_\alpha(\rho,\sigma)=D_\alpha(p,q)$ if and only if the function
    \begin{equation}
        f(\alpha) = e^{(\alpha-1)(\DD_\alpha(\rho,\sigma)-D_\oo(\rho,\sigma))}
    \end{equation}
    is completely monotone.
\end{proposition}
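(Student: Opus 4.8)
The plan is to translate "classically simulable" into a statement about Laplace transforms and then read the conclusion off Bernstein's theorem, following the remarks made just before the proposition. First I would record the dictionary supplied by \cref{thm:tilde}: a classical dichotomy $(p,q)$ yields a Borel probability measure $\tilde q$ on $(-\oo,\oo]$ with $Q_\alpha(p,q)=\int_{(-\oo,\oo]}e^{-\alpha t}\,d\tilde q(t)$, with $\int e^{-t}\,d\tilde q=1$ (equivalently $\tilde p=e^{-t}\tilde q$ is a probability measure), and with $\supp \tilde q\subseteq[-D_\oo(p,q),\oo]$; conversely, any Borel probability measure $\tilde q$ on $(-\oo,\oo]$ with $\int e^{-t}\,d\tilde q=1$ arises this way (take $q=\tilde q$ on $X=(-\oo,\oo]$ and density $\tfrac{dp}{dq}(t)=e^{-t}$, giving $D_\alpha(p,q)=\tfrac1{\alpha-1}\log\int e^{-\alpha t}\,d\tilde q(t)$). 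So $(\rho,\sigma)$ is classically simulable by $\DD_\alpha$ precisely when $g(\alpha):=e^{(\alpha-1)\DD_\alpha(\rho,\sigma)}$ can be written as $\int_{(-\oo,\oo]}e^{-\alpha t}\,d\tilde q(t)$ for such a $\tilde q$; and since $m:=\lim_{\alpha\to\oo}\DD_\alpha(\rho,\sigma)=D_\oo(\rho,\sigma)<\oo$, any witnessing $\tilde q$ must be supported in $[-m,\oo]$. Finally observe $g(\alpha)=e^{(\alpha-1)m}f(\alpha)$, so passing between $g$ and $f$ only multiplies by an exponential, which shifts the underlying measure by $-m$ and rescales.

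For the forward direction I would take a witnessing $(p,q)$, form the associated $\tilde q$ (supported in $[-m,\oo]$), and note that for $\alpha>0$ the atom of $\tilde q$ at $+\oo$ contributes nothing, so
\[
  f(\alpha)=e^{-(\alpha-1)m}g(\alpha)=\int_{[0,\oo)}e^{-\alpha s}\,d\mu(s),\qquad \mu:=e^{m}\,(t\mapsto t+m)_*\bigl(\tilde q|_{\RR}\bigr),
\]
and $\mu$ is a finite positive Borel measure on $[0,\oo)$. Bernstein's theorem then gives that $f$ is completely monotone. (The one exception is the degenerate case $\DD_\alpha(\rho,\sigma)\equiv m$, where $f$ is constant and $\mu$ a point mass at $0$; here one reads "completely monotone" in the weak sense.)

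For the converse I would run this backwards. If $f$ is completely monotone, Bernstein's theorem produces a positive Borel measure $\nu$ on $[0,\oo)$ with $f(\alpha)=\int_{[0,\oo)}e^{-\alpha s}\,d\nu(s)$, and $f(1)=e^{0}=1$ forces $\int e^{-s}\,d\nu=1$. I would then set
\[
  \tilde q:=e^{-m}\,(s\mapsto s-m)_*\nu+c\,\delta_{+\oo},\qquad c:=1-e^{-m}\nu\bigl([0,\oo)\bigr),
\]
and check three things: $\int_{(-\oo,\oo]}e^{-\alpha t}\,d\tilde q=e^{(\alpha-1)m}f(\alpha)=g(\alpha)$ for $\alpha>0$; $\int e^{-t}\,d\tilde q=e^{-m}\int e^{-(s-m)}\,d\nu=1$; and $c\ge0$, which follows because $e^{-m}\nu\bigl([0,\oo)\bigr)=e^{-m}\lim_{\alpha\to0^+}f(\alpha)=e^{-m}e^{\,m-\DD_0(\rho,\sigma)}=e^{-\DD_0(\rho,\sigma)}\le1$, using $\DD_0(\rho,\sigma):=\lim_{\alpha\to0^+}\DD_\alpha(\rho,\sigma)\ge0$ (positivity of the Rényi divergence). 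Thus $\tilde q$ is a Borel probability measure of the required form, and the classical dichotomy it defines (first step) satisfies $D_\alpha(p,q)=\tfrac1{\alpha-1}\log g(\alpha)=\DD_\alpha(\rho,\sigma)$.

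I expect the real content to be the dictionary in the first step — recognising that \cref{thm:tilde} already exhibits $Q_\alpha$ of a classical dichotomy as a two-sided Laplace transform of a sub-probability measure on a half-line, so that the exponential prefactor relating $g$ and $f$ merely shifts that half-line to $[0,\oo)$ (legitimate because $D_\oo(\rho,\sigma)<\oo$) and lands us in Bernstein territory. The rest is bookkeeping, with two points deserving care: the atom of $\tilde q$ at $+\oo$ (carrying mass $q(\{dp/dq=0\})$), inert for $\alpha>0$ but needed to normalise $\tilde q$; and checking that the candidate $\tilde q$ in the converse is genuinely a probability measure, which is exactly where $D_\oo(\rho,\sigma)<\oo$ and positivity ($\DD_0\ge0$, hence $c\ge0$) both enter. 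If one wants the statement for a family defined only on a subinterval like $[\tfrac12,\oo)$ (as for $\sand\alpha$), the only adjustment is to apply Bernstein's theorem to $\alpha\mapsto f(\alpha+\tfrac12)$; everything else is unchanged, and when $\rho,\sigma$ commute the construction simply reproduces the diagonal classical dichotomy, consistent with \cref{thm:commuting_q_states}.
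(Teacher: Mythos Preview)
Your proposal is correct and follows essentially the same route as the paper: both directions go through Bernstein's theorem after recognising that the exponential prefactor $e^{\alpha D_\oo}$ shifts the support of the Laplace measure to $[0,\oo)$, with an atom at $+\oo$ absorbing any leftover mass. Your write-up is actually more careful than the paper's in explicitly verifying $c\ge 0$ via $\DD_0\ge 0$, a point the paper's proof leaves implicit when it declares its constructed $q$ to be a probability measure.
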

\begin{proof}
    The only if part of the proof was explained in the text before the proposition.
    The assumption implies that the function $g(\alpha) = e^{-D_\oo(\rho,\sigma)}f(\alpha)$ is also completely monotone.
    Let $\mu$ be the measure defined by the function $g$ through \cref{eq:CMfunction}.
    Define a probability measures $q$ and $p$ on $[-D_\oo(\rho,\sigma),\oo]$ by $q(t) = \mu(t+D_\oo(p,q)) + (1-\mu((-\oo,\oo)))\delta_\oo(t)$ and $dp(t) = e^{-t}dq(t)$.
    The latter is normalized since 
    $$
        p\big([-D_\oo(\rho,\sigma),\oo]\big) = (1-\mu((-\oo,\oo)))+ \int_{-D_\oo(\rho,\sigma)}^\oo e^{-t}dq(t) =1.
    $$
    We set $\mathbb Q_\alpha(\rho,\sigma)=e^{(\alpha-1)\DD_\alpha(\rho,\sigma)}$. Then
    \begin{multline*}
        Q_\alpha(p,q) = \int_{-D_\oo(\rho,\sigma)}^\oo e^{-\alpha t}\,dq(t) 
        = e^{\alpha D_\oo(\rho,\sigma)} \int_0^\oo e^{-\alpha t}\,d\mu(t) \\
        = e^{\alpha D_\oo(\rho,\sigma)} g(\alpha) 
        = e^{(1-\alpha)\DD_\alpha(\rho,\sigma)} = \mathbb Q_\alpha(\rho,\sigma).
    \end{multline*}
\end{proof}

We apply \cref{thm:CM} to show that if \cref{con:explicit_form} is true, there are infinitely many inequalities on the minimal quantum R\'enyi divergence which separate the quantum and classical case.
Furthermore, we give an alternative proof of our claim that the maximal quantum R\'enyi divergence is not complete.
We start with the latter:

If $\rho =\kettbra\psi$ is pure and $\sigma$ be arbitrary with $\psi\in\mathrm{supp}(\sigma)$ then 
$$
    e^{(\alpha-1)D^{\max}_\alpha(\rho,\sigma)} 
    = \tr\big[ \sigma \big( \sigma^{-1/2}\rho\sigma^{-1/2}\big)^\alpha \big]
    = \|\sigma^{-1/2}\psi\|^{2(\alpha-1)}.
$$
That this is exponential in $\alpha$ implies that the appropriately translated version of $Q_\alpha$ is completely monotone so that classical distributions with the same R\'enyi divergences exist.
Just as in the other proof, this contradicts interconvertibility \cref{thm:noteleportation} whenever $\rho$ and $\sigma$ do not commute.

We now turn to the inequalities separating commutativity and non-commutativity.
If a family $\DD_\alpha$ of R\'enyi divergences is sufficient, it holds for a dichotomy $(\rho,\sigma)$ with $\rho\ll\sigma$ and $\DD_\oo(\rho,\sigma)<\oo$ that
\begin{equation}\label{eq:completely-monotone-Dalpha}
    [\rho,\sigma]=0
    \iff (-1)^n\frac{d^n}{d\alpha^n}\Big(e^{(\alpha-1)(\DD_\alpha(\alpha)-\DD_\oo(\rho,\sigma))}\Big)\ge0 \quad \forall\alpha>0,\, n\in\NN.
\end{equation}
Notice also that each of the functions $(-1)^n \tfrac{d^n}{d\alpha^n}(\ldots)$ is completely monotone in $\alpha$. Evaluating \eqref{eq:completely-monotone-Dalpha} for $n=1$, we find
\begin{align}
    (\alpha-1) \partial_\alpha \DD_\alpha(\rho,\sigma) < \DD_\oo(\rho,\sigma) - \DD_\alpha(\rho,\sigma).
\end{align}
Many more point-wise inequalities are known for completely monotone functions \cite{fink_kolmogorov-landau_1982,mitrinovic_inequalities_1993,CMfunctions}. 
For example, suppose that $\mathbf n,\mathbf m \in \NN^L$ are two non-increasingly ordered vectors ($n_1\geq n_2\geq \cdots$) such that $\mathbf n \succ \mathbf m$, i.e., $\sum_{j=1}^k n_j \geq \sum_{j=1}^k m_j$ for all $k=1,\ldots,L-1$ and $\sum_{j=1}^L n_j = \sum_{j=1}^L m_j$. 
Then \cite{fink_kolmogorov-landau_1982}
\begin{align}
\prod_{j=1}^L (-1)^{n_j} f^{(n_j)}(x) \geq \prod_{j=1}^L (-1)^{m_j} f^{(m_j)}(x)
\end{align}
for any completely monotone function $f$. In other words, the function 
\begin{align}
    \mathbf n \mapsto \prod_{j=1}^L (-1)^{n_j} f^{(n_j)}(x) 
\end{align}
is Schur-convex. For $L=2$ this yields log-convexity of completely monotone functions.
Exploring the resulting inequalities for R\'enyi divergences seems like an interesting open problem. 

As discussed above, if Conjecture~\ref{con:explicit_form} is true, then the function
\begin{align}\label{eq:g}
g(\alpha|\rho,\sigma) :=   e^{-\alpha D_\oo(\rho,\sigma)} e^{(\alpha-1)D^{\min}_\alpha(\rho,\sigma)} = e^{-\alpha D_\oo(\rho,\sigma)} Q^{\min}_\alpha(\rho,\sigma)
\end{align} 
is completely monotone if and only if $[\rho,\sigma]=0$. This allows us to obtain additional evidence in favor of the conjecture by numerical means. Specifically, we can compute $g(\alpha|\rho,\sigma)$ and its derivatives for a choice of $(\rho,\sigma)$ numerically and see whether \eqref{eq:completely-monotone-Dalpha} holds true. 
We have checked this for numerous examples without finding a counter-example to our conjecture. A numerical example for qubits is presented in Fig.~\ref{fig:qubits}. Note that these numerical results also provide a different means to see that no equivalent of Nussbaum-Szko\l a distributions exist for the minimal quantum R\'enyi divergence.
\begin{figure}
    \centering
    \includegraphics[width=\textwidth]{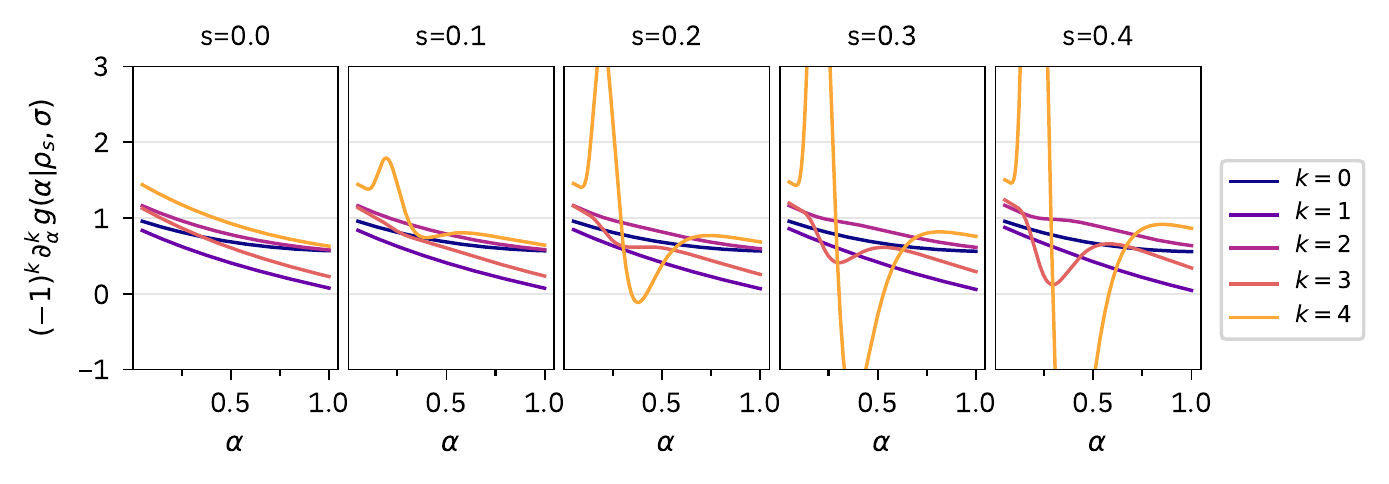}
    \caption{The derivatives $(-1)^k \partial_\alpha^k\,g(\alpha|\rho_s,\sigma)$ for $\DD_\alpha = D^{\min}_\alpha$ in the case $\rho_s = \mathrm{e}^{s \sigma_x}/2 \cosh(s),\sigma = \mathrm{e}^{\sigma_z}/\cosh(1)$ and for various values of $s$ (see Eq.~\eqref{eq:g} for the definition of $g(\alpha|\rho,\sigma)$). If $g(\alpha|\rho_s,\sigma)$ was completely monotone, all curves would have to be monotonically decreasing and convex. Note that $[\rho_s,\sigma]=0$ if and only if $s=0$. For $s=0$ the function is completely monotone. As $s$ increases, the commutator-norm $\norm{[\rho_s,\sigma]}$ increases monotonically while lower and lower derivatives signal that the function is not completely monotone. Already for $s=0.3$ one can see that the second derivative is not completely monotone.}
    \label{fig:qubits}
\end{figure}

\subsection*{Acknowledgements}
We would like to thank Bjarne Bergh, Milan Mosonyi, Robert Salzmann, Joseph Schindler, Alexander Stottmeister, David Sutter, Marco Tomamichel, Reinhard F. Werner and Andreas Winter for interesting and helpful discussions. 
N.G. acknowledges financial support by the MICIIN with funding from European Union NextGenerationEU (PRTR-C17.I1) and by the Generalitat de Catalunya.
H.W. acknowledges support by the DFG through SFB 1227 (DQ-mat), Quantum Valley Lower Saxony, and funding by the Deutsche Forschungsgemeinschaft (DFG, German Research Foundation) under Germanys Excellence Strategy EXC-2123 QuantumFrontiers 390837967. 

\appendix

\section{Interconvertibility from R\'enyi Divergences without absolute continuity}
\label{sec:no-abs-continuity}

In the main text, we only considered classical dichotomies with $p\ll q$. 
In this appendix, we generalize \cref{thm:classical_solution} to arbitrary pairs of probability measures. To do this, we first have to find a way to define R\'enyi divergences for such measures. We appeal to the Lebesgue decomposition theorem stating that if $p$ and $q$ are probability measures on a measure space, then there is a unique decomposition
\begin{align}
p = \lambda p^\| + (1-\lambda) p^\perp,\quad p^\|\ll q,\quad p^\perp \perp q,\quad \lambda \in [0,1],
\end{align}
with $p^\|$ and $p^\perp$ probability measures as well. Now, for $\alpha\in(0,1)\cup(1,\oo)$ define \cite{hiai}
\begin{align}\label{def:better_def}
D_\alpha(p,q) :=\begin{cases} \frac{\alpha}{\alpha-1}\log(\lambda) + D_\alpha(p^\|,q),\ &\text{if}\ \alpha \in (0,1)\\
D_\alpha(p,q),&\text{if}\ p\ll q,\ \alpha>1,\\
\oo &\text{otherwise}.
\end{cases}
\end{align}
This reduces to the usual definition for $p=p^\| \ll q$. 
Moreover, note that for $\alpha \in (0,1)$ we have $D_\alpha(p,q) = 0$ precisely if $\lambda=1$ and hence $p=p^\|$. It thus follows that $p=q$.
Our main result now generalizes as follows:
\begin{theorem}\label{thm:classical_solution_without_ac}
        Let $(X_i,\mu_i)$, $i=1,2$, be measured spaces and $p_i,q_i$ probability measures in $L^1(X_i,\mu_i)$ such that $p_i$ is not orthogonal to $q_i$.
        If there is an open interval $(a,b)\subset \RR$ such that
        \begin{equation}
            D_\alpha(p_1,q_1) = D_\alpha(p_2,q_2) <\oo \quad\forall\alpha\in(a,b),
        \end{equation}
        there are stochastic maps between $L^1(X_1,\mu_1)$ and $L^1(X_2,\mu_2)$ interconverting the dichotomies $(p_1,q_1)$ and $(p_2,q_2)$.
    \end{theorem}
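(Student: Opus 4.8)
The plan is to reduce to \cref{thm:classical_solution}, \cref{thm:tilde} and \cref{thm:injectivity}; the only genuinely new ingredient is a device for absorbing the singular part of the Lebesgue decomposition. First a case distinction on the interval: by \eqref{def:better_def} one has $D_\alpha(p_i,q_i)=\oo$ for every $\alpha\le 0$, and $D_\alpha(p_i,q_i)=\oo$ for $\alpha>1$ unless $p_i\ll q_i$, so the finiteness hypothesis forces $(a,b)\subseteq(0,\oo)$, and if $b>1$ it forces $p_i\ll q_i$ for $i=1,2$. In the latter case \eqref{def:better_def} reduces to the ordinary R\'enyi divergence on $(1,b)$ and \cref{thm:classical_solution} applies directly. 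So the remaining case is $(a,b)\subseteq(0,1)$.

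Second, peel off the singular parts. Write the Lebesgue decompositions $p_i=\lambda_i p_i^\|+(1-\lambda_i)p_i^\perp$ with $p_i^\|,p_i^\perp$ probability measures; the hypothesis $p_i\not\perp q_i$ gives $\lambda_i\in(0,1]$. Fix a $q_i$-null set carrying $p_i^\perp$ and adjoin one point $*_i$ of $q_i$-mass $0$ to $X_i$. The pushforward along the map that is the identity off this null set and collapses it to $*_i$ is a stochastic map sending $(p_i,q_i)$ to $(\lambda_i p_i^\|+(1-\lambda_i)\delta_{*_i},q_i)$, while the map "identity on $X_i$, prepare $p_i^\perp$ at $*_i$" is a stochastic map the other way; hence $(p_i,q_i)\leftrightarrow(\lambda_i p_i^\|+(1-\lambda_i)\delta_{*_i},q_i)$. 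Since $p_i^\|\ll q_i$, \cref{thm:tilde} (after \cref{thm:WLOG_mu=q}) provides a Borel probability measure $\tilde q_i$ on $(-\oo,\oo]$ with $Q_\alpha(p_i^\|,q_i)=\int e^{-\alpha t}\,d\tilde q_i(t)$ for all $\alpha\in\RR$ and $(p_i^\|,q_i)\leftrightarrow(\tilde p_i,\tilde q_i)$, $\tilde p_i:=e^{-t}\tilde q_i$. Extending these interconverting channels by the identity on $*_i$ yields, for $i=1,2$, the normal form $(p_i,q_i)\leftrightarrow(\lambda_i\tilde p_i+(1-\lambda_i)\delta_{*_i},\tilde q_i)$.

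Third, match the two normal forms. For $\alpha\in(a,b)\subseteq(0,1)$, \eqref{def:better_def} gives
\[
 e^{(\alpha-1)D_\alpha(p_i,q_i)}=\lambda_i^\alpha\,Q_\alpha(p_i^\|,q_i)=\int_{(-\oo,\oo]} e^{-\alpha u}\,d\nu_i(u),
\]
where $\nu_i$ is the pushforward of $\tilde q_i$ under the translation $t\mapsto t-\log\lambda_i$, a Borel probability measure on $(-\oo,\oo]$ whose atom at $+\oo$ does not contribute since $\alpha>0$. These integrals are finite, so the hypothesis $D_\alpha(p_1,q_1)=D_\alpha(p_2,q_2)$ on $(a,b)$ together with \cref{thm:injectivity} (applied to the restrictions $\nu_i|_\RR$, and then comparing total masses, both equal to $1$) forces $\nu_1=\nu_2=:\nu$. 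But the normalisation of $\tilde p_i$ is precisely the identity $\lambda_i=\int e^{-u}\,d\nu_i(u)$, the same value for $i=1,2$; hence $\lambda_1=\lambda_2$, and therefore $\tilde q_1=\tilde q_2$ and $\tilde p_1=\tilde p_2$. The two normal forms then coincide up to the relabelling $*_1\mapsto *_2$, and transitivity of $\leftrightarrow$ yields $(p_1,q_1)\leftrightarrow(p_2,q_2)$ via stochastic maps between the original $L^1$-spaces.

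I expect the main obstacle to be essentially bookkeeping: checking that every map around the adjoined points $*_i$ is a bona fide stochastic map between the relevant $L^1$-spaces, and that the descent from measures on $(-\oo,\oo]$ to finite Borel measures on $\RR$ (discarding the atoms at $+\oo$) is legitimate when invoking \cref{thm:injectivity}. The one conceptually new point, compared with the absolutely continuous case, is that the constraint $\int e^{-u}\,d\nu_i=\lambda_i$ recovers the Lebesgue weight $\lambda_i$ from $\nu_i$, so that equality of the rescaled Laplace transforms pins down the weights and the measures simultaneously.
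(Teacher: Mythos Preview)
Your argument is correct and follows essentially the same route as the paper. The paper's main proof sketch reduces to the subnormalized version of \cref{thm:classical_solution} (obtaining $\lambda_1=\lambda_2$ and $(p_1^\|,q_1)\leftrightarrow(p_2^\|,q_2)$) and then reattaches the singular parts via a separate lemma (\cref{thm:abscontintconv}); the paper's alternative ``direct proof'' instead extends $f=-\log(dp/dq)$ to take values in $\overline\RR=[-\oo,\oo]$, pushing the singular part to an atom at $-\oo$. Your device of adjoining a single point $*_i$ is exactly the latter construction in disguise (identify $*_i$ with $-\oo$), and your recovery of $\lambda_i$ from $\int e^{-u}\,d\nu_i$ is the same normalisation argument the paper uses. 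The only cosmetic difference is that you pass through \cref{thm:tilde} and then adjoin $*_i$, whereas the paper's direct proof builds the extended pushforward in one step; both yield the same normal form on $[-\oo,\oo]$.
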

    Some remarks are in order:
    First, if $p_1\perp q_1$ and $p_2\perp q_2$ the two dichotomies are trivially interconvertible. Therefore we don't loose much by excluding this case (one could incorporate it by considering $Q_\alpha$ instead of $D_\alpha$).

    Second, let us comment on the role of the interval $(a,b)$:
    Suppose that $p_i$ is not absolutely continuous with respect to $q_i$. Then the conditions of the Theorem require $0\leq a < b \leq 1$, because otherwise the R\'enyi divergence is infinite by definition. 
    Conversely, if  $p_i\ll q_i$ and $b>1$, then we already know from \cref{thm:classical_solution} that the R\'enyi divergences must match for all $0<\alpha<b$, since the two dichotomies are interconvertible and the R\'enyi-divergences are non-decreasing in $\alpha$.
    Thus we can in fact always assume $(a,b)\subseteq (0,1)$. 

    \begin{proof}[Proof sketch]
    We only need to discuss the case $\lambda<1$. 
    We thus get as input $D_\alpha(\lambda_1 p_{1}^\|,q_1) = D_\alpha (\lambda_2 p_2^\|,q_2)$. 
    One can now check that the proof of \cref{thm:classical_solution_intro} also applies to subnormalized measures $p_i$, which then implies $(\lambda_1 p_1^\|,q_1) \leftrightarrow (\lambda_2 p_2^\|, q_2)$ (in particular $\lambda_1=\lambda_2=:\lambda$).
    The result then follows from \cref{thm:abscontintconv} below.
    \end{proof}
    \begin{lemma}\label{thm:abscontintconv} 
    Let $(X_i,\mu_i)$, $i=1,2$, be $\sigma$-finite measure spaces and let $p_i^\|,p_i^\perp,q_i\in L^1(X_i,\mu_i)$ be probability measures such that $p_i^\|\ll q_i$ and $p_i^\perp\perp q_i$.
    Suppose $(p_1^\|,q_1)\leftrightarrow (p_2^\|,q_2)$.
    Then for any $\lambda \in [0,1]$ we also have
    \begin{align}
        (\lambda p_1^\| + (1-\lambda) p_1^\perp,q_1) \leftrightarrow  (\lambda p_2^\| + (1-\lambda) p_2^\perp,q_2) 
    \end{align}
    \begin{proof}
        If $r\in L^1(X_1,\mu_1)$, we define $r_s$ and $r_c$ to be the singular and absolutely continuous parts of $r$ with respect to $q_1$.
        Note that $r\mapsto r_s$ and $r\mapsto r_c$ are linear substochastic maps and that $r=r_s+r_c$. We have $(p_1)_c=\lambda p^\|_1$, $(p_1)_s=(1-\lambda)p_1^\perp$, $(q_1)_s=0$ and $(q_1)_c=q_1$.
        Set $p_i=\lambda p_i^\|+(1-\lambda)p_i^\perp$.
        Let $T_1$ be a stochastic map taking $(p_1^\|,q_1)$ to $(p_2^\|,q_2)$.
        Define $$S_1(r) = r_s(X_1)p_2^\perp + T_1(r_c).$$
        Indeed, $S_1(p_1)= (1-\lambda)p_2^\perp+T_1(\lambda p_1^\|)=(1-\lambda)p_2^\perp+\lambda p_2^\|=p_2$ and $S_1(q_1)=T_1(q_1)=q_2$.
        Constructing $S_2$ mapping in the other direction analogously proves the claim.
    \end{proof}
\end{lemma}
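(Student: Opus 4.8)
The plan is to prove interconvertibility constructively, by writing down explicit stochastic maps built from the Lebesgue decomposition with respect to $q_i$ together with the given interconversion $(p_1^\|,q_1)\leftrightarrow(p_2^\|,q_2)$. Abbreviate $p_i := \lambda p_i^\| + (1-\lambda)p_i^\perp$; the goal is to exhibit stochastic maps $S_1 : L^1(X_1,\mu_1)\to L^1(X_2,\mu_2)$ and $S_2$ in the reverse direction carrying $(p_1,q_1)$ to $(p_2,q_2)$ and back.

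First I would recall that on a $\sigma$-finite measure space every element $r\in L^1(X_1,\mu_1)$ has a Lebesgue decomposition $r = r_c + r_s$ relative to $q_1$, with $r_c\ll q_1$ and $r_s\perp q_1$, and that the assignments $r\mapsto r_c$, $r\mapsto r_s$ are \emph{linear}, positivity-preserving, and satisfy $r_c(X_1)+r_s(X_1)=r(X_1)$, hence are substochastic. Uniqueness of the decomposition applied to $p_1$ gives $(p_1)_c = \lambda p_1^\|$ and $(p_1)_s = (1-\lambda)p_1^\perp$, while $(q_1)_c = q_1$ and $(q_1)_s = 0$ since $q_1\ll q_1$.

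Next, let $T_1$ be a stochastic map with $T_1(p_1^\|)=p_2^\|$, $T_1(q_1)=q_2$ (this is the hypothesis), and set $S_1(r) := r_s(X_1)\,p_2^\perp + T_1(r_c)$. This $S_1$ is positive, and it is trace preserving because $S_1(r)(X_2) = r_s(X_1) + (T_1 r_c)(X_2) = r_s(X_1)+r_c(X_1) = r(X_1)$, so $S_1$ is stochastic. Evaluating on the two distinguished inputs: $S_1(p_1) = (1-\lambda)p_2^\perp + T_1(\lambda p_1^\|) = \lambda p_2^\| + (1-\lambda)p_2^\perp = p_2$ and $S_1(q_1) = 0\cdot p_2^\perp + T_1(q_1) = q_2$. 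Interchanging the roles of the two systems and using a stochastic map $T_2$ with $T_2(p_2^\|)=p_1^\|$, $T_2(q_2)=q_1$ produces $S_2$ the same way, which finishes the argument.

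I do not expect a real obstacle here. The only points needing care are (i) that the Lebesgue-decomposition projections are genuinely linear and substochastic on all of $L^1$ — this is precisely where $\sigma$-finiteness enters, guaranteeing a canonical decomposition — and (ii) bookkeeping of the sub-normalizations, so that the weights $\lambda$ and $1-\lambda$ come back out correctly after applying $S_1$; everything else is a direct check on $p_1$ and $q_1$.
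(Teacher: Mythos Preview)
Your proposal is correct and follows essentially the same construction as the paper: both define $S_1(r) = r_s(X_1)\,p_2^\perp + T_1(r_c)$ via the Lebesgue decomposition relative to $q_1$, verify it on $p_1$ and $q_1$, and build $S_2$ symmetrically. Your version is slightly more explicit in checking that $S_1$ is positive and trace-preserving, but otherwise the arguments are identical.
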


We proved \cref{thm:classical_solution_without_ac} by reducing it to \cref{thm:classical_solution}.
Alternatively, we can adapt the proof of \cref{thm:classical_solution} so that it does not require absolute continuity to begin with. We sketch this here:

\begin{proof}[Direct proof of \cref{thm:classical_solution_without_ac}]
    Let $(p,q)$ be a dichotomy on a $\sigma$-finite measure space $(X,\mu)$.
    Let $p = p_c+p_s$ be the Lebesgue decomposition with respect to $q$.
    First, note that we may choose $\mu=p_s+q$ without loss of generality.
    Set $\overline \RR = \RR\cup\{+\oo,-\oo\}$.
    We start with an arbitrary dichotomy of probability measures on $L^1(X,\mu)$ and define a measurable function
    \begin{equation}
        f(x)=\infty\cdot \chi_{\supp(q) \setminus\supp(p)}(x) - \log\big(\tfrac{dp_c}{dq}(x)\big)\chi_{\supp(q)}(x) - \oo \cdot\chi_{\supp(p_s)}(x).
    \end{equation}
    This is just $f=-\log(\tfrac{dp}{dq})$ if we interpret $\tfrac{dp}{dq}$ as infinite on $\supp(p)\setminus\supp(q)$ (and as zero on $\supp(q)\setminus\supp(p)$ as usual).
    We again consider the push-forward measure $\tilde\mu := f_*(\mu)$ and $(\tilde p,\tilde q)=(f_*(p),f_*(q))$ which is a dichotomy in $L^1(\overline\RR,\tilde\mu)$.
    For measures $r\ll \tilde\mu$ we denote by $r_s$ (resp.\ $r_c$) the singular (resp.\ absolutely) part with respect to $\tilde q$.
    With this we have $\tilde\mu=\tilde p_s+\tilde q$ and it holds that
    \begin{equation}
        \tilde p(\{\oo\})=\tilde q(\{-\oo\})=0, \ \ \tilde p(\{-\oo\})= p_s(X), \ \ \tilde q(\{\oo\})=q(X\setminus\supp(p)).
    \end{equation}
    Just as in \cref{eq:Radon-Niko}, one sees that 
    \begin{equation}
        \frac{d\tilde p_c}{d\tilde q}(t) =e^{-t} \chi_{(-\oo,\oo]}(t).
    \end{equation}
    We now define stochastic maps $f_*:L^1(X,\mu)\to L^1(\overline{\RR},\tilde\mu)$ and $R:L^1(\overline{\RR},\tilde\mu)\to L^1(X,\mu)$.
    The map $f_*$ is the push-forward with $f$ and $R$ is given by
    \begin{equation}
        R(g\cdot\tilde\mu)=(g\circ f)\mu = g(-\oo) p_s + (g\circ f)\cdot q.
    \end{equation}
    This map is positive, preserves normalization and indeed satisfies $R(\tilde q)= R(\chi_{(-\oo,\oo]}\cdot\tilde\mu) = (\chi_{(-\oo,\oo]}\circ f)\cdot q=q$ and $R(\tilde p) = R\big((\chi_{\{-\oo\}}+ \tfrac{d\tilde p_c}{d\tilde q})\tilde\mu\big)= p_s + e^{-f}\cdot q = p_s+p_c=p$.
    With the definition \cref{def:better_def} we have
    \begin{equation}
        D_\alpha(p,q) = D_\alpha(\tilde p,\tilde q) = \int_{-\oo}^\oo e^{-t\alpha} d\tilde q(t),\quad \alpha\in(0,1).
    \end{equation}
    If we had two dichotomies $(p_i,q_i)$ with equal R\'enyi divergences then this implies that the two-sided Laplace transforms of $\tilde q_1$ and $\tilde q_2$ are equal.
    By injectivity of the Laplace transform (see \cref{thm:injectivity}), it follows that $\tilde q_1=\tilde q_2$ and hence that $(\tilde p_1)_c=e^{-t}\tilde q_1= e^{-t}\tilde q_2=(\tilde p_2)_c$.
    Furthermore, $(p_1)_s(X_1) = 1-(\tilde p_1)_c(\overline\RR) =1-(\tilde p_2)_c(\overline\RR) = (p_2)_s(X_2)$.
    Therefore, we also have $\tilde p_1=  (\tilde p_1)_c + (\tilde p_1)_s = (\tilde p_1)_c + (p_1)_s(X_1)\delta_{-\oo} = (\tilde p_2)_c + (p_2)_s(X_2)\delta_{-\oo} = \tilde p_2$.
    With this we also get $\tilde\mu_1 = \tilde\mu_2$ which proves interconvertibility because 
    $$
    (p_1,q_1)\leftrightarrow(\tilde p_1,\tilde q_1) = (\tilde p_2,\tilde q_2) \leftrightarrow (p_2,q_2).
    $$
\end{proof}

\section{Convertibility and interconvertibility for probability vectors}

\label{sec:explicit}
Here, we discuss the Koashi-Imoto minimal form and solution to the classical case (see \cref{thm:classical_solution_intro}) for pairs of probability vectors, where everything can be constructed explicitly.

\paragraph{Minimal form.}
Let $\mathbf p, \mathbf q\in \RR^n$ be probability vectors. Without loss of generality, we assume that $\mathbf p+\mathbf q$ has no zero entry and let $l$ be the number of zero entries of $\mathbf q$.
If necessary, apply a permutation to $\mathbf p$ and $\mathbf q$, so that
\begin{equation}
   \mathbf  r = (r_1,\dots,r_n) = \Big( \frac{p_1}{q_1},\frac{p_2}{q_2}, \ldots, \frac{p_n}{q_n}\Big) 
\end{equation}
is ordered such that $r_1\ge r_2\ge \ldots\ge r_n$. If $l>0$, $r_1=\ldots=r_l = \infty$. 
Denote by $m$ the number of \emph{distinct} entries of the vector $\mathbf r$. For example, if $\mathbf p = (\tfrac{1}{12},\tfrac12,\tfrac14,\tfrac{1}{12},\tfrac{1}{12},0)$ and $\mathbf q = (0,\tfrac{1}{12},\tfrac{1}{12},\tfrac16,\tfrac16,\tfrac12)$, then $\mathbf r=(\infty,6,3,\tfrac12,\tfrac12,0)$ and $m=5$.
Each entry of $\mathbf r$ appears with a certain multiplicity which we denote by $d_1,\ldots,d_m$ and we collect the corresponding indices in a set $I_j$ with size $d_j=|I_j|$ (in the above example this means $d_1=d_2=d_3=1,d_4=2,d_5=1$ and $I_1=\{1\}, I_2=\{2\}, I_3=\{3\},I_4=\{4,5\},I_5=\{6\}$).
The minimal form is given by the probability vectors $\tilde{\mathbf p}$ and $\tilde{\mathbf q}$ on $\RR^m$ with entries
\begin{equation}
    \tilde p_j = \sum_{k\in I_j} p_k,\quad \tilde q_j = \sum_{k\in I_j} q_k.
\end{equation}
Let us define $s(j) := \sum_{k=1}^{j-1}d_k$ and the vectors $\mathbf w^{(j)} \in \RR^{d_j}$ with
\begin{align}
    \quad w^{(1)}_k = \frac{p_{k}}{\tilde p_1},\quad w^{(j)}_{k} = \frac{q_{s(j)+k}}{\tilde q_j},\quad j=2,\ldots,m.
\end{align}
Note that $\frac{p_j}{q_j} = \frac{\tilde p_k}{\tilde q_k}$ for all $k$ and $j\in I_k$ as long as $0\neq \tilde q_k,\tilde p_k$.
The dichotomy $(\mathbf p,\mathbf q)$ can be interconverted with its minimal form $(\tilde{\mathbf p},\tilde{\mathbf q})$ using the stochastic matrices $T \in M_{m\times n}(\RR)$ and $R\in M_{n\times m}(\RR)$ given by
\begin{equation}\label{eq:explicit_matrices}
T=
\begin{pmatrix}%
    1\cdots1 &         &       &       \\
             & 1\cdots1&       &       \\
             &         & \ddots&       \\
    \undermat{d_1}{\phantom{1\cdots1}}   & \undermat{d_2}{\phantom{1\cdots1}}&       &\undermat{d_m}{1\cdots1}
\end{pmatrix},
\qquad
R=
\begin{pmatrix}
    w^{(1)}_1     &        & \\
    \vdots        &        & \\
    w^{(1)}_{d_1} & \ddots & \\
                  &        &w^{(m)}_1\\
                  &        & \vdots\\
                  &        &w^{(m)}_{d_m}
\end{pmatrix}.
\end{equation}
Both matrices $T$ and $R$ have a block form. In the case of $T$ these blocks have dimension $1\times d_i$ and for $R$ they have dimension $d_i\times 1$.
Note that $R$ could be written as $R=\mathrm{diag}(\mathbf w^{(1)},\ldots,\mathbf w^{(m)})$ with some abuse of notation.
We will now show that $(\tilde{\mathbf p},\tilde{\mathbf q})$ are the Koashi-Imoto minimal form of $(\mathbf p,\mathbf q)$. 
We may decompose $\RR^n$ and the probability vectors $\mathbf p$ and $\mathbf q$ as
\begin{equation}
    \RR^n = \bigoplus_{j=1}^m\RR^{d_j},\quad \mathbf{p} = \oplus_{j=1}^m \tilde p_j\, \mathbf{w_j},\quad \mathbf{q} = \oplus_{j=1}^m \tilde q_j \,\mathbf w_j.
\end{equation}
This is precisely the classical version of the Koashi-Imoto decomposition for the dichotomy $(\mathbf p,\mathbf q)$.
We see that the $\J_j$ spaces are all trivial (anything else would imply a form of non-commutativity) and that the $\K_j$ spaces are given by $\RR^{d_j}$ with the role of the states $\omega_j$ taken by the vectors $\mathbf w_j$.
If we "trace out" the $\mathbf w_j$'s, then we obtain the vectors $\tilde{\mathbf p}$ and $\tilde{\mathbf q}$.
That this decomposition is, in fact, the minimal one follows from the proof of \cref{thm:classical_solution_vecs}.
The stochastic matrix $T$ effectively implements the push-forward map $f_*$ with $f=-\log(\tfrac{dp}{dq})$ while $R$ implements the associated recovery map.
We will also explicitly see the minimality below.

\paragraph{Interconvertibility.}
Let $(\mathbf p_1,\mathbf q_1), (\mathbf p_2,\mathbf q_2)$ be classical dichotomies.
Without loss of generality we again assume the ordering as above.
Denote the minimal forms by $(\tilde{\mathbf p}_i, \tilde{\mathbf q}_i)$ and the stochastic matrices  interconverting them with $(\mathbf p_i,\mathbf q_i)$ by $T_i$ and $R_i$ (as in \cref{eq:explicit_matrices}).
In this case the interconvertibility, if possible, can always be achieved using the maps $R_1T_2$ and $R_2T_1$.
Collecting our results we get:

\begin{corollary}
    The following are equivalent
    \begin{enumerate}[(1)]
        \item $(\mathbf p_1,\mathbf q_1)$ and $(\mathbf p_2,\mathbf q_2)$ are interconvertible,
        \item\label{item:app_renyis} $D_\alpha(\mathbf p_1,\mathbf q_1)=D_\alpha(\mathbf p_2,\mathbf q_2)$ for all $\alpha$ in an open interval $(a,b)\subset (0,\infty)$  (with $(a,b)\subseteq (0,1)$ if $\mathbf p_i \ll \mathbf q_i$ does not hold.),
        \item\label{item:app_minimalvectors} $(\tilde{\mathbf p}_1,\tilde{\mathbf q}_1)=(\tilde{\mathbf p}_2,\tilde{\mathbf q}_2)$,
        \item $R_1 T_2 (\mathbf p_2,\mathbf q_2) = (\mathbf p_1,\mathbf q_1)$ and $(\mathbf p_2,\mathbf q_2) = R_2T_1(\mathbf p_2,\mathbf q_2)$.
    \end{enumerate}
\end{corollary}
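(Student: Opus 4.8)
The plan is to close the equivalence by proving the cycle $(3)\Rightarrow(4)\Rightarrow(1)\Rightarrow(2)\Rightarrow(3)$, the only substantive link being the last one. The main preliminary observation — essentially already recorded in the discussion preceding the corollary — is that the matrices $T_i,R_i$ of \eqref{eq:explicit_matrices} really do interconvert $(\mathbf p_i,\mathbf q_i)$ with $(\tilde{\mathbf p}_i,\tilde{\mathbf q}_i)$. I would first check that $T_i,R_i$ are stochastic: each column of $T_i$ has a single nonzero entry, equal to $1$, because the sets $I_j$ partition $\{1,\dots,n\}$; and each column $\mathbf w^{(j)}$ of $R_i$ sums to $1$, since $\sum_{k\in I_j}q_{s(j)+k}=\tilde q_j$ for $j\ge 2$ and $\sum_k p_k=\tilde p_1$ — here one uses that $\tilde q_j>0$ whenever $r_j<\infty$ and that $\tilde p_1>0$ always, both consequences of the ordering $r_1\ge\cdots\ge r_m$ and of $\mathbf p,\mathbf q$ being probability vectors. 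A block-wise computation then gives $T_i(\mathbf p_i,\mathbf q_i)=(\tilde{\mathbf p}_i,\tilde{\mathbf q}_i)$ straight from the definitions $\tilde p_j=\sum_{k\in I_j}p_k$, $\tilde q_j=\sum_{k\in I_j}q_k$, and $R_i(\tilde{\mathbf p}_i,\tilde{\mathbf q}_i)=(\mathbf p_i,\mathbf q_i)$ using $p_k=(\tilde p_j/\tilde q_j)\,q_k$ on each block with $r_j$ finite and handling the block $I_1$ with $r_1=\infty$ (if present) separately.

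Granting this, $(3)\Rightarrow(4)$ is immediate: if $(\tilde{\mathbf p}_1,\tilde{\mathbf q}_1)=(\tilde{\mathbf p}_2,\tilde{\mathbf q}_2)$ (so in particular $m_1=m_2$ and $R_1T_2$ is well-defined) then $R_1T_2$ carries $(\mathbf p_2,\mathbf q_2)$ to the common minimal form and thence to $(\mathbf p_1,\mathbf q_1)$, and symmetrically for $R_2T_1$. Then $(4)\Rightarrow(1)$ holds because products of stochastic matrices are stochastic. For $(1)\Rightarrow(2)$ I would note that $D_\alpha(\mathbf p_i,\mathbf q_i)$ is finite on the relevant interval — for every $\alpha>0$ when $\mathbf p_i\ll\mathbf q_i$, and for $\alpha\in(0,1)$ otherwise (this is exactly what \eqref{def:better_def} gives when $\mathbf p_i$ is not orthogonal to $\mathbf q_i$) — and then apply the data-processing inequality \eqref{eq:interconvertibiblity-to-divergence} to both interconverting maps to obtain equality of the divergences on that interval.

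The substantive implication $(2)\Rightarrow(3)$ I would reduce to the solved classical case. Under the hypothesis of $(2)$, \cref{thm:classical_solution} applies when $\mathbf p_i\ll\mathbf q_i$ and \cref{thm:classical_solution_without_ac} otherwise; moreover the proofs of those theorems — via \cref{thm:tilde} and the injectivity of the two-sided Laplace transform (\cref{thm:injectivity}) — actually produce \emph{equal} normal forms, not merely interconvertibility. It then remains to identify the abstract normal form with the explicit pair $(\tilde{\mathbf p}_i,\tilde{\mathbf q}_i)$: by \cref{thm:WLOG_mu=q} one may take $\mathbf q_i$ as the reference measure, and the push-forward of $\mathbf q_i$ under $f=-\log(d\mathbf p_i/d\mathbf q_i)$ is precisely the atomic measure placing mass $\sum_{k\in I_j}q_{i,k}=\tilde q_{i,j}$ at the point $-\log r_{i,j}$ (together with mass $\tilde p_{i,1}$ at $-\infty$ when $r_{i,1}=\infty$), from which $\tilde{\mathbf p}_i$ is recovered through $\tilde p_{i,j}=r_{i,j}\tilde q_{i,j}$. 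Since the distinct ratios $r_{i,1}>\cdots>r_{i,m_i}$ pin down a canonical order, equality of these measures is equality of the vectors, i.e.\ $(\tilde{\mathbf p}_1,\tilde{\mathbf q}_1)=(\tilde{\mathbf p}_2,\tilde{\mathbf q}_2)$.

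I expect the only delicate point to be the bookkeeping inside $(2)\Rightarrow(3)$: making the dictionary between the explicit minimal form and the push-forward normal form of \cref{thm:tilde} fully rigorous, in particular tracking the atoms of $\mathbf r$ at $0$ and $\infty$ (equivalently, the atoms of the push-forward at $\pm\infty$) and respecting the constraint that $(a,b)$ must sit inside $(0,1)$ in the non-absolutely-continuous case, which is what keeps the divergences finite and lets \cref{thm:classical_solution_without_ac} apply. The remaining steps are routine linear algebra and a direct invocation of data processing.
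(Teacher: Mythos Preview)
Your cycle $(3)\Rightarrow(4)\Rightarrow(1)\Rightarrow(2)\Rightarrow(3)$ is correct, and the first three links are handled just as the paper does (indeed ``Collecting our results we get'' signals that those implications are meant to follow from the explicit matrices $T_i,R_i$ and the data-processing inequality, exactly as you spell out).

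The one genuine difference is in $(2)\Rightarrow(3)$. You route this through the general measure-theoretic machinery: invoke \cref{thm:classical_solution} or \cref{thm:classical_solution_without_ac}, extract from their \emph{proofs} the equality of the push-forward normal forms, and then identify those measures with the explicit vectors $(\tilde{\mathbf p}_i,\tilde{\mathbf q}_i)$ via a dictionary between atoms of $f_*(q)$ and the blocks $I_j$. This is sound, but it is more indirect than what the paper does. The paper instead gives an elementary, self-contained argument tailored to probability vectors: it writes out
\[
e^{(\alpha-1)D_\alpha(\mathbf p,\mathbf q)}=\lambda^\alpha\sum_{j\ge 2}\tilde q_j\,\tilde r_j^{\,\alpha},
\]
observes that exponential functions $\alpha\mapsto e^{\alpha\log\tilde r_j}$ with distinct $\tilde r_j$ are linearly independent, and reads off $\tilde q_{1,j}=\tilde q_{2,j}$, $\tilde r_{1,j}=\tilde r_{2,j}$ and $\lambda_1=\lambda_2$ directly. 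This avoids the Laplace-transform injectivity lemma and the push-forward bookkeeping altogether. Your approach has the virtue of making transparent that the finite-dimensional corollary is literally a specialisation of the general theorem (and that $(\tilde{\mathbf p},\tilde{\mathbf q})$ \emph{is} the push-forward normal form of \cref{thm:tilde}); the paper's approach is shorter and entirely elementary in the discrete setting.
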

In the case of finite-dimensional probability vectors the equivalence of \ref{item:app_renyis} and \ref{item:app_minimalvectors} can be seen explicitly as follows. We treat the general case where $\mathbf p\ll \mathbf q$ does not hold. 
Since $r_k$ is constant for  $k\in I_j$ we find (assuming $\tilde q_1=0$ and setting $\lambda = \sum_{j=2}^m \tilde p_j= 1-\tilde p_1$, compare wit \cref{def:better_def}) 
\begin{align}\label{eq:divs-vecs}
    \exp[(\alpha-1)D_\alpha(\mathbf p,\mathbf q)] = \lambda^\alpha \sum_{j=2}^m \sum_{k\in I_j} q_k r_k^\alpha = \lambda^\alpha \sum_{j=2}^m \tilde q_j \tilde r_j^\alpha = \exp[(\alpha-1)D_\alpha(\tilde{\mathbf p},\tilde{\mathbf q})]
\end{align}
with $\tilde r_j := \tilde p_j/\tilde q_j = r_k$ for $k\in I_j$. 
Therefore $D_\alpha(\mathbf p,\mathbf q) = D_\alpha(\tilde{\mathbf p},\tilde{\mathbf q})$. 
Since $\tilde r^\alpha_j = \exp(\alpha \log(\tilde r_j))$ and exponential functions $x\mapsto \exp(a x)$ are linearly independent for distinct $a$, we find that $D_\alpha(\mathbf p_1,\mathbf q_1) = D_\alpha(\mathbf p_2,\mathbf q_2) \ \ \forall \alpha\in(a,b)$ implies
\begin{align}
    \tilde q_{1,j} = \tilde q_{2,j}, \quad \tilde r_{1,j} = \tilde r_{2,j},\quad \lambda_1=\lambda_2, \quad j=2,\ldots,m
\end{align}
where we used $\lambda_i = \sum_{j=2}^m \tilde p_{i,j} = \sum_{j=2}^m \tilde r_{i,j} \tilde q_j$. Hence, we can reconstruct $\tilde p_{1,j} = \tilde p_{2,j}$ for $j=1,\ldots,m$. 
We thus find
\begin{align}
    D_\alpha(\mathbf p_1,\mathbf q_1) = D_\alpha(\mathbf p_2,\mathbf q_2) \ \ \forall \alpha\in(a,b)\quad \Leftrightarrow\quad (\tilde{\mathbf p}_1,\tilde{\mathbf q}_1) = (\tilde{\mathbf p}_2,\tilde{\mathbf q}_2). 
\end{align}

\paragraph{Lorenz curves.} It is useful to visualize the Koashi-Imoto decomposition for probability vectors. Given a dichotomy $(\mathbf p,\mathbf q)$ of $n$-dimensional probability vectors (again ordered according to the ratios $r_j$), consider the piece-wise linear function $x\mapsto L_{(\mathbf p,\mathbf q)}(x)$ whose graph is given by linearly joining the  $n+1$ points
\begin{align}
    (x_j,y_j) = (\sum_{k=1}^j q_k,\sum_{k=1}^j p_k),\quad j=1,\ldots,n,\quad (x_0,y_0)=0.
\end{align}
(The graph of) $L_{(\mathbf p,\mathbf q)}$ is called a \emph{Lorenz curve}.
Its slope between $x_j$ and $x_{j+1}$ is given precisely by $r_j$. 
Since $r_j \geq r_{j+1}$ the curve is concave,
see Fig.~\ref{fig:lorenz} for an example.
\begin{figure}
    \centering
    \includegraphics{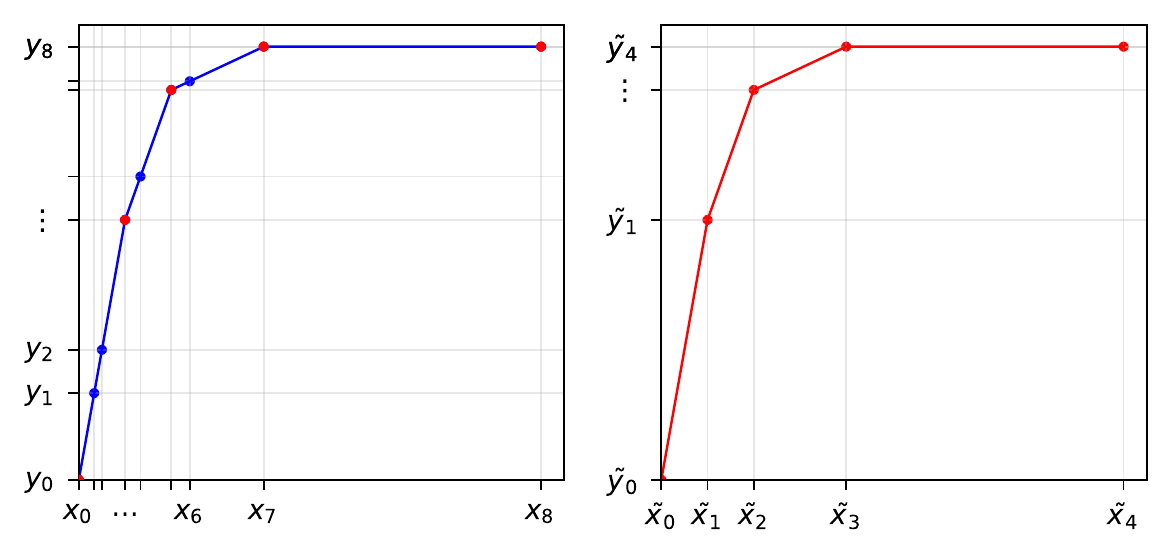}
    \caption{Koashi-Imoto minimal form from the point of view of Lorenz curves:
    The blue curve on the left corresponds to $L_{(\mathbf p,\mathbf q)}$. Removing all points from the Lorenz curve that lie in the interior of straight line segments (the blue points) yields the Lorenz curve $L_{(\tilde{\mathbf p},\tilde{\mathbf q})}$ on the right. This corresponds to the map $T$. Obviously, no further "anchoring points" can be removed from the curve without altering it.}
    \label{fig:lorenz}
\end{figure}
It is known that \cite{blackwell_equivalent_1953,ruch_mixing_1978,uhlmann_ordnungsstrukturen_1978,renes_relative_2016}
\begin{align}
    (\mathbf p_1, \mathbf q_1)\rightarrow (\mathbf p_2,\mathbf q_2) \Leftrightarrow L_{(\mathbf p_1, \mathbf q_1)}(x) \geq L_{(\mathbf p_2, \mathbf q_2)}(x)\quad \forall x\in[0,1].
\end{align}
If a Lorenz curve $L_{(\mathbf p, \mathbf q)}$
contains several consecutive linear segments with identical slope (corresponding to identical values of $r_j$), it is clear that the same curve can be obtained from a lower-dimensional vector, where the corresponding entries in $\mathbf p$ and $\mathbf q$ are simply summed up. 
Doing this for all consecutive linear segments with identical slopes precisely corresponds to the map $T$ above and yields $(\tilde{\mathbf p},\tilde{\mathbf q})$. From this picture it is also clear that the resulting dichotomy is minimal, see Fig.~\ref{fig:lorenz} for an illustration.

\paragraph{Convertibility from R\'enyi divergences for $\alpha>0$ alone.}
If $(\mathbf p_1,\mathbf q_1)\rightarrow (\mathbf p_2,\mathbf q_2)$ then $D_\alpha(\mathbf p_1,\mathbf q_1) \geq D_\alpha(\mathbf p_2,\mathbf q_2)$ due to the data-processing inequality.
Interestingly, the converse it not true: $D_\alpha(\mathbf p_1,\mathbf q_1) \geq D_\alpha(\mathbf p_2,\mathbf q_2)$ does in general not imply $(\mathbf p_1,\mathbf q_1)\rightarrow (\mathbf p_2,\mathbf q_2)$ \cite{brandao_second_2015}.
Still, we will now show that the knowledge of $D_\alpha(\mathbf p_1,\mathbf q_1)$ and $D_\alpha(\mathbf p_2,\mathbf q_2)$ for $\alpha\in(a,b)$ is sufficient to decide whether $(\mathbf p_1,\mathbf q_1)\rightarrow (\mathbf p_2,\mathbf q_2)$.

\begin{proposition}\label{prop:convertibility}
     There is an algorithm to determine if a given dichotomy $(\mathbf p_1,\mathbf q_1)$ is convertible to a dichotomy $(\mathbf p_2,\mathbf q_2)$, i.e., if $(\mathbf p_1,\mathbf q_1)\rightarrow(\mathbf p_2,\mathbf q_2)$, from the knowledge of the R\'enyi divergences $D_\alpha(\mathbf p_1,\mathbf q_1)$ and $D_\alpha(\mathbf p_2,\mathbf q_2)$ on any open subset $(a,b)\subset [0,\oo)$ on which they are both finite.
\end{proposition}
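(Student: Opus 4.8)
The plan is to show that the R\'enyi divergences on $(a,b)$ already determine the Koashi--Imoto minimal forms $(\tilde{\mathbf p}_i,\tilde{\mathbf q}_i)$ of both dichotomies, and then to reduce convertibility of the minimal forms to a finite comparison of Lorenz curves.

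\emph{Step 1: recovering the minimal forms from the divergences.} If $\mathbf p_i\ll\mathbf q_i$, then by \cref{thm:tilde} (equivalently \cref{introthm:ptildes}) the quantity $\alpha\mapsto e^{(\alpha-1)D_\alpha(\mathbf p_i,\mathbf q_i)}$ is the two-sided Laplace transform of a probability measure $\tilde q_i$, which in the finite-dimensional case is a finite sum of point masses $\tilde q_{i,j}$ located at $\log\tilde r_{i,j}$ (the distinct values of $p_k/q_k$). Since such a finite exponential sum $\alpha\mapsto\sum_j\tilde q_{i,j}e^{\alpha\log\tilde r_{i,j}}$ is real-analytic on $(0,\oo)$, its values on the open interval $(a,b)$ determine it everywhere on $(0,\oo)$, and linear independence of distinct exponentials then recovers all pairs $(\tilde r_{i,j},\tilde q_{i,j})$ with $\tilde r_{i,j}>0$; the block with $\tilde r_{i,j}=0$, if present, is fixed by normalization of $\tilde q_i$, and $\tilde{\mathbf p}_i$ is then $\tilde r_i\tilde{\mathbf q}_i$. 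If $\mathbf p_i\not\ll\mathbf q_i$, finiteness on $(a,b)$ forces $(a,b)\subseteq(0,1)$ by \cref{def:better_def}; one runs the same argument with the modified definition, as in the direct proof of \cref{thm:classical_solution_without_ac}, using injectivity of the two-sided Laplace transform (\cref{thm:injectivity}) to recover the absolutely continuous part of $\tilde q_i$ together with the scalar $\lambda_i$, while the mass of $\tilde q_i$ at $+\oo$ (the "$r=\infty$" block) and the singular mass of $\tilde p_i$ at $-\oo$ are then fixed by normalization of $\tilde q_i$ and $\tilde p_i$ respectively. In every case the minimal form is explicitly determined by the given data.

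\emph{Step 2: reduce to minimal forms and compare Lorenz curves.} Since $(\mathbf p_i,\mathbf q_i)\leftrightarrow(\tilde{\mathbf p}_i,\tilde{\mathbf q}_i)$ (\cref{thm:tilde} and \cref{sec:explicit}) and one-sided convertibility is transitive under composition with channels, we get
\[
  (\mathbf p_1,\mathbf q_1)\to(\mathbf p_2,\mathbf q_2)\iff(\tilde{\mathbf p}_1,\tilde{\mathbf q}_1)\to(\tilde{\mathbf p}_2,\tilde{\mathbf q}_2),
\]
so it suffices to decide convertibility of the reconstructed minimal forms. By the relative-majorization (Blackwell) criterion recalled in \cref{sec:explicit}, this holds if and only if $L_{(\tilde{\mathbf p}_1,\tilde{\mathbf q}_1)}(x)\ge L_{(\tilde{\mathbf p}_2,\tilde{\mathbf q}_2)}(x)$ for all $x\in[0,1]$. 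Both Lorenz curves are concave and piecewise linear with finitely many vertices, so this pointwise inequality is equivalent to the finitely many scalar inequalities obtained by evaluating $L_{(\tilde{\mathbf p}_1,\tilde{\mathbf q}_1)}$ at the (finitely many) vertices of $L_{(\tilde{\mathbf p}_2,\tilde{\mathbf q}_2)}$. Combining Steps 1 and 2 gives the claimed decision procedure.

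The Lorenz-curve comparison in Step 2 is entirely routine, and Step 1 in the absolutely continuous case is immediate from \cref{thm:tilde}; the only delicate point is the reconstruction in the non-absolutely-continuous case, where one must keep careful track of the Lebesgue decomposition and of the point masses of $\tilde q_i$ and $\tilde p_i$ at $\pm\oo$. However, this bookkeeping is exactly the one already performed in the direct proof of \cref{thm:classical_solution_without_ac} and in \cref{sec:explicit}, so no new ingredients are required.
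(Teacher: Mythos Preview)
Your argument is correct and follows essentially the same route as the paper: first reconstruct the Koashi--Imoto minimal forms $(\tilde{\mathbf p}_i,\tilde{\mathbf q}_i)$ from the R\'enyi divergences via injectivity of the (two-sided) Laplace transform / linear independence of distinct exponentials, then decide convertibility of the minimal forms by the Blackwell--Lorenz-curve criterion. One small slip in your Step~1 bookkeeping: since $f=-\log r$, the ``$r=\infty$'' block sits at $-\infty$ (where $\tilde q$ has no mass and $\tilde p$ carries the singular mass $1-\lambda$), while the mass of $\tilde q$ at $+\infty$ corresponds to the $r=0$ block; your normalization argument is unaffected, but the parenthetical identification is reversed.
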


This follows from the following Lemma since knowing the Lorenz curves allows one to decide convertibility (see previous paragraph) and since we can reconstruct the Lorenz curves from $(\tilde{\mathbf p},\tilde{\mathbf q})$ algorithmically.

\begin{lemma}
    There is an algorithm to reconstruct $(\tilde{\mathbf p},\tilde{\mathbf q})$ from the knowledge of $D_\alpha(\mathbf p,\mathbf q)$ over any open subset $(a,b)\subset[0,\infty)$ on which $D_\alpha(\mathbf p,\mathbf q) <\infty$.
\end{lemma}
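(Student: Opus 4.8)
The plan is to recover $(\tilde{\mathbf p},\tilde{\mathbf q})$ from the function $Q_\alpha(\mathbf p,\mathbf q):=e^{(\alpha-1)D_\alpha(\mathbf p,\mathbf q)}$, which the hypotheses let us compute on $(a,b)$. By \eqref{eq:divs-vecs}, putting $\lambda:=1-\tilde p_1$ (the mass $\mathbf p$ places where $\mathbf q$ vanishes, so that $\lambda=1$ exactly when $\mathbf p\ll\mathbf q$, and $(a,b)\subseteq(0,1)$ otherwise) and summing only over the blocks $j$ with $0<\tilde r_j<\infty$, one has
\begin{equation}
    Q_\alpha(\mathbf p,\mathbf q)=\sum_j \tilde q_j\, s_j^{\,\alpha},\qquad s_j:=\lambda\,\tilde r_j,
\end{equation}
a finite sum of exponentials $\alpha\mapsto e^{\alpha\log s_j}$ whose frequencies $\log s_j$ are pairwise \emph{distinct} -- distinctness being the defining feature of the minimal form -- and whose weights $\tilde q_j$ are strictly positive; the ratio-$0$ block, if present, contributes $0$ for $\alpha>0$ and thus leaves no trace in $Q_\alpha$.

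The first step is then the classical uniqueness-and-recovery fact for exponential sums: a finite sum $\alpha\mapsto\sum_{j=1}^{m'}c_j e^{\gamma_j\alpha}$ with $c_j\neq0$ and pairwise distinct $\gamma_j$ is, on any nonempty open interval, uniquely determined by its values there and reconstructible from them by a finite computation. Concretely, sampling $Q_\alpha(\mathbf p,\mathbf q)$ at equally spaced points inside $(a,b)$ reduces the task to recovering a sequence of the form $\sum_j d_j z_j^{\,n}$ from finitely many of its terms -- the classical Prony problem -- with the number of terms detected as the eventual rank of the associated Hankel matrices, so that no a priori bound on $m'$ is needed; the frequencies $\log s_j$ and the weights $\tilde q_j$ then come out of a polynomial-rooting step and a Vandermonde solve. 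Applied here, this returns the list of pairs $\{(s_j,\tilde q_j)\}$ indexed by the blocks with $0<\tilde r_j<\infty$. (Equivalently, one may use derivatives of $Q_\alpha$ at an interior point, legitimate because $Q_\alpha$ is real-analytic there.)

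It remains to recover the scaling $\lambda$ and the two "invisible" blocks. From $\sum_j s_j\tilde q_j=\lambda\sum_j\tilde r_j\tilde q_j=\lambda\sum_j\tilde p_j=\lambda^2$ we set $\lambda:=\big(\sum_j s_j\tilde q_j\big)^{1/2}$, whence $\tilde r_j=s_j/\lambda$ and $\tilde p_j=s_j\tilde q_j/\lambda$ on the recovered blocks; we then append a ratio-$\infty$ block with $(\tilde p,\tilde q)=(1-\lambda,\,0)$ precisely when $\lambda<1$, and a ratio-$0$ block with $(\tilde p,\tilde q)=\big(0,\,1-\sum_j\tilde q_j\big)$ precisely when $1-\sum_j\tilde q_j>0$. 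Ordering the blocks by decreasing $\tilde r_j$ reproduces the minimal form $(\tilde{\mathbf p},\tilde{\mathbf q})$ defined earlier in this appendix. Since the whole procedure is finite and uses only the values of $D_\alpha$ on $(a,b)$, it is the asserted algorithm; combined with the reduction of convertibility to Lorenz curves recalled above, it also yields \cref{prop:convertibility}.

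I expect the only genuinely delicate point to be the bookkeeping in this last step: the ratio-$0$ and ratio-$\infty$ blocks produce no exponential in $Q_\alpha$ on $(a,b)\subset(0,\infty)$, so their masses can be pinned down only indirectly, via the normalization $\sum_j\tilde q_j=1$ together with the nonlinear identity $\lambda^2=\sum_j s_j\tilde q_j$. Everything else -- the exponential-sum recovery and the identification of the reconstructed data with the Koashi-Imoto minimal form -- is either textbook or already carried out in the text.
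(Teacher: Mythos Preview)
Your argument is correct and follows the same route as the paper: write $Q_\alpha(\mathbf p,\mathbf q)=\sum_j\tilde q_j(\lambda\tilde r_j)^\alpha$, recover the frequencies $\lambda\tilde r_j$ and weights $\tilde q_j$ from this exponential sum, and then solve for $\lambda$ via $\lambda=\sqrt{\sum_j(\lambda\tilde r_j)\tilde q_j}$. The only notable differences are cosmetic or in your favor: the paper phrases the recovery step as ``perform an inverse Laplace transform'' (invoking the injectivity lemma for Laplace transforms of measures), whereas you give a concrete finite algorithm via Prony/Hankel, which is arguably more in the spirit of the word ``algorithm'' in the statement; and you explicitly handle the ratio-$0$ block via the normalization $\sum_j\tilde q_j=1$, a point the paper's proof glosses over.
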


\begin{proof}
     We have seen in \eqref{eq:divs-vecs} that (assuming the general case with $\tilde q_1=0$ and $\lambda = 1 - \tilde p_1$)
    \begin{align}
        \exp((\alpha-1)D_\alpha(\mathbf p,\mathbf q)) = \lambda^\alpha \sum_{j=2}^m \tilde q_j \tilde r_j^\alpha = \mc L[t\mapsto \sum_{j=2}^m \tilde q_j \delta(t+\log(\lambda \tilde r_j)](\alpha),
    \end{align}
    where $\mc L$ is the (two-sided) Laplace-transform. Since the Laplace-transform is injective on any open interval (see \cref{thm:injectivity}), we can perform an inverse Laplace transform and read off $\tilde q_j$ and $\lambda \tilde r_j$ for $j=2,\ldots,m$. Since $\sum_{j=2}^m \tilde r_j \tilde q_j = \sum_{j=2}^m \tilde p_j = \lambda$, we obtain 
    \begin{align}
        \lambda = 1-\tilde p_1 = \sqrt{\sum_{j=2}^m \lambda \tilde r_j \tilde q_j}
    \end{align}
    and consequently $\tilde p_j$ for all $j$. 
    Thus we have reconstructed $\tilde{\mathbf p}$ and $\tilde{\mathbf q}$ from the knowledge of the R\'enyi divergences.
\end{proof}

\section{Proof of Theorem~\ref*{thm:interconvertibility} for infinite-dimensional systems}\label{sec:appendix}

\renewcommand{\A}{\mc M}

The direction \ref{it:normal_form} $\Rightarrow$ \ref{it:interconvertibility} is straightforward. 
For \ref{it:interconvertibility} $\Rightarrow$ \ref{it:normal_form}, consider the Koashi-Imoto minimal forms $\hat \rho_\theta\up i$ acting on $\hat\H_i =\bigoplus_j \J_j\up i$.
Define the von Neumann algebra $\A_i =\bigoplus_j \B(\J_j\up i)$ whose pre-dual $(\M_i)_*=\bigoplus_j \T(\J_j)$ contains $\hat\rho_\theta\up i$.
We have positive trace-preserving maps $\pi_i: \T(\H_{S_i})\to (\M_i)_*$, $\iota_i:(\M_i)_*\to\T(\H_{S_i})$ as in \cref{eq:normal_intercon}.
Let $T_1$ and $T_2$ be channels interconverting $\rho_\theta\up1$ and $\rho_\theta\up2$.
Then $\hat T_1 = \pi_2 T_1\iota_1$ and $\hat T_2=\pi_1T_2\iota_2$ are completeley positive trace-preserving maps between $(\M_1)_*$ and $(\M_2)_*$ which interconvert $\hat\rho_\theta\up 1$ and $\hat\rho_\theta\up2$.
By the Koashi-Imoto theorem, it now holds that $\hat T_2\hat T_1=\id$ and $\hat T_1\hat T_2=\id$.
Consider the unital completely positive maps $\alpha_i : \A_i\to \A_j$ dual to $\hat T_j$, $j\ne i$.
We then obtain $\alpha_1\alpha_2=\id_{\A_2}$ and $\alpha_2\alpha_1=\id_{\A_1}$.
Since the $\alpha_i$ are contractive, but their product is isometric, it follows that the $\alpha_i$ are both isometric.
From Kadison's theorem on completely positive isometries \cite[Thm.~7]{kadison}, we obtain that both $\alpha_i$ are $*$-isomorphisms and that $\alpha_1=\alpha_2^{-1}$.
In particular, $\A_1\cong\A_2$.
Consequently, the block structures are equivalent, i.e., $N_1=N_2$, and there is a reordering of indices such that the $j$th block of $\A_1$ is isomorphic to the $j$th block of $\A_2$.
The adjoints of the unitaries implementing $\alpha_1$ then map $\rho_{j|\theta}\up1$ to $\rho_{j|\theta}\up2$ and we get $p_{j|\theta}\up1=p_{j|\theta}\up2$.
    
It remains to be shown that interconverting channels are of the form \cref{eq:intercon_channels}.
From the Koashi-Imoto theorem we get that for all interconverting channels $R,T$ there are channels $E_j:\T(\K_j\up1)\to\T(\K_j\up1)$ such that $(R\circ T)|_{\T(\J_j\up1\ox\K_j\up1)}=\id_{\T(\J_j\up1)}\otimes E_j$ for all $j$.
We apply this to the specific channel $S =\oplus_j\big( (U_j(\placeholder)U_j^*)\ox(\omega_j\up 2\tr[\,\cdot\,])\big)$ which clearly maps $\rho_\theta\up1$ to $\rho_\theta\up2$.
Thus evaluation of $R\circ S$ on states of the form $\sigma=\oplus_j(U_j^*\sigma_jU_j\ox\omega_j\up1)$ shows 
$$ 
    R(\oplus_j(\sigma_j\ox\omega_j\up2))=R\circ S(\sigma)=\oplus_j(U_j^*\sigma_jU_j\ox\omega_j\up1).
$$
As this holds for all states $\sigma_j$, the channel acts as $U_j^*(\placeholder) U_j\otimes R_j$ on the $j$th block.
This then gives $R_j(\omega_j\up2)=\omega_j\up1$ as claimed.
Applying the analogous argument to $T$ finishes the proof.
\qed

\section{Quantum $f$-divergences}\label{sec:fdiv}

Following \cite{hiai}, this section is concerned with the notions of  \emph{standard}, \emph{maximal} and \emph{measured} quantum $f$-divergences.
\cite{hiai} treats the topic in the setting of general von Neumann algebras.
For an account of the development of the notions see there and the references therein.
Here, we are only concerned with classical and quantum states in the sense of the previous sections and will summarize the theory only for these cases.
The Petz quantum R\'enyi divergences are examples for \emph{standard} quantum $f$-divergences, while the maximal R\'enyi divergences can be obtained through \emph{maximal} quantum $f$-divergences (see below).
We will show that neither of the two families of quantum divergences is sufficient in the sense introduced above by demonstrating (well-known) constructions of classical dichotomies with equal divergences as an arbitrary but fixed (quantum) dichotomy.
By \cref{thm:commnogo}, interconvertibility cannot hold in general, and the claimed result follows.
We thus see, in particular, that neither Petz nor maximal R\'enyi divergences are sufficient.

Let $f:(0,\infty)\to \RR$ be a convex  function.
The (classical) $f$-divergence of two probability distributions $p\ll q$ is 
$$Q_f(p,q) = \int_X f\big(\tfrac{dp}{dq}\big)\, dq\in (-\infty, \infty].$$
We assume $f$ to be convex to ensure that the DPI holds for classical channels as well as to guarantee good behavior towards $0$ and $\infty$, see \cite{hiai}, but the definition in principle makes sense also for non-convex functions.
To define the quantum generalizations mentioned above, let $\rho = \sum_{i} p_i\ketbra{\psi_i}{\psi_i}$ be a density operator.
We denote its purification by $\ket{\rho} = \sum_i \sqrt{p_i} \psi_i\otimes \overline{\psi}_i$ in the Hilbert space $\H\ox\overline\H$.
Here, $\overline\H$ denotes the conjugate Hilbert space of $\H$.
We can identify linear operators $A$ on $\H$ with linear operators $\overline A$ on $\overline \H$.
Alternatively, we can identify $\ket\rho\equiv\sqrt{\rho}$ in the Hilbert space of Hilbert-Schmidt operators on $\H$ (and we will mainly employ this picture).

We start with the notion of standard quantum $f$-divergences, see \cite[Sec.~2]{hiai}.
Let $\sigma$ be another density operator with $\supp(\rho)\subseteq\supp(\sigma)$.
The \emph{relative modular operator} $\Delta_{\rho,\sigma}$ acts on a Hilbert-Schmidt operator $A$ by
$$\Delta_{\rho, \sigma}A = \rho A\sigma^{-1}$$
where $\sigma^{-1}$ again denotes the pseudo-inverse of $\sigma$.
On $\H\otimes\overline\H$ we identify $\Delta_{\rho, \sigma}\equiv \rho\otimes\overline\sigma^{-1}$.
$\Delta_{\rho, \sigma}$ is a self-adjoint operator which is unbounded if $\H$ is infinite dimensional.
For $f:(0,\infty)\to\RR$ convex, we define the \emph{standard quantum  $f$-divergence} of $\rho$ and $\sigma$ to be
$$Q_f^{\mathrm{std}}(\rho, \sigma) = \bra\sigma f(\Delta_{\rho, \sigma})\ket\sigma = \tr[\sqrt\sigma f(\Delta_{\rho, \sigma})(\sqrt\sigma)]\in (-\infty, \infty].$$
It is unitarily invariant, and if $f$ is operator convex then $Q_f^{\mathrm{std}}$ satisfies the DPI.
For $g(t) = t\log(t)$ we obtain the Umegaki relative entropy
$$Q_g^{\mathrm{std}}(\rho, \sigma) = D(\rho, \sigma) = \tr[\rho\log(\rho) - \rho\log(\sigma)].$$
Considering the family $f_\alpha:x\mapsto x^\alpha$ we recover \cite[Sec.~3.1]{hiai} the Petz quantum R\'enyi divergence as
$$
\petz\alpha(\rho, \sigma) = \tfrac 1 {\alpha-1} \log Q_{f_\alpha}^{\mathrm{std}}(\rho,\sigma).
$$
$f_\alpha$ is operator convex for $\alpha\in[1,2]$ and operator concave for $\alpha\in[0,1]$ \cite{Chansangiam2015}.
Therefore, $Q^{\mathrm{std}}_{f_\alpha}$ satisfies the DPI on $[1,2]$ while $-Q_{f_\alpha}^{\mathrm{std}}$ satisfies it on $[0,1]$.
By monotonicity of $\log$ and due to $\alpha-1$ being positive on $(1,2]$ and negative on $[0,1)$ we find that $\petz{\alpha}$ satisfies the DPI on $[0,2]$.

To construct the classical dichotomies let $\rho = \sum_i p_i R_i$ and $\sigma = \sum_j q_j S_j$ be the spectral decompositions of $\rho$ and $\sigma$.
Then $f(\Delta_{\rho,\sigma})(A) = \sum_{i,j} f(p_iq_j^{-1}) R_iAS_j$, so
$$Q_f^{\mathrm{std}}(\rho, \sigma) = \sum_{i,j}q_jf(p_iq_j^{-1}) \tr[R_iS_j].$$
Setting $P_{ij} = p_i \tr[R_iS_j]$ and $Q_{ij} = q_j \tr[R_iS_j]$ yields classical states with the same $f$-divergences as $\rho$ and $\sigma$ (for all $f$).
$P$ and $Q$ are known as the \emph{Nussbaum-Sko\l a distributions} \cite{NussbaumSzkola}.

We now turn towards the \emph{maximal quantum $f$-divergences}, see \cite[Sec.~4]{hiai}.
With $\rho$ and $\sigma$ as above and $T_{\rho, \sigma} = \sigma^{-\tfrac 12}\rho\sigma^{-\tfrac 12}\ge 0$ (possibly unbounded) they are given by the formula
$$Q_f^{\max}(\rho, \sigma) = \tr[\sigma f(T_{\rho, \sigma})]\in (-\infty, \infty].$$
It is again unitarily invariant and satisfies the data processing inequality provided that $f$ is operator convex.
The maximal quantum R\'enyi divergence is obtained from this as
$$\geom{\alpha}(\rho, \sigma) = \tfrac 1{\alpha-1}\log Q^{\max}_{f_\alpha}(\rho, \sigma).$$ 
As above for $\petz\alpha$ we find that $\geom\alpha$ satisfies the DPI on $[0,2]$.
It can be computed as
$$Q^{\max}_f(\rho, \sigma) = \min_{\substack{C,(p,q), T}} Q_f(p,q)$$
where the minimum ranges over all (possibly continuous) classical systems $C$, dichotomies with $p,q\in\states(C)$ and channels $T:\states(C)\to\states(S)$ that map $(p,q)$ into $(\rho, \sigma)$.
To show that the minimum is indeed attained, set $\omega = \rho + \sigma$ and consider $T_{\rho,\omega}$ with spectral decomposition $T_{\rho, \omega} = \int_{[0,1]} t\ d P_{\rho,\omega}(t)$.
Note that although $T_{\rho,\sigma}$ can be unbounded in general, $T_{\rho, \omega}$ is always bounded with $0\le T_{\rho, \omega}\le \1$.
This follows since
$S_\omega - T_{\rho, \omega} = T_{\sigma, \omega} \ge 0$ with $S_\omega$ denoting the projection onto $\supp(\omega) = \supp(\sigma)$.
In particular, the spectral measure of $T_{\rho, \omega}$ is indeed supported on $[0,1]$.
We can now define a finite Borel measure $\mu$ on $[0,1]$ by setting
$$\mu(A) = \tr[\omega P_{\rho, \omega}(A)] = \tr[E_{\rho, \omega}(A)] $$
where $E_{\rho, \omega}(A) := \sqrt{\omega}P_{\rho, \omega}(A)\sqrt{\omega}$.
We then define a channel $T:L^1(X,\mu)\to \T(\H)$ by setting
$$T(g) = \int g\ dE_{\rho,\omega}.$$
Let $p(t) = t$ and $q(t) = 1-t$, which are indeed probability densities relative to $\mu$.
It then holds that $T(1) = \omega$, $T(p) = \rho$ and $T(q) = \omega - \rho = \sigma$.
Furthermore, for every operator convex function $f$ it holds that $Q_f^{\max}(\rho, \sigma) = Q_f(p,q)$ \cite{hiai}.
Apart from operator convexity, nothing in this construction depends on the $f$ used to define the quantum $f$-divergence.
This proves the existence of a single pair of classical distributions with the same (maximal)  $f$-divergences as $\rho$ and $\sigma$.

Finally, to define the \emph{measured quantum Rényi divergence} we set, for any convex function,
$$Q^{\textup{meas}}_f(\rho, \sigma) = \sup_\M Q_f(\M(\rho), \M(\sigma))$$
where the supremum runs over all quantum-classical channels (i.e.\ measurements) $\M:\T(\H)\to L^1(X)$ (that is, the classical system is not fixed).
Note, that since the output system is classical, complete positivity is equivalent to positivity so that the data processing inequality for positive maps follows immediately from the definition.
The measured quantum Rényi divergences are now
\begin{equation}
    \meas \alpha (\rho, \sigma) =\frac1{\alpha-1} \log Q^{\textrm{meas}}_{f_\alpha}(\rho, \sigma)= \sup_\M D_\alpha(\M(\rho), \M(\sigma)).
\end{equation}
They satisfy the data processing inequality for all $\alpha\in (0, \infty)$ (since the output system is classical, we do not have the same restriction coming from operator convexity as for the Petz and maximal divergence).
Furthermore, for each $\alpha$ there exists a (projective) measurement realizing the supremum \cite[Ex.~5.16]{hiai} and $\meas\alpha$ is continuous in $\alpha\in(0,\infty)$ \cite{Mosonyi2022SomeCP}. In particular, the limit $\alpha\to 1$ yields the measured relative entropy $\meas{}$, cf.\  \cref{sec:zoo}.

\section{The insufficient zoo of divergences}
\label{sec:zoo}

In this appendix, we collect definitions and some relevant information regarding DPI for the most used families of quantum Rényi divergences. 
We include the respective limits $\alpha\to 1$ as part of the definition of the divergence.
This list is certainly not exhaustive, but we are not aware of more families being known.\footnote{There does exist the family of \emph{iterated mean divergences} introduced in \cite{BrownFawziFawzi} but since it is parametrized by a discrete parameter (that accumulates at $1$), it does not lie in our scope.}
The different properties are summarized in \cref{tab:renyis}.
Recall that we assume $\rho\ll \sigma$.

Every family that satisfies the DPI for positive maps clearly cannot be sufficient for detecting completely positive interconvertibility.
Apart from this, some families can be excluded completely from being sufficient, even for positive interconvertibility, since they allow for classical dichotomies with equal divergences as genuinely quantum ones.

\begingroup
\setlength{\tabcolsep}{10pt} 
\renewcommand{\arraystretch}{2} 
\begin{table}[h!]
    \centering
    \begin{tabular}{c|cccc}
        & cp-DPI & p-DPI & additive & $\alpha\to 1$\\
        \hline\hline%
        geometric/max $\geom\alpha$ & $[0,2]$ & $[0,2]$  & yes& $D^{\textrm{BS}}$\\
        sharp $D_\alpha^\#$ & $[1, \infty)$ & $[1, \infty)$ & subadditive & $D^{\textrm{BS}}$\\
        log-Euclidean  $D^\flat_\alpha$ &$[0,1]$& $[0,1]$ & yes & $D$\\
        Petz $D_\alpha$ & $[0,2]$ & $[0,1]$ & yes & $D$\\
        sandwiched/min  $\sand \alpha$ & $[\tfrac12,\oo)$ & $[\tfrac12,\oo)$ & yes & $D$\\
        $\alpha$-$z$ $D_{\alpha,z}$ & see \eqref{eq:cp-DPI_alpha-z} & $\big\{\substack{z\ge\max(\alpha,1-\alpha)\\0<\alpha<1}\big\}$ & yes & $D$\\
        measured $\meas \alpha$ & $(0,\oo)$ & $(0,\oo)$ & superadditive & $\meas{}$
\end{tabular}
    \caption{Summary of properties of quantum R\'enyi divergences. }
    \label{tab:renyis}
\end{table}
\endgroup

\subsection*{Geometric/Maximal}
The \emph{maximal quantum R\'enyi} divergence is defined as
\begin{equation}\label{eq:maxQRE}
    \geom\alpha(\rho,\sigma) = \frac1{\alpha-1} \log \tr\big[ \sigma \big(\sigma^{-\frac12}\rho\sigma^{-\frac12} \big)^\alpha \big].
\end{equation}
Its limit $\alpha\to 1$ is the Belavkin-Staszewski relative entropy $D^{\mathrm{BS}}(\rho,\sigma)$ \cite{belavkin_cast_1982}.
Additivity of $\geom\alpha$ is immediate from the definition and it satisfies the DPI for positive trace-preserving maps on $\Lambda=[0,2]$ (and is also maximal only on this region) \cite[Thm.~4.4]{hiai}, \cite[Sec.~4.2.3]{tomamichel}, see also \cref{sec:fdiv}.
This holds in finite as well as infinite dimensions.
Since it allows for classical dichotomies with equal divergences, see \cref{sec:fdiv}, this family cannot be sufficient.

\subsection*{Sharp}
This family has been introduced in \cite{Fawzi2021definingquantum}.
We briefly sketch some background to the definition:
For any operator monotone function $f:[0,\infty)\to[0,\infty)$ with $f(1)=1$ we set
\begin{equation}
    \sigma\#_f\rho = \sigma^{1/2}f(\sigma^{-1/2}\rho\sigma^{-1/2})\sigma^{1/2}    
\end{equation}
In the case of $f(x) = x^\alpha$, $\alpha\in(0,1)$, we write $\#_\alpha$.
Note, that in the notation of \cref{sec:fdiv} it holds that $\sigma\#_f\rho = \sigma^{1/2}f(T_{\rho, \sigma})\sigma^{1/2}$ so that
\begin{align}
    Q_f^{\max}(\rho, \sigma) &= \tr [\sigma\#_f\rho].
\end{align}
In particular, $\geom\alpha(\rho, \sigma) = \tfrac 1 {\alpha-1}\log\tr[{\sigma\#_\alpha\rho}]$.
With this one sets for $\alpha\in(1, \infty)$
\begin{equation}\begin{aligned}
    Q_\alpha^\#(\rho, \sigma) =& \inf\{\tr(A)\ |\ A\ge 0,\, \rho\le A\#_{1/\alpha}\sigma\}\\
    D_\alpha^\#(\rho, \sigma) =& \frac 1 {\alpha-1}\log Q_\alpha^\#.
\end{aligned}\end{equation}
Its limit $\alpha\to 1$ is the Belavkin-Staszewski relative entropy \cite{BerghEtAl2021}. It is subadditive and satisfies the DPI for all positive trace-preserving maps \cite[Prop.~3.2]{Fawzi2021definingquantum}.

$D^\sharp_\alpha$ can be seen as an example of a \emph{kringel divergence} introduced in \cite{BerghEtAl2021} that, for a given family $\DD_\alpha$ of quantum Rényi divergences, is defined as
$$\DD^\circ_\alpha(\rho, \sigma) = \inf_{A\ge \rho} \DD_\alpha(A, \sigma)$$
where $\alpha > 1$, while for $\alpha < 1$ the optimization evaluates to $-\infty$.
It is shown there that this family is again a family of Rényi divergences.
Being defined as an infimum, it is immediate that whenever $\DD_\alpha$ satisfies the DPI for (completely) positive maps, so does $\DD_\alpha^\circ$. 
To our best knowledge, these families have so far only been considered in finite dimension.

\subsection*{log-Euclidean}

This family is defined as
\begin{equation}
    D^\flat_\alpha(\rho, \sigma) = \frac 1{\alpha - 1} \log\tr\  e^{\alpha\log(\rho) + (1-\alpha)\log(\sigma)}.
\end{equation}
It essentially appears already in \cite{HIAI1993153} where it is related to $\geom\alpha$ although not directly in the context of quantum divergences.
See also \cite{MosonyiOgawa2017, mosonyi2023geometric}.
Its limit $\alpha\to 1$ is the quantum relative entropy $D$.
It is furthermore the $z\to\infty$ limit of the $\alpha$-$z$ family \cite{audenaert_-z-renyi_2015}
\begin{equation}
    D_\alpha^\flat(\rho,\sigma) = D_{\alpha, \infty} (\rho,\sigma)= \lim_{z\to\infty} D_{\alpha,z}(\rho,\sigma).
\end{equation}
It is thus immediate from the analogous statement for $D_{\alpha, z}$ that $D^\flat_\alpha$ is additive and satisfies the DPI for positive trace-preserving maps if $\alpha\in [0,1]$ (also in infinite dimensions).
For $\alpha > 1$ the limiting argument cannot be done and, in fact, $D^\flat$ doesn't satisfy the DPI in this region \cite[Lem.~3.17]{MosonyiOgawa2017}.

It is furthermore a special case of the recently introduced \emph{barycentric Rényi divergences} \cite{mosonyi2023geometric}.
Given two quantum divergences $D\up0$ and $D\up 1$ it is defined by setting
\begin{equation}
    Q^{\text{bary}}_\alpha (\rho,\sigma) = \sup_{0\le\tau\ll \rho} \tr\ \tau - \alpha D\up 0(\tau,\rho) - (1-\alpha)D\up 1(\tau,\sigma)
\end{equation}
(recall that by assumption $\rho\ll\sigma$) and then defining
\begin{equation}
    D^{\text{bary}}_\alpha (\rho,\sigma) = \frac{1}{\alpha-1}\log Q^{\text{bary}}_\alpha(\rho,\sigma).
\end{equation}
For $\alpha\in[0,1]$ they inherit validity of the DPI from the defining families while for $\alpha>1$ they don't necessarily define proper divergences if $D\up 0 \neq D\up 1$ since then $D^{\text{bary}}_\alpha(\rho,\rho) >0$ is possible.
$D^\flat$ is obtained by choosing $D\up 0 = D\up 1 = D$ to be the quantum relative entropy.

\subsection*{Petz}
The \emph{Petz quantum Rényi divergence} is defined as 
\begin{equation}
    \petz{\alpha}(\rho, \sigma) = \frac 1 {\alpha-1} \log\tr[\rho^\alpha\sigma^{1-\alpha}],
\end{equation}
where $\sigma^{1-\alpha}$ denotes the \emph{pseudo-inverse} of $\sigma$. The limit $\alpha\to 1$ is the quantum relative entropy $D(\rho,\sigma)$.
Additivity of $\petz{\alpha}$ is immediate from the definition, and it satisfies the DPI for completely positive trace-preserving maps on $\Lambda=[0,2]$ \cite[Sec.~4.4]{tomamichel}, see also \cref{sec:fdiv}.
For $\alpha\in[0,1]$, it is furthermore known to satisfy DPI for positive trace-preserving maps (see $\alpha$-$z$ divergence below).
Both is true in infinite dimensions.
Since they allow for classical dichotomies with equal divergences, see \cref{sec:fdiv}, this family cannot be sufficient.

\subsection*{Sandwiched/Minimal}
The \emph{minimal quantum R\'enyi divergence} (or \emph{sandwiched R\'enyi divergence}) is given by
\begin{equation}\label{eq:minQRE}
    \sand \alpha(\rho,\sigma) = \frac1{\alpha-1}\log \tr\big[ \big(\sigma^{\frac{1-\alpha}{2\alpha} }\rho \sigma^{\frac{1-\alpha}{2\alpha}}\big)^\alpha\big] =: \frac{1}{\alpha-1}\log Q^{\min}_\alpha(\rho,\sigma).
\end{equation}
It satisfies the DPI for positive trace-preserving maps, also in infinite dimension, on $\Lambda = [\tfrac12,\oo)$ \cite[Sec.~4.3]{tomamichel}, \cite{berta_variational_2017} and \cite[Thm.~3.16]{hiai}.
The limit $\alpha\to 1$ is the quantum relative entropy.

\subsection*{$\alpha$-$z$}
A class of R\'enyi divergences that generalizes both the minimal quantum R\'enyi divergence as well as the Petz quantum R\'enyi divergence is the two-parameter \emph{$\alpha$-$z$ quantum R\'enyi divergence} \cite{audenaert_-z-renyi_2015}, defined as
\begin{align}
    D_{\alpha,z}(\rho,\sigma) = \frac{1}{\alpha-1}\log \tr\big[ \big(\sigma^{\frac{1-\alpha}{2z} }\rho^{\frac{\alpha}{z}} \sigma^{\frac{1-\alpha}{2z}}\big)^z\big].
\end{align}
For $z=\alpha$, it coincides with the minimal quantum R\'enyi divergence, while for $z=1$, it coincides with the Petz quantum R\'enyi divergence and thus satisfies DPI in these cases for the same range of $\alpha$.
Its limit $\alpha\to 1$ is the quantum relative entropy \cite{Mosonyi2022SomeCP}.
DPI for completely positive trace-preserving maps in finite dimension is known to hold \emph{if and only if} one of the following cases applies \cite{Zhang2020EqualityCO}:
\begin{equation}\label{eq:cp-DPI_alpha-z}
    \begin{cases}
        z\ge\max(\alpha,1-\alpha),&\text{if\ } \alpha\in(0,1),\\ \tfrac \alpha 2 \le z \le \alpha &\text{if\ }\alpha\in(1,2], \\ \alpha - 1 \le z\le \alpha,&\text{if\ } \alpha\ge2.
    \end{cases}
\end{equation}
If both $\alpha\in (0,1)$ and $z\ge \max(\alpha, 1-\alpha)$ are satisfied it is furthermore known to satisfy DPI also for positive trace-preserving maps even in infinite dimensions \cite{kato2023alphazrenyi}.
It is clearly additive for all $\alpha$ and $z$, see also \cite{Mosonyi2023alphaz, kato2023alphazrenyi}.
Since the minimal quantum R\'enyi divergence does not admit a classical simulation, neither does the $\alpha$-$z$ quantum R\'enyi divergence for general values of $(\alpha,z)$. 
Furthermore, since the Araki-Lieb-Thirring inequality (see for example \cite[Thm.~4.3]{sutter_approximate_2018}) implies that $D_{\alpha,z}(\rho,\sigma)$ is non-constant in $z$ if and only if $\rho$ and $\sigma$ commute, there cannot exist a classical simulation working for a fixed value of $\alpha$ and more than one value of $z$.

\subsection*{Measured}
The \emph{measured quantum Rényi divergence} is given as
\begin{equation}
    \meas \alpha (\rho, \sigma) = \sup_\M D_\alpha(\M(\rho), \M(\sigma))
\end{equation}
where the supremum runs over all measurements $\M$, i.e., (completely) positive maps with abelian range.
Here, $Q_f$ denotes the classical $f$-divergence, see \cref{sec:fdiv}.
Given any positive, trace-preserving map $T$, the map $\M\circ T$ is a measurement, too. Hence $\meas\alpha$ satisfies DPI for positive trace-preserving maps for each value of $\alpha$ in finite as well as infinite dimension.
From the DPI we get that $\meas\alpha\le\sand\alpha$ for $\alpha\geq 1/2$. The existence of a pair of states for which the inequality is strict together with $\sand\alpha$ being the minimal additive divergence satisfying DPI \cite{tomamichel} shows that $\meas\alpha$ is not additive. However, clearly $\meas\alpha$ is \emph{super-additive}:
\begin{align}
    \meas\alpha(\rho_1\ox\rho_2,\sigma_1\ox\sigma_2) &= \sup_{\M} D_\alpha(\M(\rho_1,\ox\rho_2), \M(\sigma_1\ox\sigma_2)\\ 
    &\geq \sup_{\M_1,\M_2} D_\alpha(\M_1(\rho_1)\ox\M_2(\rho_2),\M_1(\sigma_1)\ox\M_2(\sigma_2))\\ 
    &= \meas\alpha(\rho_1,\sigma_1) + \meas\alpha(\rho_2,\sigma_2).
\end{align}
As far as we are aware, the measured R\'enyi divergence does not allow for a classical dichotomy with equal R\'enyi divergence for all $\alpha$, since the optimal measurement $\M$  depends on $\alpha$ in general.
For $\alpha=1$ we obtain the \emph{measured relative entropy} $\meas{}(\rho,\sigma)$ defined as
\begin{align}
\meas{}(\rho,\sigma) = \sup_{\M} D(\M(\rho),\M(\sigma)) = \sup_{\M}\lim_{\alpha\to 1}D_\alpha(\M(\rho)),\M(\sigma)),
\end{align}
which coincides with the quantum relative entropy if and only if $[\rho,\sigma]=0$ \cite[Prop. 2.36]{sutter_approximate_2018}.

\printbibliography
\end{document}